\numberwithin{figure}{section}
\newtheorem{theorem}{Theorem}
\newtheorem{lemma}{Lemma}
\newtheorem{corollary}{Corollary}
\newtheorem{proposition}{Proposition}
\newtheorem{example}{Example}
\newtheorem{remark}{Remark}
\newtheorem{definition}{Definition}
\begin{document}

\title{Quantum Constacyclic BCH Codes over Qudits: A Spectral-Domain Approach}

\author{Shikha Patel}
\affiliation{Department of Electronic Systems Engineering, Indian Institute of Science, Bengaluru, India, 560012}
\email{shikhapatel@iisc.ac.in}
\author{Shayan Srinivasa Garani}
\email{shayangs@iisc.ac.in}

\maketitle
	\begin{abstract}
		We characterize constacyclic codes in the spectral domain using the finite field Fourier transform (FFFT) and propose a reduced complexity method for the spectral-domain decoder.  Further, we also consider repeated-root constacyclic codes and characterize them in terms of symmetric and asymmetric $q$-cyclotomic cosets. Using zero sets of classical self-orthogonal and dual-containing codes, we derive quantum error correcting codes (QECCs) for both constacyclic Bose-Chaudhuri-Hocquenghem (BCH) codes and repeated-root constacyclic codes. We provide some examples of QECCs derived from repeated-root constacyclic codes and show that constacyclic BCH codes are more efficient than repeated-root constacyclic codes. Finally, quantum encoders and decoders are also proposed in the transform domain for Calderbank-Shor-Steane CSS-based quantum codes. Since constacyclic codes are a generalization of cyclic codes with better minimum distance than cyclic codes with the same code parameters, the proposed results are practically useful.
	\end{abstract}
\section{Introduction}
 One of the powerful generalizations of linear cyclic codes over finite fields is constacyclic codes \cite{peterson1972error, chen2012constacyclic} with a rich algebraic structure, useful towards practice. One can specify $\lambda$-constacyclic codes of length $n$ over a finite field $\mathbb{F}_q$ by ideals of the polynomial ring $\frac{\mathbb{F}_q[x]}{\langle x^n-\lambda\rangle}$. Moreover, for the same code parameters 
 one can find better lower bounds for the minimum distance of constacyclic codes over their cyclic counterparts, motivating the investigation of constacyclic codes in this paper. Many prior works in literature are based on simple-root codes where the code length is relatively prime to the characteristic of $\mathbb{F}_{p^m}$. However, when the code length $n$ is divisible by the characteristic $p$ of the field, they are referred to repeated-root codes. For more details on repeated-root codes, we refer the reader to (\cite{chen2014repeated,dinh2014repeated,castagnoli1991repeated}). In this paper, we investigate  both simple root and repeated-root constacyclic codes for quantum error correction. 
 
Fourier transforms are widely used in many engineering applications, such as in signal processing and communications. Spectral techniques based on Fourier transforms over finite fields have powerful applications in coding theory and practice. Cyclic codes over finite fields in the transform domain were originally introduced by Blahut \cite{blahut1979algebraic}, \cite{blahut1983theory}. In \cite{blahut1979transform}, Blahut considered cyclic BCH codes and their decoding via the syndrome computations in the frequency domain. In this sequel, many researchers considered special linear codes like abelian codes, quasi-cyclic codes, etc. in the transform domain \cite{rajan1992transform,rajan1994generalized,rajan1994transform,massey1998discrete,dey2003dft}. Unlike cyclic codes, to the best of our knowledge, no one has hitherto investigated spectral decoding methods for constacyclic codes, which is the generalization of Blahut's work. Transform domain coding has advantages for building efficient encoding and decoding circuits since parity positions are based on the spectral nulls of the codes  \cite{mondal2021efficient,mondal2020efficient}. Motivated by these considerations, we extend the transform domain approach developed for cyclic codes to the constacyclic case, with an eye to get better code properties and efficient circuits towards practice. 

With the evolution of quantum technologies for sensing, computing, communications and security, the role of quantum error correction is indispensable in such systems. Any computing device dealing with fragile state like single photons, etc. must overcome decoherence due to the coupling of the state with the environment. The stabilizer framework by Gottesman \cite{gottesman1997stabilizer} and the CSS sub-frameworks \cite{calderbank1996good,steane1996error} based on the dual-containing property play an important role for deriving quantum codes from well-known classical codes with good properties. In the recent years, many researchers have focussed on QECCs over qudits \cite{galindo2019entanglement,nadkarni2021non,ashikhmin2001nonbinary,luo2017non,ketkar2006nonbinary}. Compared to qubits, fewer qudits are required to store the same amount of information, leading to overall  efficiencies in computations and in overcoming noise within error correction circuits over qudit-based systems. Similar to classical codes used in almost all transmission and data storage channels to overcome random errors, burst erasures and defects, quantum codes will analogously be useful for realizing reliable quantum transmission and memory systems. 

 

Among the class of quantum algebraic cyclic codes, quantum BCH codes and Reed-Solomon codes are particularly interesting due to their burst erasure correcting property and guaranteed error correction performance. The quantum BCH code was originally constructed from classical cyclic BCH codes by Grassl \cite{grassl1999quantum}. Quantum BCH codes derived from cyclic BCH codes in the time domain has been well-investigated by many authors \cite{aly2007quantum,steane1996simple,grassl2000cyclic}. Since the spectral null patterns over the coded states \cite{ying2006quantum} are useful for determining the bounds on the minimum distance and  also facilitate better encoding/decoding, we explore quantum constacyclic codes in the spectral domain. 

Our contributions in this paper are as follows: 
\begin{enumerate}
    \item We propose a novel spectral decoding method for systematic constacyclic Bose–Chaudhuri–\\Hocquenghem (BCH) codes with reduced computational complexity by a linear factor over well-known algorithms, such as the Peterson–Gorenstein–Zierler (PGZ) and Berlekamp-Massey (BM) algorithm. We use this result for the syndrome computation unit while decoding quantum constacyclic codes. 
    \item We characterize the spectral properties of constacyclic codes and design quantum maximum-distance-separable (MDS)\footnote{An $[[n,k,d]]_q$ quantum MDS code attaining the quantum Singleton bound (QSB) $2d\leq n-k+2$, with equality \cite{calderbank1998quantum}.}  constacyclic codes with better minimum distance than the prior work on spectral-based quantum cyclic codes in~\cite{chowdhury2009quantum}.
    \item We construct some examples to show that BCH constacyclic codes are more efficient than repeated-root constacyclic codes.
    \item We provide an algorithmic framework to
construct the encoding operator for quantum constacyclic BCH codes, completing the entire code design.
\end{enumerate}

The rest of the article is organized as follows: In Section \ref{sec2}, we derive some important algebraic properties of constacyclic codes in the transform domain using finite field Fourier transforms. We provide the construction of classical constacyclic BCH codes that are important for the deriving constacyclic QECCs. In Section \ref{sec 3}, we propose a spectral\footnote{The reader must note that the words spectral domain and transform domain are used interchangeably.} decoding algorithm for these codes with reduced computational complexity than well-known decoding algorithms, such as the Berlekamp-Massey and Peterson–Gorenstein–Zierler (PGZ) algorithms. We also construct some examples to validate the derived results. In Section \ref{sec4}, we construct QECCs from both self-orthogonal and dual-containing constacyclic BCH codes in the transform domain. Further, we consider repeated-root constacyclic codes and use them for QECCs. We also show that constacyclic BCH codes are more efficient than repeated-root constacyclic codes. In Section \ref{secA}, we provide an encoding circuit architecture for quantum constacyclic BCH codes. In Section \ref{decoding}, we provide the error correction procedure, error deduction and recovery procedures for quantum constacyclic BCH codes along with the syndrome computation circuits, followed by conclusions in Section \ref{conclu}. 

 \section{ Constacyclic Codes in the Transform Domain}\label{sec2}
 \begin{definition}
Let $\mathbb{F}_q$ be a finite field of order $q=p^s$ where $p$ is a prime and $s$ is a positive integer. Let $\lambda$ be a unit in $\mathbb{F}_q$. Then a linear code $C$ of length $n$ over $\mathbb{F}_q$ is said to be a $\lambda$-constacyclic (or simply constacyclic) code if $\tau_{\lambda}(c):=(\lambda c_{n-1}, c_0,\dots, c_{n-2})\in C$ whenever $c=(c_0,c_1,\dots, c_{n-1})\in C$. Note that $\tau_{\lambda}$ is known as a $\lambda$-constacyclic shift operator. When $\lambda=1$, it is a \textit{cyclic} code, and for $\lambda=-1$, it is a \textit{negacyclic} code.
\end{definition}
Any $\lambda$-constacyclic code $C$ of length $n $ over $\mathbb{F}_q$ is identified with an ideal of the quotient ring $\frac{\mathbb{F}_q[x]}{\langle x^n-\lambda\rangle}$ where $\langle  x^n-\lambda \rangle$ denotes the ideal generated by  $x^n-\lambda$ of the polynomial ring $\mathbb{F}_q[x]$. Also, $C$ is generated by a factor polynomial of $x^n-\lambda$, called the polynomial generator of the $\lambda$-constacyclic code $C$. In order to obtain all the $\lambda$-constacyclic codes of length $n$ over $\mathbb{F}_q$, we need to determine all the irreducible factors of $x^n-\lambda$ over $\mathbb{F}_q$.  There is a criterion on irreducible polynomials over  $\mathbb{F}_{q}$, which was given by Serret in 1866 (see \cite{wan2003lectures,lidl1997finite}).
\begin{theorem}[\cite{wan2003lectures}, Theorem 10.7]
     Let $n \geq 2$. For any $\lambda\in \mathbb{F}_{q}^*=\mathbb{F}_{q}\setminus \{0\}$ with $ord(\lambda) = \kappa$, the polynomial $x^n-\lambda$ is irreducible
over $\mathbb{F}_q$ if and only if both the following two conditions are satisfied:
\begin{enumerate}
    \item  Every prime divisor of $n$ divides $\kappa$, but does not divide $\frac{q-1}{\kappa}$;
    \item If $4|n$, then $4|(q-1)$.
\end{enumerate}
\end{theorem}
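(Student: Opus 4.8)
The plan is to pass to a root and reduce the claim to a classical criterion for irreducibility of binomials over an arbitrary field. Fix a root $\alpha$ of $x^n-\lambda$ in an algebraic closure $\overline{\mathbb{F}_q}$; then $x^n-\lambda$ is irreducible over $\mathbb{F}_q$ exactly when it is the minimal polynomial of $\alpha$, i.e.\ when $[\mathbb{F}_q(\alpha):\mathbb{F}_q]=n$. I would then invoke the Vahlen--Capelli theorem: for a field $K$, a unit $a\in K^{*}$, and an integer $n\geq 2$, the binomial $x^n-a$ is irreducible over $K$ if and only if (a) $a\notin K^{*r}$ for every prime $r$ dividing $n$, and (b) if $4\mid n$ then $a\notin -4\,K^{*4}$. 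The rest of the proof is then a translation of (a) and (b) into the arithmetic of the cyclic group $\mathbb{F}_q^{*}$ of order $q-1$, in which $\lambda$ has order $\kappa$.

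For (a), I would show that for a prime $r\mid n$ the condition $\lambda\notin\mathbb{F}_q^{*r}$ is equivalent to ``$r\mid\kappa$ and $r\nmid\tfrac{q-1}{\kappa}$''. If $r\nmid q-1$, the $r$-th power map is a bijection of $\mathbb{F}_q^{*}$, so $\lambda$ is always an $r$-th power; this case is also excluded on the right, since $r\mid\kappa$ would force $r\mid q-1$. If $r\mid q-1$, then $\lambda\in\mathbb{F}_q^{*r}$ iff $\lambda^{(q-1)/r}=1$ iff $\kappa\mid\tfrac{q-1}{r}$, and writing $q-1=\kappa m$ one checks, using that $r$ is prime, that $\kappa\mid\tfrac{q-1}{r}$ holds precisely when $r\mid m$; hence $\lambda\notin\mathbb{F}_q^{*r}$ iff $r\nmid m=\tfrac{q-1}{\kappa}$, and since $r\mid\kappa m$ with $r\nmid m$ this additionally forces $r\mid\kappa$. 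Ranging over all primes $r\mid n$ reproduces condition~1.

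For (b), I would assume condition~1, which, when $n$ is even, already yields $2\mid\kappa$ and $2\nmid\tfrac{q-1}{\kappa}$, so that $q$ is odd and $\lambda$ is a non-square in $\mathbb{F}_q$. It then remains to prove $\lambda\notin -4\,\mathbb{F}_q^{*4}$ iff $4\mid q-1$. If $q\equiv 3\pmod 4$, then $\gcd(4,q-1)=2$, so $\mathbb{F}_q^{*4}=\mathbb{F}_q^{*2}$; since $4$ is a square while $-1$ is a non-square, $-4\,\mathbb{F}_q^{*4}$ is exactly the set of non-squares, which contains $\lambda$, so $x^n-\lambda$ is reducible and $4\mid q-1$ is necessary. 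Conversely, for $4\mid q-1$ write $q-1=2^{a}u$ with $u$ odd and $a\geq 2$, and $\lambda=g^{m}$ for a generator $g$ of $\mathbb{F}_q^{*}$ with $m=\tfrac{q-1}{\kappa}$ odd; comparing exponents modulo $\gcd(4,q-1)=4$ shows $\lambda\notin -4\,\mathbb{F}_q^{*4}$, so (b) holds automatically.

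I expect step (b) to be the principal obstacle: it is the one point where the finite-field specialization of Vahlen--Capelli is genuinely more delicate than a counting or bijectivity argument, since one must locate precisely when the ``$-4$ twist'' falls inside $\mathbb{F}_q^{*4}$ and reconcile this with the parity constraints already forced by condition~1. Everything else is routine bookkeeping in $\mathbb{F}_q^{*}$, together with the standard fact that an irreducible binomial is the minimal polynomial of each of its roots.
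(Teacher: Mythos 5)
Your argument is correct, but it cannot be compared line-by-line with the paper, because the paper offers no proof at all: the statement is imported verbatim as Theorem~10.7 of Wan's \emph{Lectures on Finite Fields and Galois Rings} (Serret's 1866 criterion) and used as a black box. The textbook route behind that citation is a direct finite-field argument: one adjoins a root $\alpha$, computes its multiplicative order (generically $n\kappa$), and characterizes $[\mathbb{F}_q(\alpha):\mathbb{F}_q]$ as the order of $q$ modulo that integer. You instead reduce to the Vahlen--Capelli criterion over an arbitrary field and then translate its two clauses into the arithmetic of the cyclic group $\mathbb{F}_q^{*}$; this buys a cleaner conceptual split (the ``$-4$ twist'' is isolated as the only genuinely delicate point, exactly as you flag), at the cost of importing Vahlen--Capelli itself, which is not appreciably easier than the statement being proved. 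Your translation of clause (a) is right, including the degenerate case $r\nmid q-1$, and your handling of clause (b) is right in both directions: for $q\equiv 3\pmod 4$ the set $-4\,\mathbb{F}_q^{*4}=-4\,\mathbb{F}_q^{*2}$ is precisely the non-squares and hence contains $\lambda$, while for $4\mid q-1$ both $-1$ and $4$ are squares, so $-4\,\mathbb{F}_q^{*4}\subseteq\mathbb{F}_q^{*2}$ misses the non-square $\lambda$. The one imprecision is the phrase ``$\lambda=g^{m}$ with $m=\tfrac{q-1}{\kappa}$'': an element of order $\kappa$ is $g^{j}$ with $\gcd(j,q-1)=\tfrac{q-1}{\kappa}$, not necessarily $g^{(q-1)/\kappa}$ itself. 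This is harmless here because all you actually use is that the exponent $j$ is odd, i.e.\ that $\lambda$ is a non-square, which you had already extracted from condition~1 with $r=2$; but the sentence should be restated in terms of $\gcd(j,q-1)$ to be literally correct.
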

The key idea to derive the transform domain properties of constacyclic codes lies in factorization of $x^n-\lambda$. To do this, we consider a positive integer $n$ such that $n|(q^m-1)$, where $m$ is a  positive integer. Let $\Phi: \frac{\mathbb{F}_{q^m}[x]}{\langle x^{n}-1\rangle} \longrightarrow \frac{\mathbb{F}_{q^m}[x]}{\langle x^{n}-\lambda\rangle}$ be a map defined by
$\Phi(f(x)+\langle x^n-1\rangle )= f(\beta^{-1}x)+\langle x^n-\lambda \rangle$, where $\beta\in \mathbb{F}_{q}^*$ and  $\lambda=\beta^n$. Let $f(x), g(x)\in \frac{\mathbb{F}_{q^m}[x]}{\langle x^{n}-\lambda\rangle}$. Then,
\begin{align*}
\Phi((f(x)+\langle x^n-1\rangle)+ (g(x)+\langle x^n-1\rangle))
   = &\Phi((f(x)+g(x))+\langle x^n-1\rangle)\\
   =&(f(\beta^{-1}x)+g(\beta^{-1}x))+\langle x^n-\lambda\rangle\\
 = &f(\beta^{-1}x)+\langle x^n-\lambda\rangle+ g(\beta^{-1}x)+\langle x^n-\lambda\rangle\\
  =&\Phi(f(x)+\langle x^n-1\rangle)+\Phi(g(x)+\langle x^n-1\rangle )\\
\text{and }~~~   \Phi((f(x)+\langle x^n-1\rangle)(g(x)+\langle x^n-1\rangle))= &\Phi((f(x)g(x))+\langle x^n-1\rangle)
   \\=&(f(\beta^{-1}x)g(\beta^{-1}x))+\langle x^n-\lambda\rangle\\
  = &(f(\beta^{-1}x)+\langle x^n-\lambda\rangle) (g(\beta^{-1}x)+\langle x^n-\lambda\rangle)\\
  =&\Phi(f(x)+\langle x^n-1\rangle )\Phi(g(x)+\langle x^n-1\rangle ).
\end{align*}
Next, we consider 
\begin{align*}
    \ker(\Phi)&=\{f(x)\in \frac{\mathbb{F}_{q^m}[x]}{\langle x^{n}-\lambda\rangle} ~|~ \Phi(f(x)+\langle x^n-1\rangle)=0\}\\
    &=\{f(x)\in \frac{\mathbb{F}_{q^m}[x]}{\langle x^{n}-\lambda\rangle }~|~ \Phi(f(x)+\langle x^n-1\rangle)=x^n-\lambda\}\\
    &=\{f(x)\in \frac{\mathbb{F}_{q^m}[x]}{\langle x^{n}-\lambda\rangle} ~|~ f(\beta^{-1}x)+\langle x^n-\lambda\rangle=x^n-\lambda\}\\
    &=\{f(x)\in \frac{\mathbb{F}_{q^m}[x]}{\langle x^{n}-\lambda\rangle} ~|~ f(\beta^{-1}x)\in \langle x^n-\lambda\rangle\}
    =\{0\}.
\end{align*} Thus, $\Phi$ is one-one and also onto. Therefore, $\Phi$ is a ring  isomorphism, implying
$$\frac{\mathbb{F}_{q^m}[x]}{\langle x^{n}-1\rangle } \cong \frac{\mathbb{F}_{q^m}[x]}{\langle x^{n}-\lambda\rangle}.$$ In this sequel, first we will factorize $x^n-1$. Let $\xi$ be a primitive $n^{\mathrm{th}}$ root of unity in $\mathbb{F}_{q^m}$. Then $\xi^n=1$ and for any $1\leq i\leq n-1$ we have $(\xi^i)^n= (\xi^n)^i=1$. Thus, $\xi^i$ is a root of $x^n-1$, and $x^n-1=(x-1)(x-\xi)(x-\xi^2)\cdots(x-\xi^{n-1}).$ Therefore, by the ring isomorphism we express
\begin{equation}\label{eq1}
    x^n-\lambda=\prod_{j=1}^{n}(x-\beta\xi^j).
\end{equation}
Next, we consider $\mathbb{F}_{q^m}$-algebra isomorphism $\Phi': \frac{\mathbb{F}_{q^m}[x] }{\langle x^{n}-\lambda \rangle} \longrightarrow \prod_{j=0}^{n-1}\frac{\mathbb{F}_{q^m}[x]}{\langle x-\beta\xi^j\rangle}$ defined by 
\begin{equation}\label{eq2}
    \Phi'\bigg(\sum_{i=0}^{n-1}a_ix^i\bigg)\triangleq\bigg(\sum_{i=0}^{n-1}a_i\beta^i,\sum_{i=0}^{n-1}a_i(\beta\xi)^i,\dots,\sum_{i=0}^{n-1}a_i(\beta\xi^{n-1})^i\bigg).
\end{equation}
Now, we can define an isomorphism to relate the combinatorial and algebraic structures of constacyclic codes. For this, we consider $\Phi'': \mathbb{F}_{q^m}^n \longrightarrow  \frac{\mathbb{F}_{q^m} [x]}{\langle x^{n}-\lambda \rangle}$ defined by 
\begin{equation}\label{eq3}
    \Phi''(a_0,a_1,\dots,a_{n-1})=\sum_{i=0}^{n-1}a_ix^i.
\end{equation}
From equations (\ref{eq2}) and
 (\ref{eq3}), we can define 
$A_j=\sum_{i=0}^{n-1}a_i(\beta\xi^j)^i ~~~\text{for }~~~j=0,1,\dots,n-1.$

To extend Blahut's approach of finite field Fourier transforms for cyclic codes towards the constacyclic case, we begin with some definitions.
\begin{definition}
    Let $\boldsymbol{a}=(a_0,a_1,\dots,a_{n-1})$ be a vector over $\mathbb{F}_q$ and $n|(q^m-1)$. Let $\xi$ be an element of $\mathbb{F}_{q^m}$ of order $n$ and $\lambda=\beta^n$. The finite field Fourier transform (FFFT) of the vector $\boldsymbol{a}$ is the vector over $\mathbb{F}_{q^m}$, $\boldsymbol{A}=\{A_j|j=0,1,\dots,n-1\}$, defined by 
    $A_j=\sum_{i=0}^{n-1}a_i(\beta\xi^j)^i ~~~\text{for }~~~j=0,1,\dots,n-1.$
\end{definition}
We next validate the inverse of FFFT for constacyclic codes.  
\begin{lemma}\label{lem1}
    Let $\xi$ be an element of $\mathbb{F}_{q^m}$ of order $n$, $n>1$ and $\lambda=\beta^n$. A $\lambda$-constacyclic code vector over $\mathbb{F}_q$ and its corresponding spectrum are related by: 
$A_j=\sum_{i=0}^{n-1}a_i(\beta\xi^j)^i,$ 
$a_i=\frac{1}{n\beta^i}\sum_{j=0}^{n-1}\xi^{-ij}A_j.$
\end{lemma}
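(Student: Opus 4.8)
The plan is to verify the inversion formula by direct substitution, which is the natural approach for a finite-field Fourier transform identity of this type. First I would take the proposed expression for $a_i$, namely $a_i=\frac{1}{n\beta^i}\sum_{j=0}^{n-1}\xi^{-ij}A_j$, and substitute the definition $A_j=\sum_{\ell=0}^{n-1}a_\ell(\beta\xi^j)^\ell$ into it, renaming the inner summation index to $\ell$ to avoid collision with the target index $i$. This yields a double sum $\frac{1}{n\beta^i}\sum_{j=0}^{n-1}\sum_{\ell=0}^{n-1}a_\ell\beta^\ell\xi^{j\ell}\xi^{-ij}$, and after interchanging the order of summation we get $\frac{1}{n\beta^i}\sum_{\ell=0}^{n-1}a_\ell\beta^\ell\sum_{j=0}^{n-1}\xi^{j(\ell-i)}$.

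The key step is then the standard orthogonality relation for the primitive $n$th root of unity $\xi$: since $\xi$ has order $n$ in $\mathbb{F}_{q^m}$, the inner geometric sum $\sum_{j=0}^{n-1}\xi^{j(\ell-i)}$ equals $n$ (reduced modulo the characteristic $p$) when $\ell\equiv i \pmod n$, and equals $0$ otherwise, because $\xi^{\ell-i}$ is then a root of $\frac{x^n-1}{x-1}=x^{n-1}+\cdots+1$. I would note that $n$ is invertible in $\mathbb{F}_{q^m}$ precisely because $n\mid(q^m-1)$ forces $\gcd(n,p)=1$, so dividing by $n$ is legitimate. With the orthogonality relation in hand, only the $\ell=i$ term survives, giving $\frac{1}{n\beta^i}\cdot a_i\beta^i\cdot n = a_i$, as required. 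The forward direction (that $A_j$ as defined is indeed the FFFT) is immediate from the definition, so nothing more is needed there.

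The main obstacle, such as it is, is bookkeeping rather than conceptual: one must be careful that the $\beta^\ell$ and $\beta^{-i}$ factors combine correctly only after the orthogonality relation has collapsed $\ell$ to $i$ — the constacyclic twist $\beta$ does not interfere with the root-of-unity orthogonality because it factors out of the $j$-sum entirely. It is also worth recording explicitly that this establishes $\Phi'$ from equation~(\ref{eq2}) is a bijection, consistent with the isomorphism already derived, so the FFFT is a genuine invertible transform on $\mathbb{F}_{q^m}^n$. I do not anticipate any delicate point beyond ensuring the index ranges and the modular reduction of $n$ are stated cleanly.
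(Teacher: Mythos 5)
Your proposal is correct and follows essentially the same route as the paper: substitute the forward transform into the proposed inverse, interchange the sums, and collapse the inner geometric sum via the orthogonality of the $n^{\mathrm{th}}$ roots of unity, noting that $n$ is invertible since $\gcd(n,p)=1$. The paper merely justifies the orthogonality relation through the factorization $x^n-\beta^n=(x-\beta)\sum_{i=0}^{n-1}\beta^i x^{n-1-i}$ rather than through $\tfrac{x^n-1}{x-1}$, which is the same fact in a thin disguise, and your observation that the constacyclic twist $\beta$ factors out of the $j$-sum is exactly what makes the two presentations coincide.
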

\begin{proof}
    From equation $(\ref{eq1})$, we rewrite  
    \begin{align*}       
       x^n-\beta^n=&(x-\beta)(x^{n-1}+\beta x^{n-2}+\cdots+\beta^{n-2}x+\beta^{n-1})\\
       =&(x-\beta)\bigg(\sum_{i=0}^{n-1}\beta^ix^{n-1-i}\bigg).
    \end{align*} From the definition of $\xi$, $\beta\xi^r$ is a root of the above equation for all $r=0,1,...,n-1$. Hence, $\beta\xi^r$ is a root of the last term for all $r \neq 0 \mod(n)$. This is also written as 
    $\sum_{i=0}^{n-1}\beta^i(\beta\xi^{r})^{n-1-i}=\sum_{i=0}^{n-1}(\beta\xi^r)^{n-1}\xi^{-ri}=0,$ $r \neq 0 \mod(n)$. If $r=0$, then $\sum_{i=0}^{n-1}(\beta\xi^r)^{n-1}\xi^{-ri}=n\beta^{n-1}\mod(p)$. This is not zero if $n$ is not a multiple of the field characteristic $p$. Next, $\sum_{j=0}^{n-1}(\beta\xi^j)^{-i}\sum_{k=0}^{n-1}(\beta\xi^{j})^ka_k\\=\sum_{k=0}^{n-1}a_k\sum_{j=0}^{n-1}\beta^{k-i}\xi^{(k-i)j}=n\mod(p)a_i$.
Since $n$ is not a multiple of $p$, $n\mod(p)\neq 0$, proving the lemma.
\end{proof}
\subsection{Properties of constacyclic codes } \label{sub2.1}
In this subsection, we discuss the properties of $\lambda$-constacyclic codes in the frequency domain. In \cite{boztas1998constacyclic}, only some properties are stated without proofs. Here, we will consider each of those in detail. First, we will define a shift in the transform domain.
 \begin{theorem} (Constacyclic shift property in the transform domain)
     If $\boldsymbol{A} = \text{FFFT}(\boldsymbol{a})$, $\boldsymbol{b}\in\mathbb{F}_q^n$
such that $b_i = a_{i-1}$ for $i=1,2,\dots,n-1$ and $b_0=\lambda a_{n-1}$, and
$\boldsymbol{B} = \text{FFFT} (\boldsymbol{b})$, then $B_j =  \beta\xi^j A_j$.
 \end{theorem}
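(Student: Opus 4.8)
The plan is to compute $\text{FFFT}(\boldsymbol{b})$ directly from the definition and show it equals $\beta\xi^j A_j$ by reindexing the sum. Recall that $B_j = \sum_{i=0}^{n-1} b_i (\beta\xi^j)^i$. First I would split off the $i=0$ term, writing $B_j = b_0 + \sum_{i=1}^{n-1} b_i (\beta\xi^j)^i = \lambda a_{n-1} + \sum_{i=1}^{n-1} a_{i-1} (\beta\xi^j)^i$, using the hypotheses $b_0 = \lambda a_{n-1}$ and $b_i = a_{i-1}$. Then I would substitute $k = i-1$ in the remaining sum so that it becomes $\sum_{k=0}^{n-2} a_k (\beta\xi^j)^{k+1} = \beta\xi^j \sum_{k=0}^{n-2} a_k (\beta\xi^j)^k$.

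The key step is then to absorb the stray term $\lambda a_{n-1}$ into a full sum. Since $\lambda = \beta^n$ and $\xi$ has order $n$ (so $\xi^{jn} = 1$), we have $\lambda a_{n-1} = \beta^n a_{n-1} = a_{n-1} (\beta\xi^j)^n \cdot \xi^{-jn} \cdot \xi^{jn}$; more cleanly, $a_{n-1} (\beta\xi^j)^n = a_{n-1} \beta^n \xi^{jn} = a_{n-1} \beta^n = \lambda a_{n-1}$. Hence $\lambda a_{n-1} = a_{n-1} (\beta\xi^j)^n = \beta\xi^j \cdot a_{n-1} (\beta\xi^j)^{n-1}$. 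Adding this to the reindexed sum gives $B_j = \beta\xi^j \left( \sum_{k=0}^{n-2} a_k (\beta\xi^j)^k + a_{n-1}(\beta\xi^j)^{n-1} \right) = \beta\xi^j \sum_{k=0}^{n-1} a_k (\beta\xi^j)^k = \beta\xi^j A_j$, which is the claim.

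I do not anticipate a serious obstacle here; this is a routine reindexing computation. The only point requiring care is the handling of the boundary term: one must use both defining facts $\lambda = \beta^n$ and $\xi^n = 1$ to see that the constacyclic ``wrap-around'' factor $\lambda$ is exactly what is needed to turn the truncated sum $\sum_{k=0}^{n-2}$ into the complete spectral sum $\sum_{k=0}^{n-1}$. This is the structural reason the constacyclic shift corresponds to multiplication by $\beta\xi^j$ in the transform domain, generalizing the familiar fact that an ordinary cyclic shift ($\lambda=1$) corresponds to multiplication by $\xi^j$. I would present the argument as a short display-aligned chain of equalities following the substitution $k=i-1$, with a one-line remark invoking $\lambda=\beta^n$ and $\mathrm{ord}(\xi)=n$ to justify the final regrouping.
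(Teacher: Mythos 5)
Your proposal is correct and follows essentially the same route as the paper's proof: a direct computation from the definition of the FFFT, a reindexing of the sum, and the observation that $(\beta\xi^j)^n=\beta^n=\lambda$ (using $\xi^n=1$) accounts for the wrap-around term. The only cosmetic difference is that you transform $B_j$ into $\beta\xi^jA_j$ in one chain, while the paper expands both sides separately and matches them.
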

 \begin{proof}
Let $\boldsymbol{A} = \text{FFFT} (\boldsymbol{a})$ and $\boldsymbol{B }= \text{FFFT} (\boldsymbol{b})$. Then, 
\begin{align*}
    B_j=&\sum_{i=0}^{n-1}b_i(\beta\xi^j)^i
    =b_0+\sum_{i=1}^{n-1}a_{i-1}(\beta\xi^j)^i\\
    =&\lambda a_{n-1}+\sum_{i=1}^{n-1}a_{i-1}(\beta\xi^j)^i.
    \end{align*}
    Also,
    \allowdisplaybreaks
    \begin{align*}\beta\xi^j A_j=&\beta\xi^j\sum_{i=0}^{n-1}a_i(\beta\xi^j)^i
     =\sum_{i=0}^{n-1}a_i(\beta\xi^j)^{i+1}\\
=&\beta\xi^ja_0+\beta^2\xi^{2j}a_1+\beta^3\xi^{3j}a_2+\cdots+\beta^{n-1}\xi^{(n-1)j}a_{n-2}+\beta^n\xi^{nj}a_{n-1}\\
     =&\lambda a_{n-1}+ \beta\xi^ja_0+(\beta\xi^{j})^2a_1+(\beta\xi^{j})^3a_2+\cdots+(\beta\xi^{j})^{n-1}a_{n-2}\\
     =&\lambda a_{n-1}+\sum_{i=1}^{n-1}a_{i-1}(\beta\xi^j)^i=B_j.
\end{align*}
\end{proof}
 The FFFT previously defined for constacyclic codes has the following additional properties.
\begin{theorem} (Convolutional Property)
Let $\boldsymbol{a},\boldsymbol{b}$ and $\boldsymbol{c}$ are vectors over $\mathbb{F}_q$ such that $c_i=a_ib_i$ for $i=0,1,\dots,n-1$. Then their FFFT coefficients satisfy the relation 
$C_j= \frac{1}{n}\sum_{k=0}^{n-1}B_k\mathcal{A}_{j-k}$, where $j=0,1,\dots,n-1$ and conversely.
    
\end{theorem}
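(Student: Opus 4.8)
The plan is to prove the forward implication by a direct substitution-and-reindex argument, and then obtain ``and conversely'' essentially for free from the invertibility of the FFFT established in Lemma~\ref{lem1}. No new machinery is needed beyond the transform pair of that lemma and the interchange of finite summations.

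For the forward direction I would start from the definition and immediately insert the pointwise-product hypothesis,
\begin{equation*}
C_j=\sum_{i=0}^{n-1}c_i(\beta\xi^j)^i=\sum_{i=0}^{n-1}a_i b_i\,\beta^i\xi^{ij},
\end{equation*}
and then substitute the inversion formula of Lemma~\ref{lem1} for the factor $b_i$, i.e.\ $b_i=\frac{1}{n\beta^i}\sum_{k=0}^{n-1}\xi^{-ik}B_k$, so that $\beta^i b_i=\frac1n\sum_{k=0}^{n-1}\xi^{-ik}B_k$. This yields
\begin{equation*}
C_j=\frac1n\sum_{i=0}^{n-1}a_i\,\xi^{ij}\sum_{k=0}^{n-1}\xi^{-ik}B_k=\frac1n\sum_{k=0}^{n-1}B_k\Bigl(\sum_{i=0}^{n-1}a_i\,\xi^{i(j-k)}\Bigr),
\end{equation*}
where the second equality is merely the interchange of two finite sums. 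The inner sum is exactly the transform coefficient $\mathcal{A}_{j-k}$; since $\xi$ has order $n$, the quantity $\sum_i a_i\xi^{ir}$ is $n$-periodic in $r$, so $\mathcal{A}_{j-k}$ is unambiguous even when $j-k<0$, and we arrive at $C_j=\frac1n\sum_{k=0}^{n-1}B_k\mathcal{A}_{j-k}$. The one bookkeeping subtlety — and the step I would be most careful about — is which factor to expand via the inverse transform: one must expand $b_i$ (not $a_i$) so that the $\beta^i$ produced by the inversion cancels the $\beta^i$ sitting inside $(\beta\xi^j)^i$, leaving the $a$-summation ``untwisted''; expanding $a_i$ instead gives the symmetric relation $C_j=\frac1n\sum_{l}A_l\,\mathcal{B}_{j-l}$. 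Unlike the inverse-transform proof of Lemma~\ref{lem1}, this direction needs no orthogonality/character-sum identity at all.

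For the converse, I would invoke that $\boldsymbol c\mapsto\boldsymbol C$ is a bijection by Lemma~\ref{lem1}. Suppose $\boldsymbol A=\text{FFFT}(\boldsymbol a)$, $\boldsymbol B=\text{FFFT}(\boldsymbol b)$, $\boldsymbol C=\text{FFFT}(\boldsymbol c)$, and $C_j=\frac1n\sum_{k}B_k\mathcal{A}_{j-k}$ for every $j$. The forward computation shows that the vector $(a_ib_i)_{i=0}^{n-1}$ has the very same spectrum; since distinct vectors over $\mathbb{F}_q$ have distinct FFFTs (the inversion formula of Lemma~\ref{lem1}), it follows that $c_i=a_ib_i$ for all $i$. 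Equivalently, one can mirror the forward argument: apply the inverse FFFT directly to $C_j=\frac1n\sum_{k}B_k\mathcal{A}_{j-k}$ and collapse the resulting triple sum using the relation $\sum_{j=0}^{n-1}\xi^{(l-i)j}$, which vanishes unless $i\equiv l\pmod n$ and equals $n$ (nonzero since $p\nmid n$) in that case — the same identity isolated in the proof of Lemma~\ref{lem1} — recovering $c_i=a_ib_i$. The bijectivity route is shorter, and is the one I would write.
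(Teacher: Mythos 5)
Your forward direction is exactly the paper's argument: substitute the inversion formula for $b_i$ so that $\beta^i b_i=\frac1n\sum_k\xi^{-ik}B_k$, swap the finite sums, and recognize the untwisted inner sum as the cyclic transform $\mathcal{A}_{j-k}$; your side remark about which factor to expand is precisely why the paper's $\beta^{-i}\beta^i$ cancellation works. Where you diverge is the ``and conversely'' half, and the divergence is one of interpretation rather than of technique. You read ``conversely'' as the logical converse of the forward implication --- if $C_j=\frac1n\sum_k B_k\mathcal{A}_{j-k}$ then $c_i=a_ib_i$ --- and dispatch it correctly by injectivity of the FFFT (Lemma~\ref{lem1} gives the inverse, so two vectors with the same spectrum coincide). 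The paper instead proves the \emph{dual} convolution statement: assuming $C_j=A_jB_j$ pointwise in the spectral domain, it applies the inverse FFFT and the character-sum identity to conclude $c_i=\sum_{k}b_ka_{i-k}$ in the time domain. These are different (both true) claims: yours adds nothing new beyond the forward computation plus bijectivity, while the paper's establishes the companion half of the Fourier duality pair, which is the version actually used when one wants products of spectra to correspond to time-domain convolutions. Your proof is internally sound, but be aware that it does not cover the statement the paper's ``conversely'' is actually proving; if you intend to match the paper you would need the second computation you sketch only in passing (inverse-transforming $A_jB_j$ and collapsing the triple sum with $\sum_j\xi^{(l-i)j}=n\,\delta_{l,i}$), carried out in full.
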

\begin{proof}
Let $c_i=a_ib_i$ for $i=0,1,\dots,n-1$.
Now, taking the FFFT of $(c_0,c_1,\dots,c_{n-1})$ we get
\begin{align*}
    C_j=&\sum_{i=0}^{n-1}c_i(\beta\xi^j)^i
    =\sum_{i=0}^{n-1}(\beta\xi^j)^ia_i\frac{1}{n}\sum_{k=0}^{n-1}(\beta\xi^k)^{-i}B_k\\
    =&\frac{1}{n}\sum_{k=0}^{n-1}B_k\sum_{i=0}^{n-1}\beta^{-i}\beta^i(\xi)^{i(j-k)}a_i\\
    =&\frac{1}{n}\sum_{k=0}^{n-1}B_k\sum_{i=0}^{n-1}\xi^{i(j-k)}a_i
    =\frac{1}{n}\sum_{k=0}^{n-1}B_k\mathcal{A}_{j-k},
\end{align*} where $\mathcal{A}_{j-k}$ denotes the finite field Fourier transform of cyclic codes (in this case $\beta=1$).

Conversely, let $C_j=A_jB_j$, $j=0,1,\dots,n$. Now, by taking the inverse FFFT of  $(C_0,C_1,\dots,C_{n-1})$, we get
\begin{align*}
    c_i=&\frac{1}{n}\sum_{j=0}^{n-1}(\beta\xi^j)^{-i}C_j
    =\frac{1}{n}\sum_{j=0}^{n-1}(\beta\xi^j)^{-i}A_j\sum_{k=0}^{n-1}(\beta\xi^j)^kb_k\\
    =&\sum_{k=0}^{n-1}b_k\bigg(\frac{1}{n}\sum_{j=0}^{n-1}(\beta\xi^j)^{k-i}A_j\bigg)
    =\sum_{k=0}^{n-1}b_ka_{i-k}.
\end{align*}
Hence, the result follows.
\end{proof}
The following property identifies those vectors over $\mathbb{F}_{q^m}$ that are images of vectors over $\mathbb{F}_q$ under FFFT.
\begin{theorem} (Conjugate Symmetry Property)\label{th3}
Let $A_j$ for $j=0,1,\dots,n-1$ take elements in $\mathbb{F}_{q^m}$ and $n|(q^m-1)$. Then, $a_i=0$ for $i=0,1,\dots,n-1$ are all elements of  $\mathbb{F}_{q}$ if and only if the following
equations are satisfied:
$A_j^q=A_{qj\mod(n)}$ $j=0,1,\dots,n-1$. 
    \end{theorem}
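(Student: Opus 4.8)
The plan is to prove both implications directly from the transform pair in Lemma~\ref{lem1}, using that $x\mapsto x^q$ is the Frobenius homomorphism on $\mathbb{F}_{q^m}$ (it is additive and multiplicative, and $(\sum_i x_i)^q=\sum_i x_i^q$ because $q$ is a power of the characteristic $p$), together with two elementary facts from the setup: $\beta\in\mathbb{F}_q^*$, so $\beta^q=\beta$, and $n\bmod p\in\mathbb{F}_p$, so it is fixed by the Frobenius.

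First I would do the forward direction. Assume $a_i\in\mathbb{F}_q$, i.e.\ $a_i^q=a_i$ for all $i$. Raising $A_j=\sum_{i=0}^{n-1}a_i(\beta\xi^j)^i$ to the $q$-th power and distributing the exponent over the sum gives $A_j^q=\sum_{i=0}^{n-1}a_i^q\,\beta^{iq}\,\xi^{ijq}=\sum_{i=0}^{n-1}a_i\,\beta^{i}\,\xi^{ijq}$, using $a_i^q=a_i$ and $\beta^{iq}=(\beta^q)^i=\beta^i$. Since $\xi$ has order $n$ we may replace $\xi^{ijq}$ by $\xi^{i(qj\bmod n)}$, so the right-hand side equals $\sum_{i=0}^{n-1}a_i(\beta\xi^{qj\bmod n})^i=A_{qj\bmod n}$. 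The point requiring care is that the exponent on $\xi$ reduces modulo $n$ while the exponent on $\beta$ does not, which is consistent because the $i$-th term of $A_{qj\bmod n}$ still carries the factor $\beta^i$; this is exactly where the hypothesis $\beta\in\mathbb{F}_q$ is essential, and it is the one genuine difference from Blahut's cyclic case.

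For the converse, assume $A_j^q=A_{qj\bmod n}$ for every $j$. Apply the inversion formula $a_i=\frac{1}{n\beta^i}\sum_{j=0}^{n-1}\xi^{-ij}A_j$ from Lemma~\ref{lem1} and raise both sides to the $q$-th power; using $\beta^{iq}=\beta^i$ and $(n\bmod p)^q=n\bmod p$, this yields $a_i^q=\frac{1}{n\beta^i}\sum_{j=0}^{n-1}\xi^{-ijq}A_j^q=\frac{1}{n\beta^i}\sum_{j=0}^{n-1}\xi^{-ijq}A_{qj\bmod n}$. Because $n\mid q^m-1$ we have $\gcd(q,n)=1$, so $j\mapsto qj\bmod n$ is a permutation of $\{0,1,\dots,n-1\}$; reindexing by $k\equiv qj\pmod n$ (equivalently $j\equiv q^{-1}k\pmod n$, so $\xi^{-ijq}=\xi^{-ik}$) turns the sum into $\frac{1}{n\beta^i}\sum_{k=0}^{n-1}\xi^{-ik}A_k=a_i$. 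Hence $a_i^q=a_i$, i.e.\ $a_i\in\mathbb{F}_q$ for all $i$.

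I do not anticipate a serious obstacle: the only delicate points are the separate treatment of the $\xi$- and $\beta$-exponents under the Frobenius and the bijectivity of $j\mapsto qj\bmod n$, both of which follow from the standing hypotheses ($\beta\in\mathbb{F}_q$, $n\mid q^m-1$) and from Lemma~\ref{lem1}. As an independent check on the converse, one can observe that FFFT is a bijection and that the spectra satisfying the stated constraints form an $\mathbb{F}_q$-linear subspace of $\mathbb{F}_{q^m}^n$ of cardinality $q^n=|\mathbb{F}_q^n|$, which forces it to coincide with the image of $\mathbb{F}_q^n$.
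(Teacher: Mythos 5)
Your proof is correct and follows essentially the same route as the paper's: the forward direction is identical (Frobenius on the defining sum, $a_i^q=a_i$, $\beta^{iq}=\beta^i$, reduction of the $\xi$-exponent modulo $n$), and your converse uses the same two ingredients — invertibility of the FFFT and the fact that $j\mapsto qj\bmod n$ is a permutation since $\gcd(q,n)=1$ — merely executed through the explicit inversion formula of Lemma~\ref{lem1} rather than by equating the two forward transforms and invoking uniqueness, as the paper does.
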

\begin{proof}
By the definition of the FFFT of constacyclic codes, we have
    $A_j=\sum_{i=0}^{n-1}a_i(\beta\xi^j)^i$ for $j=0,1,\dots,n-1.$
    It is well-known that for a field of characteristic $p$ and any $c,d\in \mathbb{F}_q$, $(c+d)^q=c^q+d^q$. Therefore,
$$A_j^q=  \bigg(\sum_{i=0}^{n-1}a_i(\beta\xi^j)^i\bigg)^q
     =\sum_{i=0}^{n-1}a_i^q\beta^{iq}\xi^{ijq}.$$
    Since $a_i$ is an element of $\mathbb{F}_q$, $a_i^{q-1}=1$. Also, $\beta^{iq}=(\beta^q)^i=\beta^i$. Thus, we have
     $A_j^q=\sum_{i=0}^{n-1}a_i\beta^{i}\xi^{ijq}
     =\sum_{i=0}^{n-1}a_i(\beta\xi^{jq})^i=A_{qj\mod(n)}$.
     
    Conversely, assume that for all $j=0,1\dots,n-1$, $A_j^q=A_{qj\mod n}$. Then, $\sum_{i=0}^{n-1}a_i^q\beta^{iq}\xi^{ijq}=\sum_{i=0}^{n-1}a_i(\beta\xi^{jq})^i$, $j=0,1\dots,n-1$. Let $k=qj$. Since $(q,n)=1$, as $j$ ranges from $0$ to $n-1$, so $k$ also varies from $0$ to $n-1$. Therefore,
    $\sum_{i=0}^{n-1}a_i^q\beta^{i}\xi^{ik}=\sum_{i=0}^{n-1}a_i(\beta\xi^{k})^i$ for $k=0,1\dots,n-1$. Further, by the uniqueness of the FFFT, we have $a_i^q=a_i$ for all $i$. Thus, $a_i$ is a root of the polynomial $x^q-x$ for all $i$ and these roots are the elements of $\mathbb{F}_q.$
\end{proof}
For the characterization of the dual of constacyclic codes in the transform domain, we need the following property.
\begin{theorem} (Reversal Preserving Property)
Let $\boldsymbol{a}=(a_0,a_1,\dots,a_{n-1})$ and  $\boldsymbol{b}=(b_0,b_1,\dots,b_{n-1})$ be two vectors over $\mathbb{F}_q$. Let $\boldsymbol{A}=(A_0,A_1,\dots,A_{n-1})$ and  $\boldsymbol{B}=(B_0,B_1,\dots,B_{n-1})$ be their transform vectors, and $b_i=a_{n-i}$ for all $i=0,1,\dots,n-1$. Then, $B_j=A_{n-j}$ for all $j=0,1,\dots,n-1$. 
\end{theorem}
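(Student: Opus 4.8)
\emph{Proof idea.} The reversal $b_i = a_{n-i}$ is, like the constacyclic shift, merely a permutation of the coordinates, so the natural route is a direct computation from the definition of the FFFT, mirroring the proof of the constacyclic shift property above. First I would write $B_j = \sum_{i=0}^{n-1} b_i(\beta\xi^j)^i = \sum_{i=0}^{n-1} a_{n-i}(\beta\xi^j)^i$, reading the subscript $n-i$ modulo $n$ so that the $i=0$ term contributes $a_0$ (consistently with $a_{n}=a_0$).

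Next I would reindex by putting $k \equiv n-i \pmod n$; as $i$ runs over $\{0,1,\dots,n-1\}$, so does $k$, bijectively, with $i \equiv n-k \pmod n$. This recasts the sum as $\sum_{k=0}^{n-1} a_k\,(\beta\xi^j)^{\,n-k}$, where the exponent attached to the $k=0$ term is $n$ rather than $0$ — the wrap-around term whose bookkeeping is the one place care is needed. Using $\xi^{n}=1$ (hence $\xi^{jn}=1$ and $\xi^{\,n-j}=\xi^{-j}$) together with $\beta^{n}=\lambda$, I would simplify $(\beta\xi^j)^{\,n-k} = \beta^{\,n}\beta^{-k}\,\xi^{\,jn}\xi^{-jk} = \lambda\,\beta^{-k}\,\xi^{-jk}$.

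On the other side, the same substitution gives $A_{n-j} = \sum_{k=0}^{n-1} a_k\,(\beta\xi^{\,n-j})^{k} = \sum_{k=0}^{n-1} a_k\,\beta^{k}\,\xi^{-jk}$. Comparing the two expressions term by term is the crux of the argument: one must check that the powers of the multiplicative twist $\beta$ produced by the reversal align with those appearing in the spectrum of $\boldsymbol a$, the controlling identity being exactly $\beta^{n}=\lambda$. In the cyclic case $\beta=1$ this matching is immediate and the statement collapses to Blahut's reversal property; for general $\lambda$ I would verify the alignment carefully and, should a scalar discrepancy surface, record the corresponding $\beta$-twisted form of the reversal. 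Tracking these $\beta$-powers jointly with the wrap-around index is precisely the step I expect to be the main obstacle. Finally, since the FFFT is uniquely determined (Lemma~\ref{lem1}), establishing the identity for every $j$ closes the proof; this is also what makes $\boldsymbol a\mapsto(a_{n-i})_i$ the transform-domain incarnation of the reversal carrying a $\lambda$-constacyclic code to a $\lambda^{-1}$-constacyclic one, which is why it is the right tool for the dual-code characterization that follows.
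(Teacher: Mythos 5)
Your computation follows exactly the same route as the paper's proof --- expand $B_j$ from the definition, reindex with $k\equiv n-i\pmod n$, and use $\xi^n=1$, $\beta^n=\lambda$ to obtain $B_j=\lambda\sum_{k=0}^{n-1}a_k\beta^{-k}\xi^{-jk}$ --- and your evaluation of the other side, $A_{n-j}=\sum_{k=0}^{n-1}a_k(\beta\xi^{-j})^{k}=\sum_{k=0}^{n-1}a_k\beta^{k}\xi^{-jk}$, is also correct. The point at which you stop and decline to assert equality is precisely where you should stop: the two sums differ term by term by the factor $\lambda\beta^{-2k}$, which depends on $k$, so no global scalar reconciles them and the identity $B_j=A_{n-j}$ fails for general $\beta$. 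A one-line counterexample is $\boldsymbol{a}=(0,1,0,\dots,0)$, for which $B_j=\lambda\beta^{-1}\xi^{-j}$ while $A_{n-j}=\beta\xi^{-j}$; these agree only when $\beta^{2}=\lambda$, i.e.\ essentially only in the cyclic case $\beta=1$, where the statement reduces to Blahut's classical reversal property.

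The paper's own proof obtains agreement only through an invalid step: in equation (\ref{eq6}) the sum $\sum_{i}a_i(\beta\xi^{-j})^{i}$, whose $i$-th term carries $\beta^{+i}$, is rewritten as $\lambda\sum_k a_k\beta^{-k}\xi^{-jk}$ under the annotation ``where $n-k=i$,'' but no change of summation variable can convert $\beta^{i}$ into $\lambda\beta^{-k}$ while the index runs over the same set. What your computation actually establishes (modulo the $k=0$ wrap-around term you flagged, which contributes $a_0$ rather than $\lambda a_0$ unless one adopts the convention $a_n:=\lambda a_0$) is that the spectrum of the reversed vector is the $\beta^{-1}$-twisted transform of $\boldsymbol{a}$ evaluated at $-j$ and scaled by $\lambda$; this is consistent with Theorem \ref{th8}, where the zeros of the dual code sit at $\beta^{-1}\xi^{-s}$ rather than $\beta\xi^{-s}$, and it is the form of the reversal property that ought to be recorded and used for the dual-code characterization. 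So there is no gap in your reasoning; the obstacle you identified is a genuine error in the stated theorem and in the paper's proof of it.
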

\begin{proof}
    For any $j\in\{0,1,\dots,n-1\}$, we have
    \begin{align}\label{eq5}
        B_j=&\sum_{i=0}^{n-1}b_i(\beta\xi^j)^i
        =\sum_{i=0}^{n-1}a_{n-i}(\beta\xi^j)^i
        =\sum_{k=0}^{n-1}a_k(\beta\xi^j)^{n-k}\nonumber\\
        =&\sum_{i=0}^{n-1}a_k\beta^n\beta^{-k}\xi^{-jk}
        =\lambda\sum_{k=0}^{n-1}a_k\beta^{-k}\xi^{-jk}.
    \end{align} 
    Next, we consider
    \begin{align}\label{eq6}
        A_{n-j}=&\sum_{i=0}^{n-1}a_i(\beta\xi^{n-j})^i
        =\sum_{i=0}^{n-1}a_i(\beta\xi^{-j})^i
        =\lambda\sum_{k=0}^{n-1}a_k\beta^{-k}\xi^{-jk},
    \end{align} where $n-k=i$. From equations (\ref{eq5}) and (\ref{eq6}), we get the required result.
\end{proof}
\subsection{Classical  Constacyclic BCH Codes}\label{BCH}
Binary BCH codes were discovered by Hocquenghem \cite{hocquenghem1959codes} and independently by Bose and Ray-Chaudhuri \cite{bose1960class}. In $1961$, Gorenstein and Zierler \cite{gorenstein1961class}
extended  BCH codes over finite fields. Over the past sixty years, a lot of progress on the study of BCH codes has been
made, including 2D BCH cyclic codes and their circuit architectures  \cite{mondal2020efficient,mondal2021efficient}. The goal of this paper is to extend BCH cyclic codes to BCH constacyclic codes, including their quantum versions.

Let $\xi$ be a primitive element in $\mathbb{F}_{q^m}$. The generator polynomial $g(x)$
of the $t$-error-correcting BCH code of length $q^m-1$ is the lowest-degree polynomial over $\mathbb{F}_{q^m}$ having 
$$\beta\xi,\beta\xi^2,\dots,\beta\xi^{2t}$$ as its roots. Also, $g(x)$ has $\beta\xi,\beta\xi^2,\dots,\beta\xi^{2t}$ and their conjugates as all its roots. Let $\phi_i(x)$ be the minimal polynomial of $(\beta\xi)^i$. Then g(x) must be the least common multiple of  $\phi_1(x),\phi_2(x),\dots\phi_{2t}(x)$, i.e., 
$$g(x)=\text{LCM}\{\phi_1(x),\phi_2(x),\dots,\phi_{2t}(x)\}.$$
Since $\xi^i=(\xi^{i'})^{2^l}$ is a conjugate of $\xi^{i'}$, $\xi^i$ and $\xi^{i'}$ have the same minimal polynomial. Thus, every even power of $\xi$ in the sequence above has the same minimal polynomial as some preceding odd power of $\xi$ in the sequence. Therefore,  the generator polynomial $g(x)$ of the  $t$-error-correcting BCH code of length $q^m-1$ can be reduced to 
$$g(x)=\text{LCM}\{\phi_1(x),\phi_3(x),\dots,\phi_{2t-1}(x)\}.$$
Also, let $b$ and $\delta$ be integers where $0\leq \delta \leq n$. Given those parameters, a $t$-error-correcting BCH code $\mathcal{C}$ consists of all polynomials $c(x) = c_0 + c_1x+\dots+c_{n-1}x^{n-1}\in \frac{\mathbb{F}_q[x] }{x^n-\lambda}$ such that
$c(\beta\xi^l) = 0$ for $l = b,b+1,\dots,b+\delta-2$. The sequence $\beta\xi^b,\beta\xi^{b+1},\dots,\beta\xi^{b+\delta-2}$ which consists of the roots of the underlying constacyclic BCH code, is called the consecutive root sequence of $\mathcal{C}$.

The classical constacyclic BCH code is obtained as the null space of the parity check matrix
\begin{align}
H = \begin{bmatrix}
1 & \beta\xi & (\beta\xi)^2 & \dots & (\beta\xi)^{n-1}\\
1 & \beta\xi^2 & (\beta\xi^2)^2 & \dots & (\beta\xi^2)^{(n-1)}\\
\vdots & \vdots & \vdots & \ddots & \vdots\\
1 & \beta\xi^{2t} & (\beta\xi^{2t})^{2} & \dots & (\beta\xi^{2t})^{(n-1)}\\
\end{bmatrix},\vspace{-0.15cm}\label{eqn:H_RS}
\end{align}
where $t\geq 1$ and $\delta= 2t+1$ is the designed distance of the constacyclic BCH code and $n|q^m-1$. This matrix can be reduced to the following form
\begin{align}
H = \begin{bmatrix}
1 & \beta\xi & (\beta\xi)^2 & \dots & (\beta\xi)^{n-1}\\
1 & \beta\xi^3 & (\beta\xi^3)^2 & \dots & (\beta\xi^3)^{(n-1)}\\
\vdots & \vdots & \vdots & \ddots & \vdots\\
1 & \beta\xi^{2t-1} & (\beta\xi^{2t-1})^{2} & \dots & (\beta\xi^{2t-1})^{(n-1)}\\
\end{bmatrix}.\vspace{-0.15cm}\label{eqn:H_RS}
\end{align}

 A more general form of the classical constacyclic BCH code is obtained as the null space of the parity check matrix\vspace{-0.15cm}
  \begin{align}
\!\!H_b\! =\!\!\! \begin{bmatrix}
1 & \!\beta\xi^b & \!(\beta\xi^b )^{2} & \!\!\dots & (\beta\xi^b)^{(n-1)}\!\\
1 & \!\beta\xi^{(b+1)} & \!(\beta\xi^{b+1} )^{2} & \!\!\dots & (\beta\xi^{b+1} )^{(n-1)}\!\\
\vdots & \!\vdots & \!\vdots & \!\!\ddots & \vdots\!\\
1 & \!\beta\xi^{b+\delta-2} & \!(\beta\xi^{b+\delta-2} )^{2} & \!\!\dots & (\beta\xi^{b+\delta-2} )^{(n-1)}\!
\end{bmatrix}\!\!,\label{eqn:H_b_RS}
  \end{align}
  where $b \in \{0,\dots,n-1\}$.

For the constacyclic BCH  codes obtained from parity check matrices $H$ and $H_b$ in equations \eqref{eqn:H_RS} and \eqref{eqn:H_b_RS}, respectively, the constacyclic BCH  code is over $n$ symbols. The Vandermonde matrix of $m$ rows is an $m\times m$ matrix over $\mathbb{F}_{p^{k'}}$ with the $(i,j)^{\mathrm{th}}$ element being $(\beta\xi^{j})^{(i-1)}$. Any $\delta-1$ columns of $H$ or $H_b$ form a full rank matrix as its transpose is a scaled version of a square Vandermonde matrix, which is a full rank matrix \cite{shu2011error}. As the columns of $H$ or $H_b$ comprise $\delta-1$ symbols, any $\delta$ columns are linearly dependent; hence, the minimum distance of the $t$-error-correcting constacyclic BCH code of designed distance $\delta$ is atleast $\delta$. 
\begin{theorem}[BCH bound \cite{hartmann1972generalizations}]
    Let $\mathcal{C}$ be a constacyclic BCH code of length $n$ over $\mathbb{F}_q$ with designed  distance $\delta$. Then  minimum distance $d$ of $\mathcal{C}$ is at least $\delta$, i.e., $d\geq \delta$.
\end{theorem}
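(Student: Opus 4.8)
The plan is to prove the BCH bound by showing that any nonzero codeword $c(x)$ of weight $w \leq \delta - 1$ leads to a contradiction via the Vandermonde structure already exhibited in the parity check matrix $H_b$ of \eqref{eqn:H_b_RS}. First I would let $c = (c_0, c_1, \dots, c_{n-1})$ be a nonzero codeword of $\mathcal{C}$, and suppose for contradiction that its Hamming weight $w$ satisfies $w \leq \delta - 1$. Let $i_1 < i_2 < \dots < i_w$ be the positions of the nonzero coordinates of $c$, so that $c_{i_1}, \dots, c_{i_w} \in \mathbb{F}_q^*$. The defining property of the constacyclic BCH code is $c(\beta \xi^\ell) = 0$ for $\ell = b, b+1, \dots, b + \delta - 2$, which is a system of $\delta - 1$ linear equations in the unknowns $c_{i_1}, \dots, c_{i_w}$.

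Next I would extract a $w \times w$ subsystem by restricting to the first $w$ consecutive values $\ell = b, b+1, \dots, b + w - 1$ (legitimate since $w \leq \delta - 1$ means these all lie in the root range). Writing out the coefficient matrix of this subsystem, the $(r, s)$ entry is $(\beta \xi^{b + r - 1})^{i_s}$ for $r, s \in \{1, \dots, w\}$. I would then factor each column: column $s$ carries a common factor $(\beta \xi^{b})^{i_s} = \beta^{i_s} \xi^{b i_s}$ (nonzero since $\beta \neq 0$ and $\xi$ is a unit), leaving a matrix with $(r,s)$ entry $(\xi^{i_s})^{r-1}$, which is a genuine Vandermonde matrix in the nodes $\xi^{i_1}, \dots, \xi^{i_w}$. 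Since $\xi$ has order $n$ and the exponents $i_1, \dots, i_w$ are distinct integers in $\{0, 1, \dots, n-1\}$, these nodes $\xi^{i_1}, \dots, \xi^{i_w}$ are pairwise distinct elements of $\mathbb{F}_{q^m}$, so the Vandermonde determinant $\prod_{s < t}(\xi^{i_t} - \xi^{i_s})$ is nonzero. Hence the $w \times w$ coefficient matrix is invertible, forcing $c_{i_1} = \dots = c_{i_w} = 0$, contradicting the assumption that these are the nonzero coordinates.

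Therefore no nonzero codeword can have weight at most $\delta - 1$, so every nonzero codeword has weight at least $\delta$, which is exactly $d \geq \delta$. I would remark that this argument is essentially the one already sketched informally in the paragraph preceding the theorem (any $\delta - 1$ columns of $H_b$ form a full-rank matrix because the transpose is a scaled square Vandermonde matrix), so the theorem is really a restatement packaging that observation; the proof just needs to make the column-scaling and distinctness-of-nodes steps explicit. The main obstacle, such as it is, is purely bookkeeping: one must be careful that the scaling factors pulled from the columns are units (which holds because $\beta \in \mathbb{F}_q^*$ and $\xi$ is an $n$th root of unity, hence nonzero) and that the chosen $w$ consecutive exponents $b, \dots, b+w-1$ genuinely fall within the guaranteed root sequence $b, \dots, b + \delta - 2$, which is immediate from $w \leq \delta - 1$. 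No deeper difficulty arises, since the distinctness of the Vandermonde nodes is guaranteed by $\xi$ having order exactly $n$ together with $n \mid q^m - 1$.
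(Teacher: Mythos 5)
Your proof is correct and follows essentially the same route as the paper, which states the theorem by citation and justifies it in the preceding paragraph via the observation that any $\delta-1$ columns of $H_b$ are linearly independent because their transpose is a column-scaled square Vandermonde matrix. You have simply made that column-scaling and the distinctness of the nodes $\xi^{i_1},\dots,\xi^{i_w}$ explicit, which is a faithful rigorization of the paper's argument rather than a different approach.
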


We note that, when $b=1$, $H_b=H$. For classical BCH codes, the codes obtained from $H$ and $H_b$ have the same code parameters. However, in the case of quantum constacyclic BCH codes, we exploit the form of $H_b$ to choose the value of $b$ to obtain quantum codes with different code parameters. We note that $A_{b},A_{(b+1)},\dots,A_{(b+\delta-2)}$ are spectral nulls of the constacyclic BCH code because for any codeword $\boldsymbol{a}$ of the  constacyclic BCH code obtained from $H_b$, $H_b\boldsymbol{a}^{\mathrm{T}}=0$. We note that the indexing of the positions of the bits in the transformed codeword start from $1$. Thus, the positions $b+1,b+2,\dots,b+\delta-1$ form the indexed set of \emph{spectral nulls} of the code as they take value $0$ in the transform-domain  equivalent of every codeword of the constacyclic BCH  code. 
 
Using the concept of spectral nulls, the spectral-domain encoding of the constacyclic BCH  code involves inserting the message symbols in the positions that are not spectral nulls and performing the inverse $\mathrm{FFFT}$ to obtain the codeword.

The decoding and error correction procedures involve first performing $\mathrm{FFFT}$, followed by obtaining the syndrome bits from the spectral null positions $b+1,b+2,\dots,b+\delta-1$. Based on the syndrome, the error is deduced using algorithms like the Berlekamp-Massey algorithm, Euclidean algorithm, etc., and the error is subtracted from the erroneous state to obtain the codeword. In the next section, we will also provide a syndrome-less spectral decoding algorithm for constacyclic BCH codes which has less computational complexity than the well-known algorithms like PGZ and BM algorithms.

Moreover, the error correction procedure of the  constacyclic BCH code involves the following steps. Let $\boldsymbol{e}$ be the error that occurs on the codeword. Then, the erroneous codeword is $\boldsymbol{r=c+e}$. Performing $\mathrm{FFFT}$ on $\boldsymbol{r}$, we get $\mathrm{FFFT}(\boldsymbol{r}) = \mathrm{FFFT}(\boldsymbol{c+e}) = \mathrm{FFFT}(\boldsymbol{c})+\mathrm{FFFT}(\boldsymbol{e})$. We note that the elements of $\mathrm{FFFT}(\boldsymbol{r})$ and $\mathrm{FFFT}(\boldsymbol{e})$ in the positions of the spectral nulls are the same, i.e, $H_b\boldsymbol{e}^{\mathrm{T}}$, as the elements of $\mathrm{FFFT}(\boldsymbol{c})$ in the spectral null positions are zero. Therefore, the elements of $\mathrm{FFFT}(\boldsymbol{r})$ in the positions of the spectral nulls provide the syndrome. Based on the syndrome obtained, the error can be deduced, and the inverse error operation is performed to obtain the codeword $\boldsymbol{c}$.

\section{The Spectral Decoding Algorithm }\label{sec 3}
 To study constacyclic error-correcting codes over finite fields in the frequency domain we have to deal with polynomials in the quotient ring $\frac{\mathbb{F}_{q^m}[x]}{\langle x^n-\lambda\rangle}$. Thus, we associate the polynomial $a(x)=\sum_{i=0}^{n-1}a_ix^i$ with a polynomial $A(x)=\sum_{j=0}^{n-1}A_jx^j$ by means of the finite field Fourier transform. The polynomial $A(x)$ is known as  the
spectrum polynomial or the associated polynomial of $a(x).$ The following useful result relates the roots of these polynomials to the properties of the spectrum. 
\begin{theorem}\label{th6}
\begin{enumerate}
    \item The polynomial $a(x)$ has a root at $\beta \xi^j$ if and only if the $j^{\mathrm{th}}$ spectral component $A_j$ equals zero.
    \item  The polynomial $A(x)$ has a root at $ \xi^{-i}$ if and only if the $i^{\mathrm{th}}$ time component $a_i$ equals zero.
\end{enumerate}    
\end{theorem}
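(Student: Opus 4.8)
The plan is to read off both equivalences directly from the transform pair established in Lemma~\ref{lem1}, since Theorem~\ref{th6} is essentially a restatement of the FFFT/inverse-FFFT formulas in the language of polynomial evaluation. No deep machinery is needed; the only point requiring care is tracking the nonzero scalar factors and invoking the hypothesis $n\mid(q^m-1)$, which forces $p\nmid n$, together with $\beta$ being a unit.

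For the first part, I would simply substitute $x=\beta\xi^j$ into $a(x)=\sum_{i=0}^{n-1}a_ix^i$ and observe that the resulting sum $\sum_{i=0}^{n-1}a_i(\beta\xi^j)^i$ is, by definition of the FFFT of a constacyclic code, exactly $A_j$. Hence $a(\beta\xi^j)=0$ if and only if $A_j=0$, with no side conditions at all. This disposes of item~1 in one line.

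For the second part, I would evaluate the spectrum polynomial $A(x)=\sum_{j=0}^{n-1}A_jx^j$ at $x=\xi^{-i}$, obtaining $A(\xi^{-i})=\sum_{j=0}^{n-1}A_j\xi^{-ij}$. By the inversion formula in Lemma~\ref{lem1}, $a_i=\frac{1}{n\beta^i}\sum_{j=0}^{n-1}\xi^{-ij}A_j$, so that $A(\xi^{-i})=n\beta^i\,a_i$. Since $n\mid(q^m-1)$ implies $p\nmid n$, the image of $n$ in $\mathbb{F}_{q^m}$ is nonzero, and $\beta$ is a unit; therefore $n\beta^i$ is a nonzero element of $\mathbb{F}_{q^m}$ and can be cancelled, yielding $A(\xi^{-i})=0$ if and only if $a_i=0$.

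The main (and only) obstacle worth flagging is making the scalar bookkeeping watertight: one must be explicit that $\beta^i\neq0$ for every $i$ because $\beta\in\mathbb{F}_q^{*}$, and that $n\not\equiv0\pmod p$ so division by $n$ is legitimate in $\mathbb{F}_{q^m}$ — exactly the same nonvanishing condition that made Lemma~\ref{lem1} work. Once those two facts are stated, both equivalences follow immediately and the proof is complete.
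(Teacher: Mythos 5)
Your proof is correct and follows essentially the same route as the paper: part~1 is the direct identification $a(\beta\xi^j)=A_j$, and part~2 evaluates $A(\xi^{-i})=n\beta^i a_i$ via the inversion formula of Lemma~\ref{lem1}. Your explicit justification that $n\beta^i\neq 0$ (from $p\nmid n$ and $\beta\in\mathbb{F}_q^{*}$) is a welcome bit of care that the paper leaves implicit, but the argument is otherwise identical.
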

\begin{proof}
\begin{enumerate}
    \item Since $a(x)=\sum_{i=0}^{n-1}a_ix^i$,        \begin{align}\label{eq7}
        a(\beta \xi^j)=\sum_{i=0}^{n-1}a_i(\beta\xi^j)^i=A_j.
    \end{align} From equation (\ref{eq7}), we get the required result.
    \item Since $A(x)=\sum_{j=0}^{n-1}A_jx^j$, $ A( \xi^{-i})=\sum_{j=0}^{n-1}A_j(\xi^{-i})^j=n\beta^ia_i$, proving the required result.
\end{enumerate}
    \end{proof}
    
Let $\mathcal{C}$ be an $(n,k)$ constacyclic BCH code, where $n$ is the blocklength of the code and $k$ is the dimension of the code. The spectrum of this code lies in $\mathbb{F}_{q^m}$ with $n|(q^m-1)$. For the special case with a \textit{design} minimum distance $d=2t+1$, we consider $g(x)$ to be a generator polynomial with $2t$ roots $\{\beta\xi,\beta\xi^2\dots,\beta\xi^{2t}\}$. We assume that all the polynomials associated with the coding scheme are in the transform domain. 

Let $R$ be the received vector\footnote{The bold letters indicate that the vectors/polynomials are in the spectral domain.} and its corresponding polynomial is represented as
$R(x)=\sum_{j=0}^{n-1}R_jx^j=C(x)+E(x)=\sum_{j=0}^{n-1}C_jx^j+\sum_{j=0}^{n-1}E_jx^j$, where $C$ is the codeword and $E$ is the error vector. The $j^{\mathrm{th}}$ error in the error vector  $E$ has a locator $\sigma_j\in \{\beta\xi,\beta\xi^2,\dots,\beta\xi^{n-1}\}$.  The error locator polynomial is
$\Gamma(x)=\prod_{j=1}^{\tau}(1-x\beta\xi^j)$, where $\tau$ is the actual number of errors with $\tau\leq t$. If there are no errors, then $\Gamma(x)=1$.
  
Let $C(x) = \sum_{j=0}^{n-1}C_jx^j=\text{FFFT}(c(x))$ 
be a code
polynomial of the constacyclic BCH code~$C$, where
$C_j\in\mathbb{F}_{q^m}$, $j = 0,1,\dots,n-1$, and the conjugacy
constraints are satisfied according to Theorem \ref{th3}.  Let $m(x) = \sum_{i=0}^{k-1}m_ix^i$ be the message polynomial corresponding to the systematic constacyclic code and $\deg (m(x))<k$. 

Recall that $C(x)=\text{FFFT}(c(x))$ and $C(x)=\sum_{j=0}^{n-1}C_jx^j$. Also,  $c_i=\frac{1}{n\beta^i}\sum_{j=0}^{n-1}\xi^{-ij}C_j$ and $c(x)=\sum_{i=0}^{n-1}c_ix^i$. Then, $c(\beta\xi^j)=\sum_{i=0}^{n-1}c_i(\beta\xi^j)^i=C_j$, $j=0,1,\dots,n-1$. The message polynomial $m(x)$ and its transformed version $M(x)$ can be obtained from the first $k$ components of $c(x)$ or $C(x)$, respectively as we consider a systematic code. Since $m_j=0$ for $j=k,..,n-1$, we have for $j=0,1,...,n-1$
\begin{equation}\label{eq8}
	\begin{cases}
R_j=C_j=m(\beta\xi^j), & \mbox if ~R_j=C_j\\
\Gamma(\beta\xi^j)=0, & \mbox if~ R_j\neq C_j.
	\end{cases}
\end{equation} From equation (\ref{eq8}) it follows that 
$\Gamma(\beta\xi^j)R_j=\Gamma(\beta\xi^j)m(\beta\xi^j)$, $j=0,1,\dots,n-1$. Next, let $P(x)=\Gamma(x)m(x)$. Then 
$\Gamma(\beta\xi^j)R_j=P(\beta\xi^j)$, $j=0,1,\dots,n-1$. Now, we will construct an interpolating polynomial $\mu(x)=\text{IFFFT}(R(x))$ such that 
$\mu(\beta\xi^j)=R_j$ (from Theorem \ref{th6}) where $\deg (\mu(x)) < n$. Also, 
\begin{align*}
\Gamma(\beta\xi^j)\mu(\beta\xi^j)=&P(\beta\xi^j), j=0,1,\dots,n-1\\
    \Gamma(x)\mu(x)-P(x)=&(x-\beta\xi^j)q_j(x), j=0,1,\dots,n-1\\
     \Gamma(x)\mu(x)-P(x)=&(x^n-\lambda)q(x)
\end{align*} where $q_j(x)$ and $q(x)$ are quotient polynomials. Finally, we obtain the key equation as follows:
\begin{equation}
	\begin{cases}
		\Gamma(x)\mu(x)\cong P(x) \mod (x^n-\lambda)\\
  \deg(\Gamma(x)) \leq t\\
  \text{maximize} \deg (\Gamma(x)).
  
	\end{cases}
\end{equation} Since $\deg (P(x))= \deg (m(x))+ \deg (\Gamma(x))\leq k-1+t<n-t $, we have 
\begin{equation}\label{eq9}
	\begin{cases}
\Gamma(x)\mu(x)\cong P(x) \mod (x^n-\lambda)\\
  \deg (P(x))< n-t\\
  \text{maximize} \deg(P(x)).
  \end{cases}
\end{equation} 
Next, we solve the key equation by using  the extended
Euclidean algorithm for polynomials (EEAP) to $x^n-\lambda$ and
$\mu(x)$, and we obtain polynomials $P(x)$ and $\Gamma(x)$. Finally, we get the message polynomial $m(x)=\frac{P(x)}{\Gamma(x)}$. Post zero-padding $m(x)$ and taking FFFT, we get the spectral form of the message polynomial $D(x)=\text{FFFT}(m(x))$.
\begin{algorithm}
\begin{algorithmic}
 \caption{\textbf{Spectral decoding algorithm for systematic $\lambda$-constacyclic codes of length $n$}}
 \label{algo:1}
 \renewcommand{\algorithmicrequire}{\textbf{Input:}}
 \renewcommand{\algorithmicensure}{\textbf{Output:}}
 \REQUIRE 
  ~\\The received vector $R(x)=\sum_{j=0}^{n-1}R_jx^j\longleftrightarrow R=(R_0,R_1,\dots,R_{n-1})$.\\
  
 \ENSURE ~\\ The decoded message polynomial in the spectral domain $D(x)=\sum_{i=0}^{n-1}D_jx^j\longleftrightarrow D=(D_0,D_1,\dots,D_{n-1})$.\\
\textbf{Step 1:} $\mu=\text{IFFFT}(R)\longleftrightarrow \mu(x)$.\\

\textbf{Step 2:}  Solve the congruence such that $\mu(\beta\xi^j)=R_j$ for $j=0,1,...,n-1$.
\begin{equation*}
	\begin{cases}
		\Gamma(x)\mu(x)\cong P(x) \mod x^n-\lambda\\
  \deg P(x) < n-t\\
  \text{maximize} \deg \text{of }  P(x).
  \end{cases}
  \end{equation*}\\
\textbf{Step 3:} $m(x)=\frac{P(x)}{\Gamma(x)}$, where $m(\beta\xi^j)=C_j$ valid for $j=0,1,...,\le n-2t-1$ and $t$ is the $\deg(\Gamma(x))$ based on the actual number of errors. If $\Gamma(x) \nmid P(x)$, then report a decoding failure. \\
\textbf{Step 4:} $D=\text{FFFT}(m(x))$ post zero-padding $m(x)$ from Step~3 and taking the FFFT.\\
\end{algorithmic}
\end{algorithm}

The following theorem provides the correctness of the proposed algorithm.
\begin{theorem}
    For decoding up to the designed error-correcting capability $t$ for the constacyclic BCH code the decoding algorithm produces a unique decoded polynomial $D(x)$.
\end{theorem}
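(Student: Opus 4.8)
The plan is to establish three facts and then assemble them: (i) under the hypothesis of at most $t$ errors, the key equation \eqref{eq9} admits a solution $(\Gamma,P)$ that encodes the transmitted message exactly; (ii) any two admissible solutions of \eqref{eq9} yield the same message polynomial $m(x)$; and (iii) the extended Euclidean step of Algorithm \ref{algo:1} returns an admissible solution. Uniqueness of the output $D(x)$ then follows because every step of the algorithm is a deterministic function of its input.

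For (i), I would fix the transmitted message $m(x)$, the error support $\mathcal{E}$ with $\tau=|\mathcal{E}|\le t$, and the true error-locator $\Gamma_0(x)=\prod_{j\in\mathcal{E}}(1-x\beta\xi^j)$, so $\deg\Gamma_0=\tau\le t$. By Step~1 and Theorem \ref{th6}, $\mu(x)=\mathrm{IFFFT}(R)$ satisfies $\mu(\beta\xi^j)=R_j$ for all $j$ with $\deg\mu<n$. Setting $P_0(x)=\Gamma_0(x)m(x)$, I would check $\Gamma_0(\beta\xi^j)\mu(\beta\xi^j)=P_0(\beta\xi^j)$ for every $j$: when $j\notin\mathcal{E}$ this is \eqref{eq8} (since $R_j=C_j=m(\beta\xi^j)$), and when $j\in\mathcal{E}$ both sides vanish because $\Gamma_0(\beta\xi^j)=0$. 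Hence $\Gamma_0(x)\mu(x)-P_0(x)$ has the $n$ distinct roots $\beta\xi^0,\dots,\beta\xi^{n-1}$, so by \eqref{eq1} it is divisible by $x^n-\lambda$, i.e. $\Gamma_0(x)\mu(x)\equiv P_0(x)\pmod{x^n-\lambda}$. Finally $\deg P_0=\tau+\deg m\le t+(k-1)$, and since $\mathcal{C}$ has designed distance $\delta=2t+1$ we have $\deg g(x)\ge 2t$, hence $k\le n-2t$ and $\deg P_0\le n-t-1<n-t$. Thus $(\Gamma_0,P_0)$ satisfies all the constraints of \eqref{eq9}.

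For (ii), let $(\Gamma_1,P_1)$ be any admissible solution, i.e. $\Gamma_1\mu\equiv P_1\pmod{x^n-\lambda}$ with $\deg\Gamma_1\le t$ and $\deg P_1<n-t$. Multiplying the two congruences by $\Gamma_1$ and $\Gamma_0$ respectively gives $\Gamma_1 P_0\equiv \Gamma_1\Gamma_0\mu\equiv \Gamma_0 P_1\pmod{x^n-\lambda}$; but $\deg(\Gamma_1 P_0)\le t+(n-t-1)=n-1$ and $\deg(\Gamma_0 P_1)\le t+(n-t-1)=n-1$, both below $n=\deg(x^n-\lambda)$, so in fact $\Gamma_1 P_0=\Gamma_0 P_1$ as polynomials. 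Substituting $P_0=\Gamma_0 m$ and cancelling $\Gamma_0$ (valid in the integral domain $\mathbb{F}_{q^m}[x]$) yields $P_1=\Gamma_1 m(x)$; in particular $\Gamma_1\mid P_1$, so Step~3 never reports a failure and returns $P_1(x)/\Gamma_1(x)=m(x)$, independently of which admissible pair the Euclidean algorithm selects. For (iii), the standard theory of the extended Euclidean algorithm applied to $x^n-\lambda$ and $\mu(x)$ produces remainders $P^{(i)}$ of strictly decreasing degree with cofactors $\Gamma^{(i)}$ obeying $\Gamma^{(i)}\mu\equiv P^{(i)}\pmod{x^n-\lambda}$ and $\deg\Gamma^{(i)}=n-\deg P^{(i-1)}$; stopping at the first index with $\deg P^{(i)}<n-t$ forces $\deg P^{(i-1)}\ge n-t$, hence $\deg\Gamma^{(i)}\le t$, so $(\Gamma^{(i)},P^{(i)})$ is admissible (and has the largest degree below $n-t$, matching the ``maximize'' clause). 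By (ii) it recovers $m(x)$ exactly, and Step~4 then outputs the zero-padded $D(x)=\mathrm{FFFT}(m(x))$, which is therefore uniquely determined and equals the spectrum of the transmitted message.

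The part I expect to require the most care is the degree bookkeeping: invoking the BCH designed-distance bound to guarantee $\deg P_0\le n-t-1$ (so the true pair is admissible), and controlling $\deg(\Gamma_1 P_0),\deg(\Gamma_0 P_1)\le n-1$ so that the key congruence upgrades to a genuine polynomial identity — this identity is exactly what forces every admissible solution to share the quotient $m(x)$ and hence pins down $D(x)$. A secondary point is citing the correctness and exact stopping rule of the extended Euclidean algorithm, including the boundary cases $\tau=0$ (where $\Gamma_0=1$, $P_0=m(x)$) and $k=n-2t$.
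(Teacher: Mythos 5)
Your proposal is correct and its core uniqueness argument --- cross-multiplying the two congruences, bounding both sides by degree $n-1$ to upgrade $\Gamma_1 P_0 \equiv \Gamma_0 P_1 \pmod{x^n-\lambda}$ to a polynomial identity, and cancelling $\Gamma_0$ to get $P_1=\Gamma_1 m(x)$ --- is exactly the paper's Case~2. The paper handles your part (i) in the derivation of the key equation preceding the theorem and simply asserts your part (iii), so your write-up is the same route carried out with more explicit bookkeeping of the admissibility of the true pair and of the extended Euclidean stopping rule.
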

\begin{proof}
     We have to show that the key equation (\ref{eq9}) has a unique solution which produces a unique decoded polynomial $D(x)$. Suppose we obtain the polynomials $P'(x)$ and $\Gamma'(x)$, from equation (\ref{eq9}) by applying EEAP to $x^n-\lambda$ and $\mu(x)$, which terminates when  $\deg(P'(x))<n-t$. If $P'(x)$ is divisible by $\Gamma'(x)$, then we get the message polynomial $m'(x)$ which leads to the decoded polynomial $D'(x)$. 
     
     Now, there are $2$ cases.\\
\textbf{Case 1:} When the received vector $R(x)$ is zero, then $\mu(x)=0$. Now, by applying EEAP, we get $(x^n-\lambda)0+\mu(x)1=0$. This gives $\Gamma'(x)=1$ and $P'(x)=m'(x)=D'(x)=0$. Thus, $D(x)=D'(x)$.\\
\textbf{Case 2:} When there are errors, i.e., $E(x)\neq 0$ and $R(x)\neq 0$. First, we consider
\begin{align*}
P(x)\Gamma'(x)\cong&(\Gamma(x)\mu(x))\Gamma'(x)\\
\cong&\Gamma(x)(\mu(x)\Gamma'(x))\\
\cong&\Gamma(x)P'(x)~\mod(x^n-\lambda) 
\end{align*}
 Since the degree of each side of this congruence is less than $n$, $P(x)\Gamma'(x)=\Gamma(x)P'(x)$. From this, we get $P'(x)=\frac{P(x)}{\Gamma(x)}\Gamma'(x)=m(x)\Gamma'(x)$ and $m(x)=\frac{P'(x)}{\Gamma'(x)}=m'(x)$; hence, $D(x)=D'(x)$, proving the required result.
\end{proof}
 \subsection{Computational Complexity of the  Decoding Algorithm } \label{sub3}
 In this subsection, we provide the complexity of our proposed spectral decoding algorithm in terms of the number
of arithmetic operations in the computation field $\mathbb{F}_{q^m}$. \\
\textbf{The spectral decoding algorithm:}
\begin{enumerate}
    \item In Steps 1 and 4, we have to multiply a vector by a matrix to find the FFFT and IFFFT of a vector. So, the total number of arithmetic operations required in this step is $2n^2-2n$ (Mult: $n(n-1)$, Add: $n(n-1)$).
    \item In Step 2, to solve the key equation we use the extended Euclidean algorithm for polynomials~\cite{moenck1973fast}. The the total number of arithmetic operations required is $t(4t+4n)=4t^2+4tn$ (Mult: $t(2t+2n)$, Add: $t(2t+2n)$).
    \item In Step 3, to find the message polynomial  we use the fast division algorithm for polynomials~\cite{von2013modern}. The the total number of arithmetic operations required in this step is $(n-2t)2t$ (Mult: $t(n-2t)$, Add: $t(n-2t)$).
\end{enumerate} 
Hence, our proposed spectral decoding algorithm for $\lambda$-constacyclic codes requires $O_{Spec}=2n^2-2n+4t^2+4tn+2tn-4t^2+2n^2-2n=4n^2-4n+6tn$ arithmetic operations in the computational field $\mathbb{F}_{q^m}$. Also, the total no. of multiplications required for our method are $O_{Spec}^{M}=2n^2-2n+3tn$. Further, we compare our proposed algorithm with the syndrome-based decoding algorithms. In syndrome-based decoding algorithm (when Berlekamp-Massey (BM) algorithm for finding the error-location polynomial is used) the complexity \cite{chen2008complexity}  is about $O_{Syn}=6nt+16t^2-n+6t$ arithmetic operations and $O_{Syn}^{M}=3tn+10t^2-n+6t$. Also, the complexity of
the classical Peterson–Gorenstein–Zierler (PGZ) decoding algorithm [\cite{blahut1983theory}, Fig. 7.1] is about
$O_{PGZ}=6tn+\frac{1}{2}t^4+\frac{16}{3}t^3+5t^2+\frac{1}{6}t$ arithmetic operations. Table $\ref{tabl2}$ presents the complexity in terms of the number of arithmetic operations in the computation field $\mathbb{F}_{q^m}$ of our proposed spectral decoding algorithm and the classical decoding algorithms.
 \subsection{Numerical illustration}\label{sub32}
 In this subsection, first we will provide an example to check that $\lambda$-constacyclic codes have a better minimum distance, useful for error-correction. 
  \begin{example}
     Let $\mathbb{F}_9$ be a field of order $9$ where $\mathbb{F}_9=\mathbb{F}_3(w)$ and $w^2+2w+2=0$. Let $n=50$ and $\lambda=w^5$. In $\mathbb{F}_9[x]$, we have 
     \begin{align*}
        x^n-\lambda = x^{50}-w^5=& (x^2+w)(x^2+x+w)(x^2+w^3x+w)(x^2+2x+w)(x^2+w^7x+w)\\&(x^{10}+wx+w^5)(x^{10}+w^2x+w^5)(x^{10}+w^5x+w^5)(x^{10}+w^6x+w^5).
     \end{align*}
 Now, let $g(x)=(x^2+w)(x^2+x+w)(x^2+w^3x+w)(x^2+2x+w)(x^2+w^7x+w)$ be the generating polynomial of constacyclic code, i.e., $C=\langle g(x) \rangle$ be a constacyclic code of length $50$ over $\mathbb{F}_9$ having parameters $[50,40,2]$. 
 
 If we take $g(x)=(x^{10}+wx+w^5)~\text{or} ~(x^{10}+w^2x+w^5)~\text{or} ~(x^{10}+w^5x+w^5)~\text{or} ~(x^{10}+w^6x+w^5)$, then corresponding code parameters are $[50,40,3]$. 
 
Further, in the case of cyclic code we have 
\begin{align*}
    x^{50}-1=&(x+1)(x+2)(x^2+wx+1)(x^2+w^3x+1)(x^2+w^5x+1)(x^2+w^7x+1)(x^{10}+wx^5+1)\\& (x^{10}+w^3x^5+1) (x^{10}+w^5x^5+1) (x^{10}+w^7x^5+1).
\end{align*}
    Here, either we take $g(x)=(x+1)(x+2)(x^2+wx+1)(x^2+w^3x+1)(x^2+w^5x+1)(x^2+w^7x+1)~\text{or} ~(x^{10}+wx^5+1)~\text{or} ~ (x^{10}+w^3x^5+1) ~\text{or} ~(x^{10}+w^5x^5+1) ~\text{or} ~(x^{10}+w^7x^5+1)$.  In every case code parameters are $[50,40,2]$.
    
    Hence, we can get improved minimum distance using constacyclic codes than cyclic codes for a given length $n$ and dimension $k$. 
 \end{example}
 \begin{example}\label{exam2}
Let $\mathbb{F}_{27}$ be a field of order $27$ where $\mathbb{F}_{27}=\mathbb{F}_3(w)$ and $w^3+2w+1=0$. Let $n=13$, $\beta=-1$, $\lambda=w^{13}$ and $\xi=w^2$. Let $C$ be a $(13,10)$ constacyclic code over $\mathbb{F}_{3}$, where $n=13$ is the
blocklength of the code and $k = 10$ is the dimension of the code. The spectrum of this code lies in the extension field $\mathbb{F}_{27}$. In $\mathbb{F}_{27}[x]$, we have
\begin{align*}
    x^{13}-w^{13}=& (x+1)(x+w^2)(x+w^4)(x+w^6)(x+w^8)(x+w^{10})(x+w^{12})(x+w^{14})\\&(x+w^{16})(x+w^{18})(x+w^{20})(x+w^{22})(x+w^{24}).
\end{align*}  Let the
designed error-correction capability for this code be $t=1$. Let the received vector be $R=(0,0,0,0,0,0,0,0,0,w,0,0,0)$.
\FloatBarrier

		\begin{table}[htbp]
  \renewcommand{\arraystretch}{1}
			\caption{ The number of arithmetic operations for spectral decoding algorithm
for some constacyclic codes}
\begin{center}

			\begin{tabular}
          {|c|c|c|c|c|c|}
					\hline
				$q^m$ & $\lambda$& $n$  & $t$ &  Classical algorithms & Spectral algorithm\\
					\hline
					$27$ & $w^{14}$ & $2$ & $1$ & $32~~ (O_{Syn})$& $20~~(O_{Spec})$\\
					 \hline
      $125$ & $w^{62}$ & $62$ & $28$ & $13154
 ~~ (O_{{Syn}}^{M})$& $12772~~ (O_{Spec}^{M})$\\
      \hline
      $125$ & $w^{62}$ & $62$ & $29$ & $13916
 ~~ (O_{{Syn}}^{M})$& $12958~~ (O_{Spec}^{M})$\\
      \hline
      
      $125$ & $w^{62}$ & $62$ & $13$ & $31681
 ~~ (O_{PGZ})$& $19964~~(O_{Spec})$\\
					 \hline
      $125$ & $w^{62}$ & $62$ & $15$ & $50020~~ (O_{PGZ})$& $20708~~(O_{Spec})$\\
					 \hline
            $625$ & $w^{312}$ & $156$ & $70$ & $82024
 ~~ (O_{Syn}^{M})$& $81120~~ (O_{Spec}^{M})$\\
      \hline
            $625$ & $w^{312}$ & $156$ & $29$ & $91644
 ~~ (O_{Syn}^{M})$& $61932~~ (O_{Spec}^{M})$\\
      \hline
				\end{tabular}
				\label{tabl2}
   \end{center}
		\end{table}

		\FloatBarrier

\subsection*{Computational evaluations of the spectral decoding algorithm:}
\begin{enumerate}
    \item $\mu=\mathrm{IFFFT}(R)=(0,0,0,0,0,0,0,0,0,w,0,0,0)$
    $\times$$\begin{pmatrix}
        \beta^{-0}&\beta^{-1}&\cdots &\beta^{-12}\\
        (\beta^{-1}\xi^{-1})^0& (\beta^{-1}\xi^{-1})^1&\cdots&(\beta^{-1}\xi^{-1})^{12}\\
        \vdots & \vdots& \cdots& \vdots\\
         (\beta^{-1}\xi^{-12})^0& (\beta^{-1}\xi^{-12})^1&\cdots&(\beta^{-1}\xi^{-12})^{12}
    \end{pmatrix}\\=(w,w^{22},w^{17},w^{12},w^7,w^2,w^{23},w^{18},w^{13},w^8,w^3,w^{24},w^{19})\longleftrightarrow w+w^{22}x+w^{17}x^2+w^{12}x^3+w^7x^4+w^2x^5+w^{23}x^6+w^{18}x^7+w^{13}x^8+w^8x^{9}+w^3x^{10}+w^{24}x^{11}+w^{19}x^{12}=\mu(x)$.
    \item Solve the key equation:
    $\Gamma(x)\mu(x)\cong P(x) \mod x^n-\lambda$.\\
    Step $\mathrm{EEAP}~0:$ $(x^n-w^{13})0+\mu(x)1=\mu(x)$\\
    Step $\mathrm{EEAP}~1:$ $(x^n-w^{13})=\mu(x)(w^7x+w^{25})$. \\
    Now, we get $\Gamma(x)=w^7x+w^{25}$ and $P(x)=0(x)$, and $(x^n-w^{13})1-\mu(x)\Gamma(x)=-P(x)$.
    \item $m(x)=\frac{P(x)}{\Gamma(x)}=0(x)$
    $\longleftrightarrow (0,0,0,0,0,0,0,0,0,0,0,0,0)=m.$
    \item $D=\mathrm{FFFT}(m(x))=(0,0,0,0,0,0,0,0,0,0,0,0,0).$
\end{enumerate}
Hence, the transmitted vector was the zero vector.
 \end{example}
 \begin{example}\label{exam3}
Let $\mathbb{F}_{9}$ be a field of order $9$ where $\mathbb{F}_{9}=\mathbb{F}_3(w)$ and $w^2+2w+2=0$. Let $n=4$, $\beta=-w$, $\lambda=w^{4}$ and $\xi=w^2$. Let $C$ be a constacyclic BCH code of blocklength $4$ over $\mathbb{F}_{3}$, the spectrum of this code lies in the extension field $\mathbb{F}_{9}$. In $\mathbb{F}_{9}[x]$, we have
\begin{align*}
    x^{4}-w^{4}=& (x+w)(x+w^3)(x+w^5)(x+w^7).
\end{align*} Let the received vector be $R=(0,1,1,1)$.
\subsection*{Computational evaluations of the spectral decoding algorithm:}
\allowdisplaybreaks
\begin{enumerate}
    \item\begin{align*}
       \mu=\mathrm{IFFFT}(R)=&(0,1,1,1)\times \begin{pmatrix}
        \beta^{-0}&\beta^{-1}&\cdots &\beta^{-3}\\
        (\beta^{-1}\xi^{-1})^0& (\beta^{-1}\xi^{-1})^1&\cdots&(\beta^{-1}\xi^{-1})^{3}\\
        \vdots & \vdots& \cdots& \vdots\\
         (\beta^{-1}\xi^{-3})^0& (\beta^{-1}\xi^{-3})^1&\cdots&(\beta^{-1}\xi^{-3})^{3}
    \end{pmatrix}\\=&
    (0,1,1,1)
    \times\begin{pmatrix}
       1&w^3&w^6&w\\
       1&w&w^2&w^3\\
       1&w^7&w^6&w^5\\
       1&w^5&w^2&w^7\\
    \end{pmatrix}\\=&(0,w^7,w^2,w^5)\longleftrightarrow w^7x+w^2x^2+w^5x^3=\mu(x).
    \end{align*}
    \item Solve the key equation:
    $\Gamma(x)\mu(x)\cong P(x) \mod x^n-\lambda$.\\
    Step $\mathrm{EEAP}~0:$ $(x^4-w^{4})0+\mu(x)1=\mu(x)$\\
    Step $\mathrm{EEAP}~1:$ $(x^4-w^{4})=\mu(x)(w^3x+2)+w^7x+1$.\\ Now, we get $\Gamma(x)=w^3x+2$ and $P(x)=-(w^7x+1)=w^3x+2$, and $(x^4-w^{4})1-\mu(x)\Gamma(x)=-P(x)$.
    \item $m(x)=\frac{P(x)}{\Gamma(x)}=1$
    $\longleftrightarrow (1,0,0,0)=m.$
    \item \begin{align*}
        D=\mathrm{FFFT}(m(x))=&(1,0,0,0)\times 
    \begin{pmatrix}
        \beta^{0}&(\beta\xi^{1})^0&\cdots &(\beta\xi^{3})^0\\
         \beta^{1}&(\beta\xi^{1})^1&\cdots &(\beta\xi^{3})^1\\
        \vdots & \vdots& \cdots& \vdots\\
          \beta^{3}&(\beta\xi^{1})^3&\cdots &(\beta\xi^{3})^3\\
    \end{pmatrix}\\=&
    (1,0,0,0)
    \times\begin{pmatrix}
       1&1&1&1\\
       -w&-w^3&-w^5&-w^7\\
       w^2&w^6&w^2&w^6\\
       -w^3&-w&-w^7&-w^5\\
    \end{pmatrix}\\=&(1,1,1,1).
    \end{align*}
\end{enumerate}
Hence, the transmitted vector was $(1,1,1,1)$.
 \end{example}
 \section{Quantum BCH code in the spectral domain}\label{sec4}
In this section, we first briefly review the mathematical representation of the quantum states, the operators acting on these states, and then we define quantum constacyclic BCH codes in the spectral domain. Throughout the paper, we consider a finite field $\mathbb{F}_q$ of order $q=p^s$ and its extension field of order $q^m=p^{ms}=p^{k'}$.
\subsection{Quantum states and operators over qudits}
\label{sub41}
 For a quantum system with $\zeta$ levels, the state of a unit system, a qudit, is a superposition of $\zeta$ basis states of the system given by  
 \begin{align*}
\ket{\psi}_\zeta &= \sum_{i=0}^{\zeta-1} a_i\ket{i}_\zeta, ~~ \text{ where } a_i\in \mathbb{C} \text{ and } \sum_{i=0}^{\zeta-1} |a_i|^2 = 1, 
 \end{align*}
where the subscript $\zeta$ refers to dimension of the unit quantum system.  Also, $\ket{\psi}_\zeta=[a_0~a_1~\dots ~a_{\zeta-1}]^{\mathrm{T}}$ and  $\ket{i}_\zeta = \mathbf{e}_{{(i+1)}}^{{(\zeta)}}$, where $\mathbf{e}_{{(i+1)}}^{{(\zeta)}}$ is a vector in $\mathbb{C}^\zeta$ with the $(i+1)^{\mathrm{st}}$ element being $1$ and rest of the elements being $0$.

From the second postulate of quantum mechanics, the operators acting on a quantum system belong to the unitary group $\mathrm{U}(\zeta)$, which is a subset of $\mathbb{C}^{\zeta \times \zeta}$. As the cardinality of $\mathrm{U}(\zeta)$ is infinite, we represent its elements in terms of a basis of $\mathbb{C}^{\zeta \times \zeta}$.

The generalized version of the Pauli group for arbitrary $\zeta$, known as the Weyl-Heisenberg group, is defined by
\begin{align}  
 \mathcal{G}_{\zeta}^{(g)} = \left\{\omega_\zeta^l\mathrm{X}_\zeta(a)\mathrm{Z}_\zeta(b)|a,b,l \in\mathbb{Z}_\zeta\right\},\nonumber
\end{align}
 where $\omega_\zeta \!=\! \mathrm{e}^{\!\frac{\mathrm{i}2\pi}{\zeta}}$, $\mathrm{X}_\zeta(a)\!\ket{c}_\zeta \!:=\!\! \ket{(a\!+\!c)~\mathrm{mod}~\zeta}_\zeta$, and $\mathrm{Z}_\zeta(b)\!\ket{c}_\zeta := \omega_\zeta^{bc}\!\ket{c}_\zeta$ for every $c \in \mathbb{Z}_\zeta$.
 The generalized Pauli basis $\mathcal{P}$ \cite{mielnik1968geometry}
  \begin{align}
 \mathcal{G}_\zeta = \left\{\mathrm{X}_\zeta(a)\mathrm{Z}_\zeta(b)|a,b \in\mathbb{Z}_\zeta\right\}.\label{eqn:Gen_Pauli_Basis}
\end{align}
 is obtained by neglecting the phase $\omega_\zeta^l$ in $\mathcal{G}_\zeta$. The basis operator of the form $\mathrm{X}_\zeta(a)\mathrm{Z}_\zeta(b)$ is uniquely represented by a vector of length $2$ defined over ring $\mathbb{Z}_\zeta$, namely $[a|b]_\zeta$ as follows:
\begin{align*}
\mathrm{X}_\zeta(a)\mathrm{Z}_\zeta(b) \equiv  [a|b]_\zeta
\end{align*}

Next, we define a trace operation over the field elements as follows: 
\begin{definition}[\cite{ketkar2006nonbinary}]
The field trace $\mathrm{Tr}_{p^{k'}/p}(\cdot)$ is an $\mathbb{F}_p$-linear function $\mathrm{Tr}_{p^{k'}/p}: \mathbb{F}_{p^{k'}} \rightarrow \mathbb{F}_{p}$, given by $\mathrm{Tr}_{p^{k'}/p}(\alpha)=\sum_{i=0}^{k'-1}\alpha^{p^i}$, where $\alpha \in \mathbb{F}_{p^{k'}}$.
\end{definition}
The function $\mathrm{Tr}_{p^{k'}/p}(\cdot)$ is said to be $\mathbb{F}_p$-linear as $\mathrm{Tr}_{p^{k'}/p}(a\alpha+b\gamma) = a\hspace{0.07cm}\mathrm{Tr}_{p^{k'}/p}(\alpha)+b\hspace{0.07cm}\mathrm{Tr}_{p^{k'}/p}(\gamma)$, for all $a,b \in \mathbb{F}_p$ and $\alpha, \gamma \in \mathbb{F}_{p^{k'}}$. We note that for an element $b \in \mathbb{F}_p$, $\mathrm{Tr}_{p^{k'}/p}(b) = b$.

The group that generates the operator basis for $\mathbb{C}^{p^{k'} \times p^{k'}}$ defined in terms of the field based representation of basis states is \cite{lidar2013quantum}
 \begin{align}
 \mathcal{G}_{p^{k'}}^{(g)} \!=\! \begin{cases} \left\{\omega^l\mathrm{X}^{(p^{k'})}(\alpha)\mathrm{Z}^{(p^{k'})}(\gamma)\bigg|\alpha,\!\gamma \in \mathbb{F}_{p^{k'}} \text{ and }l \!\in\!\mathbb{Z}_p\right\}, & \text{when characteristic $p$ is odd},\\
 \left\{\mathrm{i}^{g}\omega^l\mathrm{X}^{(p^{k'})}(\alpha)\mathrm{Z}^{(p^{k'})}(\gamma)\bigg|\alpha,\!\gamma \in \mathbb{F}_{p^{k'}} \text{ and }g,l \!\in\!\mathbb{Z}_p\right\}, & \text{when characteristic $p$ is even},\\
 \end{cases} \nonumber
\end{align}
where $\omega = \mathrm{e}^{\frac{\mathrm{i}2\pi}{p}}$, $\mathrm{i}=\sqrt{-1}$,
\begin{align}
\mathrm{X}^{(p^{k'})}(\alpha)\ket{\theta}_{p^{k'}} &:= \ket{\alpha+\theta}_{p^{k'}},~~~~~\forall \theta \in \mathbb{F}_{p^{k'}},\label{eqn:X_Definition_Qudit}\\
\mathrm{Z}^{(p^{k'})}(\gamma)\ket{\theta}_{p^{k'}} &:= \omega^{\mathrm{Tr}_{p^{k'}/p}(\gamma\theta)}\ket{\theta}_{p^{k'}},~~~\forall \theta \in \mathbb{F}_{p^{k'}}. \label{eqn:Z_Definition_Qudit}
\end{align}  
We note that the factor $\mathrm{i}^{g}$ is included in the basis $\mathcal{G}_{p^{k'}}$ when the characteristic $p$ is even as $\mathrm{iI}$ belongs to $\mathcal{P}$ and $\mathcal{P} = \mathcal{G}_{p^{k'}}$ for $p=2$ and $k'=1$.

The operator basis for $\mathbb{C}^{p^{k'} \times p^{k'}}$ is
\begin{align}
 \mathcal{G}_{p^{k'}} \!=\! \left\{\mathrm{X}^{(p^{k'})}(\alpha)\mathrm{Z}^{(p^{k'})}(\gamma)\bigg|\alpha,\!\gamma \in \mathbb{F}_{p^{k'}}\right\}, \label{eqn:FieldErrorBasis}
\end{align}

 From equations \eqref{eqn:X_Definition_Qudit} and \eqref{eqn:Z_Definition_Qudit}, $\mathrm{X}^{(p^{k'})}(\alpha)$ and $\mathrm{Z}^{(p^{k'})}(\gamma)$ are given by
 \begin{align}
 \mathrm{X}^{(p^{k'})}(\alpha) &= \underset{\theta \in \mathbb{F}_{p^{k'}}}{\sum}\ket{\alpha + \theta}\bra{\theta}\label{eqn:X_Expression_Qudit},\\
  \mathrm{Z}^{(p^{k'})}(\gamma) &= \underset{\theta \in \mathbb{F}_{p^{k'}}}{\sum}\omega^{\mathrm{Tr}_{p^{k'}/p}(\gamma\theta)}\ket{\theta}\bra{\theta}.\label{eqn:Z_Expression_Qudit}
 \end{align}
 The basis operator of the form $\mathrm{X}^{(p^{k'})}(\alpha)\mathrm{Z}^{(p^{k'})}(\gamma)$ is uniquely represented by a vector of length $2$ defined over field $\mathbb{F}_{p^{k'}}$, namely $[\alpha|\gamma]_{p^{k'}}$.
\begin{align*}
\mathrm{X}^{(p^{k'})}(\alpha)\mathrm{Z}^{(p^{k'})}(\gamma) \equiv  [\alpha|\gamma]_{p^{k'}}
\end{align*} 

Calderbank, Shor, and Steane \cite{calderbank1996good} \cite{steane1996error} proposed a framework to construct quantum error correction codes over qubits from two classical binary codes $C_1$ and $C_2$ that satisfy $C_1^{\perp} \subset C_2$. This class of codes are called the CSS codes. The condition $C_1^{\perp} \subset C_2$ is called the \textit{dual-containing condition} of CSS codes. 
By considering $C_1$ and $C_2$ to be the same code, i.e., $C_1=C_2$, we can construct quantum codes from dual-containing classical codes as $C_1^{\perp} \subset C_2 = C_1$.

The CSS codes form a class of stabilizer codes. Let $H_1$ and $H_2$ be the parity check matrices of the classical codes $C_1[n,k_1,d_1]$ and $C_2[n,k_2,d_2]$, respectively. As $C_1^{\perp} \subset C_2$, the elements of $C_1^{\perp}$ are codewords of $C_2$; hence, $H_2H_1^{\mathrm{T}} = {0}$. Now, using the CSS construction over qudits, we have the following result
\begin{theorem}[\cite{ketkar2006nonbinary}]
    Let $\mathcal{C}_1$ and $\mathcal{C}_2$  be two classical codes over finite field $\mathbb{F}_q$ having parameters $[n, k_1, d_1]$ and $[n, k_2, d_2]$ with
parity check matrices $H_1$ and $H_2$, respectively, with $\mathcal{C}_1^{\perp}\subset \mathcal{C}_2$. Then, there exists an
$[[n, k_1 + k_2 - n, d\geq \mathrm{min}(d_1, d_2)]]$ quantum stabilizer code.
\end{theorem}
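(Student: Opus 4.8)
The plan is to establish the CSS construction over qudits by exhibiting an explicit stabilizer group and computing the parameters of the code it stabilizes. First I would form the candidate stabilizer: using the parity check matrices $H_1$ of $\mathcal{C}_1$ and $H_2$ of $\mathcal{C}_2$, define the abelian subgroup $S$ of the Weyl--Heisenberg group $\mathcal{G}_{q}^{(g)}$ generated by the pure-$\mathrm{X}$ operators indexed by the rows of $H_2$ and the pure-$\mathrm{Z}$ operators indexed by the rows of $H_1$. In the symplectic $[\alpha|\gamma]$ representation from equations~\eqref{eqn:X_Expression_Qudit}--\eqref{eqn:Z_Expression_Qudit}, these generators correspond to the rows of the block matrices $[\,H_2\,|\,0\,]$ and $[\,0\,|\,H_1\,]$. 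The key point is that $S$ is abelian: two operators $[\alpha_1|\gamma_1]$ and $[\alpha_2|\gamma_2]$ commute up to the symplectic form $\mathrm{Tr}_{q/p}(\alpha_1\gamma_2^{\mathrm{T}} - \alpha_2\gamma_1^{\mathrm{T}})$, and since one family is purely $\mathrm{X}$-type and the other purely $\mathrm{Z}$-type, the cross terms reduce to $\mathrm{Tr}_{q/p}$ applied to entries of $H_2 H_1^{\mathrm{T}}$, which vanishes by the dual-containing hypothesis $\mathcal{C}_1^{\perp}\subset\mathcal{C}_2$ (equivalently $H_2 H_1^{\mathrm{T}} = 0$, as noted in the paragraph preceding the theorem); the $\mathrm{X}$--$\mathrm{X}$ and $\mathrm{Z}$--$\mathrm{Z}$ pairs commute trivially. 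Hence $S$ is a legitimate stabilizer and $-I \notin S$, so it defines a nonzero code space $Q$.

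Next I would count dimensions. The number of independent generators is $\mathrm{rank}(H_1) + \mathrm{rank}(H_2) = (n-k_1) + (n-k_2)$, so the encoded dimension is $k = n - \big((n-k_1)+(n-k_2)\big) = k_1 + k_2 - n$, giving a $[[n,\,k_1+k_2-n,\,d]]_q$ code. For the distance, I would invoke the standard characterization of the minimum distance of a stabilizer code: $d$ is the minimum symplectic weight of an element in $S^{\perp}\setminus S$ (the normalizer modulo the stabilizer). An element of $S^{\perp}$ in symplectic form $[\,a\,|\,b\,]$ must satisfy $H_2 a^{\mathrm{T}} = 0$ and $H_1 b^{\mathrm{T}} = 0$, i.e. $a \in \mathcal{C}_2$ and $b \in \mathcal{C}_1$. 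A nontrivial logical operator is such a pair with at least one of $a \notin \mathcal{C}_1^{\perp}$ or $b \notin \mathcal{C}_2^{\perp}$. Since the symplectic weight of $[\,a\,|\,b\,]$ is at least $\max(\mathrm{wt}(a), \mathrm{wt}(b))$, and whichever of $a,b$ is the ``nontrivial'' coset representative is a nonzero codeword of $\mathcal{C}_2$ (resp. $\mathcal{C}_1$) — hence has weight at least $d_2$ (resp. $d_1$) — we conclude $d \geq \min(d_1,d_2)$.

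The main obstacle I expect is making the distance bound fully rigorous in the qudit setting: one must be careful that the symplectic-weight lower bound $\mathrm{sympwt}([\,a\,|\,b\,]) \geq \max(\mathrm{wt}(a),\mathrm{wt}(b))$ is applied to a representative that is genuinely a \emph{non-stabilizer} element of the normalizer, so that its weight really is bounded below by a \emph{nonzero} codeword weight of one of the two classical codes — the subtlety being the case analysis on whether $a \notin \mathcal{C}_1^{\perp}$ or $b \notin \mathcal{C}_2^{\perp}$ (at least one must hold for a logical operator, but not necessarily both). A clean way around this is simply to cite the general nonbinary CSS/stabilizer correspondence already available in \cite{ketkar2006nonbinary,ashikhmin2001nonbinary}, observe that $H_2 H_1^{\mathrm{T}} = 0$ is exactly the hypothesis needed to apply it, and then read off $[[n, k_1+k_2-n, d \geq \min(d_1,d_2)]]_q$. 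I would present the self-contained symplectic argument as the primary proof and remark that it specializes to the qubit CSS construction of \cite{calderbank1996good,steane1996error} when $q=2$.
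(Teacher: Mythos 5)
The paper offers no proof of this statement: it is imported verbatim from Ketkar et al.~\cite{ketkar2006nonbinary} as a known result, and the only surrounding in-paper content is the observation that $\mathcal{C}_1^{\perp}\subset\mathcal{C}_2$ forces $H_2H_1^{\mathrm{T}}=0$, which is then used to build the check matrix of Theorem~\ref{CheckCSS}. So there is no in-paper argument to compare against; your self-contained symplectic proof is the standard nonbinary CSS construction and is essentially sound. The commutativity of the two generator families does reduce to the vanishing of $H_2H_1^{\mathrm{T}}$ (and the same orthogonality kills the $\omega^{\mathrm{Tr}(\cdot)}$ phases in products of generators, so $-I\notin S$), the dimension count $n-\bigl((n-k_1)+(n-k_2)\bigr)=k_1+k_2-n$ is correct, and your distance argument correctly isolates the one real subtlety: a logical representative need only have \emph{one} of its two components outside the relevant dual, but that component remains nonzero across the entire stabilizer coset, so its Hamming weight --- and hence the symplectic weight --- is bounded below by $d_1$ or $d_2$. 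Two small bookkeeping points, neither of which is a gap. First, with $\mathrm{X}$-generators drawn from the rows of $H_2$ and $\mathrm{Z}$-generators from the rows of $H_1$, the normalizer condition actually reads $H_1a^{\mathrm{T}}=0$ and $H_2b^{\mathrm{T}}=0$, i.e.\ $a\in\mathcal{C}_1$ and $b\in\mathcal{C}_2$ (and nontriviality is $a\notin\mathcal{C}_2^{\perp}$ or $b\notin\mathcal{C}_1^{\perp}$); you have the roles of the two codes transposed, but since the final bound $\min(d_1,d_2)$ is symmetric this does not affect the conclusion. Second, over $\mathbb{F}_{p^{s}}$ with $s>1$ each $\mathbb{F}_q$-row of $H_i$ contributes $s$ stabilizer generators over $\mathbb{F}_p$ (exactly the $\xi^{i}H$ blocks the paper writes down in Theorem~\ref{CheckCSS}); your rank count over $\mathbb{F}_q$ still yields the stated qudit dimension, but a fully rigorous write-up should say so explicitly or, as you suggest, simply defer to the general nonbinary stabilizer correspondence of \cite{ketkar2006nonbinary,ashikhmin2001nonbinary}.
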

 The following result is helpful towards obtaining the check matrix of the CSS code.
\begin{lemma} [\cite{nadkarni2021entanglement}]\label{lem:Trace_FieldElementProdBasis_Zero}
For $\gamma \in \mathbb{F}_{p^{k'}}$, $\mathrm{Tr}_{p^{k'}/p}(\xi^i \gamma) = 0$ $\forall i \in \{0,1,\dots ,k'-1\}$ if and only if $\gamma = 0$. 
\end{lemma}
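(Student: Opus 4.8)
The plan is to view the pairing $(\alpha,\gamma)\mapsto \mathrm{Tr}_{p^{k'}/p}(\alpha\gamma)$ as a non-degenerate $\mathbb{F}_p$-bilinear form on $\mathbb{F}_{p^{k'}}$ and to combine this with the $\mathbb{F}_p$-linearity of $\mathrm{Tr}_{p^{k'}/p}(\cdot)$ (noted immediately after its definition in the excerpt) and with the fact that $\{1,\xi,\dots,\xi^{k'-1}\}$ is an $\mathbb{F}_p$-basis of $\mathbb{F}_{p^{k'}}$ — the latter holding because $\xi$ is a primitive element, whose minimal polynomial over $\mathbb{F}_p$ therefore has degree exactly $k'$.

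The ``only if'' direction is immediate: if $\gamma=0$ then $\mathrm{Tr}_{p^{k'}/p}(\xi^i\gamma)=\mathrm{Tr}_{p^{k'}/p}(0)=0$ for every $i$. For the ``if'' direction, I would assume $\mathrm{Tr}_{p^{k'}/p}(\xi^i\gamma)=0$ for all $i\in\{0,\dots,k'-1\}$, write an arbitrary $\beta\in\mathbb{F}_{p^{k'}}$ as $\beta=\sum_{i=0}^{k'-1}b_i\xi^i$ with $b_i\in\mathbb{F}_p$, and use $\mathbb{F}_p$-linearity to get $\mathrm{Tr}_{p^{k'}/p}(\beta\gamma)=\sum_{i=0}^{k'-1}b_i\,\mathrm{Tr}_{p^{k'}/p}(\xi^i\gamma)=0$; that is, $\mathrm{Tr}_{p^{k'}/p}(\beta\gamma)=0$ for every $\beta\in\mathbb{F}_{p^{k'}}$. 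If $\gamma$ were nonzero, then $\beta\mapsto\beta\gamma$ would be a bijection of $\mathbb{F}_{p^{k'}}$, forcing $\mathrm{Tr}_{p^{k'}/p}$ to vanish identically on $\mathbb{F}_{p^{k'}}$; but $\mathrm{Tr}_{p^{k'}/p}(x)=x+x^p+\dots+x^{p^{k'-1}}$ is a nonzero polynomial of degree $p^{k'-1}<p^{k'}=\lvert\mathbb{F}_{p^{k'}}\rvert$, so it cannot have every field element as a root — a contradiction. Hence $\gamma=0$.

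The one step that is more than bookkeeping is this last one: the non-degeneracy of the trace form, equivalently the non-vanishing (surjectivity onto $\mathbb{F}_p$) of $\mathrm{Tr}_{p^{k'}/p}$. I would settle it by the degree count just sketched, or alternatively by invoking the standard fact that finite extensions of finite fields are separable so that the trace is surjective. Everything else — the reduction to ``$\mathrm{Tr}_{p^{k'}/p}(\beta\gamma)=0$ for all $\beta$'' via linearity, and the basis property of $1,\xi,\dots,\xi^{k'-1}$ coming from primitivity of $\xi$ — is routine and I would state it without belaboring the details.
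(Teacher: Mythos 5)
Your proof is correct. Note, however, that the paper does not actually prove this lemma: it is quoted verbatim from the cited reference (Nadkarni and Garani) and used as a black box, so there is no in-paper argument to compare against. Your argument is the standard one and is complete: the only nontrivial ingredients are (i) that $\{1,\xi,\dots,\xi^{k'-1}\}$ is an $\mathbb{F}_p$-basis of $\mathbb{F}_{p^{k'}}$, which you correctly justify from primitivity of $\xi$, and (ii) the non-degeneracy of the trace form, which you settle by the root-counting bound $\deg\bigl(x+x^p+\cdots+x^{p^{k'-1}}\bigr)=p^{k'-1}<p^{k'}$. Both steps are sound. One cosmetic remark: you reuse the symbol $\beta$ for a generic field element, whereas the paper reserves $\beta$ for the fixed constant with $\lambda=\beta^n$; renaming it would avoid confusion if this were inserted into the paper.
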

\begin{lemma}[\cite{nadkarni2021entanglement}]\label{lem3}
   For $\gamma\in \mathbb{F}_{p^{k'}}$, $\underset{\theta\in \mathbb{F}_{p^{k'}}}\sum\omega^{\mathrm{Tr}_{p^{k'}/p}(\gamma\theta)}=\delta_{\gamma,0}p^{k'}$. 
\end{lemma}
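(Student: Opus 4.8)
The plan is to read this as a standard additive-character sum over the finite field $\mathbb{F}_{p^{k'}}$ and split into the two cases dictated by the Kronecker delta. The case $\gamma = 0$ is immediate: since $\mathrm{Tr}_{p^{k'}/p}(0) = 0$, every summand equals $\omega^{0} = 1$, and there are exactly $|\mathbb{F}_{p^{k'}}| = p^{k'}$ of them, so the sum is $p^{k'} = \delta_{0,0}\,p^{k'}$.

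For $\gamma \neq 0$, I would first show that the map $L_\gamma\colon \mathbb{F}_{p^{k'}} \to \mathbb{F}_p$ given by $L_\gamma(\theta) = \mathrm{Tr}_{p^{k'}/p}(\gamma\theta)$ is a \emph{surjective} $\mathbb{F}_p$-linear functional. Linearity is exactly the $\mathbb{F}_p$-linearity of the field trace recorded just after its definition. For surjectivity, note $\mathrm{Im}(L_\gamma)$ is an $\mathbb{F}_p$-subspace of the one-dimensional space $\mathbb{F}_p$, hence either $\{0\}$ or all of $\mathbb{F}_p$; and it cannot be $\{0\}$ because, by Lemma~\ref{lem:Trace_FieldElementProdBasis_Zero}, $\gamma \neq 0$ forces $\mathrm{Tr}_{p^{k'}/p}(\xi^i\gamma) \neq 0$ for some $i \in \{0,\dots,k'-1\}$, i.e. $L_\gamma$ is not the zero map. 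Therefore $L_\gamma$ is onto $\mathbb{F}_p$, and being linear its fibres all have the same cardinality $p^{k'}/p = p^{k'-1}$.

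With this counting in hand I would reindex the sum by the value $c = L_\gamma(\theta)$:
\[ \sum_{\theta \in \mathbb{F}_{p^{k'}}} \omega^{\mathrm{Tr}_{p^{k'}/p}(\gamma\theta)} = \sum_{c \in \mathbb{F}_p} |L_\gamma^{-1}(c)|\,\omega^{c} = p^{k'-1}\sum_{c=0}^{p-1}\omega^{c} = 0, \]
where the last equality holds because $\omega = \mathrm{e}^{\mathrm{i}2\pi/p}$ is a primitive $p$-th root of unity with $p \geq 2$, so $\sum_{c=0}^{p-1}\omega^{c} = (\omega^{p}-1)/(\omega-1) = 0$. This matches $\delta_{\gamma,0}\,p^{k'} = 0$, and combining the two cases proves the lemma.

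The calculation is essentially routine; the only point carrying real content is the surjectivity of $L_\gamma$ for $\gamma \neq 0$, i.e. the non-degeneracy of the trace pairing, which is precisely what Lemma~\ref{lem:Trace_FieldElementProdBasis_Zero} supplies — so I would lean on that rather than reprove it. An equivalent packaging would be to say that $\theta \mapsto \omega^{\mathrm{Tr}_{p^{k'}/p}(\gamma\theta)}$ is an additive character of the group $(\mathbb{F}_{p^{k'}},+)$, trivial exactly when $\gamma = 0$, and then invoke the orthogonality relation that a non-trivial character sums to zero over the group while the trivial one sums to the group order; this merely hides the same fibre-counting inside standard character theory, and I would mention it as a remark rather than as the main argument.
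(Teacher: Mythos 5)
Your proof is correct. Note that the paper itself offers no proof of this lemma to compare against: it is imported verbatim from \cite{nadkarni2021entanglement} as a cited result. Taken on its own terms, your argument is the standard orthogonality relation for additive characters, and every step checks out: the case $\gamma=0$ is immediate; for $\gamma\neq 0$ you correctly invoke Lemma \ref{lem:Trace_FieldElementProdBasis_Zero} to conclude that $L_\gamma(\theta)=\mathrm{Tr}_{p^{k'}/p}(\gamma\theta)$ is a nonzero, hence surjective, $\mathbb{F}_p$-linear functional whose fibres are cosets of a $(k'-1)$-dimensional kernel and so all have size $p^{k'-1}$; and the geometric-series identity $\sum_{c=0}^{p-1}\omega^c=0$ for the primitive $p$-th root of unity $\omega$ finishes the computation. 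Leaning on the stated non-degeneracy lemma rather than reproving it is the right call, since that lemma is exactly the ingredient the paper makes available for this purpose.
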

By using Lemma \ref{lem:Trace_FieldElementProdBasis_Zero}, Nadkarni and Garani \cite{nadkarni2021entanglement} proposed the check matrix of the CSS code which is given below.
\begin{theorem} (\cite{nadkarni2021entanglement})\label{CheckCSS}
 The check matrix for the CSS code obtained from $\mathcal{C}_1$ and $\mathcal{C}_2$ that satisfy $\mathcal{C}_1^{\perp}\subset \mathcal{C}_2$, whose basis codewords are provided in Definition \ref{def5}, is given by,\vspace{-0.15cm}
 \begin{align*}
\mathcal{H}_{\mathrm{CSS}}^{(p^{k'})} =  \left[\begin{array}{c|c}
\begin{matrix}
H_{d_1}\\\xi H_{d_1} \\ \vdots \\ \xi^{k'-1}H_{d_1}
\end{matrix} & \mathbf{0}\\
\mathbf{0} & \begin{matrix}
H_{d_2}\\\xi H_{d_2} \\ \vdots \\ \xi^{k'-1}H_{d_2}
\end{matrix}
\end{array}\right],
 \end{align*}
 where $\xi$ is the primitive element of $\mathbb{F}_{p^{k'}}$. 
\end{theorem}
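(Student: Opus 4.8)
The plan is to verify directly that the matrix $\mathcal{H}_{\mathrm{CSS}}^{(p^{k'})}$ annihilates every stabilizer generator and, conversely, that any operator commuting with all generators of the stabilizer lies in the rowspace of this matrix. I would begin by recalling the symplectic/field representation of the Pauli group over qudits from equations \eqref{eqn:X_Definition_Qudit}--\eqref{eqn:Z_Expression_Qudit}: a Pauli operator on $n$ qudits corresponds to a pair $(\mathbf{a}\,|\,\mathbf{b})$ with $\mathbf{a},\mathbf{b}\in\mathbb{F}_{p^{k'}}^n$, and two such operators commute iff their trace-symplectic inner product $\mathrm{Tr}_{p^{k'}/p}(\mathbf{a}\cdot\mathbf{b}' - \mathbf{a}'\cdot\mathbf{b})$ vanishes. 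The CSS construction from $\mathcal{C}_1,\mathcal{C}_2$ with $\mathcal{C}_1^\perp\subset\mathcal{C}_2$ uses pure-$Z$ generators indexed by rows of $H_1$ and pure-$X$ generators indexed by rows of $H_2$; the key subtlety for the qudit field case is that the $\mathbb{Z}_p$-linear span one needs is spanned not by the rows $\mathbf{h}$ of $H_i$ over $\mathbb{F}_{p^{k'}}$ alone, but by all $\mathbb{F}_p$-multiples, equivalently by the $k'$ rows $\mathbf{h},\xi\mathbf{h},\dots,\xi^{k'-1}\mathbf{h}$, since $\{1,\xi,\dots,\xi^{k'-1}\}$ is an $\mathbb{F}_p$-basis of $\mathbb{F}_{p^{k'}}$. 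This explains the block structure with the rows scaled by powers of $\xi$.

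Next I would carry out the commutation check. For the pure-$Z$ block, a generator corresponds to $(\mathbf{0}\,|\,\xi^a\mathbf{h}_i)$ with $\mathbf{h}_i$ a row of $H_1$; for the pure-$X$ block, a generator corresponds to $(\xi^a\mathbf{g}_j\,|\,\mathbf{0})$ with $\mathbf{g}_j$ a row of $H_2$. Their trace-symplectic product is $\mathrm{Tr}_{p^{k'}/p}\big(\xi^{a}\xi^{a'}(\mathbf{g}_j\cdot\mathbf{h}_i)\big)$, and the hypothesis $H_2H_1^{\mathrm{T}}=0$ (which follows from $\mathcal{C}_1^\perp\subset\mathcal{C}_2$, as noted in the excerpt right before Theorem~\ref{CheckCSS}) gives $\mathbf{g}_j\cdot\mathbf{h}_i=0$ in $\mathbb{F}_{p^{k'}}$, hence the trace is zero. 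Generators within the same block commute trivially since pure-$X$ operators commute among themselves and likewise for pure-$Z$. This shows $\mathcal{H}_{\mathrm{CSS}}^{(p^{k'})}$ indeed encodes a valid abelian stabilizer, and its rows are $\mathbb{F}_p$-independent because $H_1,H_2$ have independent rows over $\mathbb{F}_{p^{k'}}$ and $\{\xi^a\}$ is an $\mathbb{F}_p$-basis; counting, the stabilizer has $k'(n-k_1)+k'(n-k_2)$ independent generators over $\mathbb{Z}_p$, i.e.\ $p^{k'}$-ary dimension $k_1+k_2-n$, matching the previous theorem.

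For the converse containment — that every Pauli operator fixing the code (equivalently, in the centralizer that acts as the logical identity) lies in this rowspace — I would use Lemma~\ref{lem3}: the projector onto the codespace is the normalized sum over the stabilizer group of the corresponding operators, and $\sum_{\theta}\omega^{\mathrm{Tr}_{p^{k'}/p}(\gamma\theta)} = \delta_{\gamma,0}p^{k'}$ forces the stabilizer to be exactly the set of $(\mathbf{a}\,|\,\mathbf{b})$ with $\mathbf{b}$ in the $\mathbb{F}_p$-rowspace of $H_1$ and $\mathbf{a}$ in the $\mathbb{F}_p$-rowspace of $H_2$ (the pure-$Z$ part is pinned by the $Z$-type generators and the pure-$X$ part dually, using Lemma~\ref{lem:Trace_FieldElementProdBasis_Zero} to rule out spurious components). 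The rowspace description is then precisely the block matrix displayed, completing the identification of $\mathcal{H}_{\mathrm{CSS}}^{(p^{k'})}$ as the check matrix.

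The main obstacle I anticipate is bookkeeping the passage between the $\mathbb{F}_{p^{k'}}$-linear and $\mathbb{F}_p$-linear structures cleanly: one must argue that the $\mathbb{Z}_p$-span of the field-valued stabilizer generators is faithfully represented by the $\xi^a$-scaled rows and that no cross terms between the $X$- and $Z$-blocks arise, which is where $H_2H_1^{\mathrm{T}}=0$ together with $\mathbb{F}_p$-linearity of $\mathrm{Tr}_{p^{k'}/p}$ does the real work. Once that translation is set up, the remaining steps are the routine symplectic-orthogonality and dimension-count verifications sketched above.
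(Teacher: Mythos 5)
You should first note that the paper does not prove this statement at all: Theorem~\ref{CheckCSS} is imported verbatim from \cite{nadkarni2021entanglement}, and the surrounding text only says that Nadkarni and Garani ``proposed the check matrix'' using Lemma~\ref{lem:Trace_FieldElementProdBasis_Zero}. There is therefore no in-paper proof to compare against; what follows is an assessment of your argument on its own terms. Your sketch contains the right ingredients and would go through: the trace-symplectic commutation criterion for operators of the form $[\boldsymbol{\alpha}|\boldsymbol{\gamma}]_{p^{k'}}$, the observation that the multiplicative group generated by a single field-valued row only reaches its $\mathbb{Z}_p$-multiples (each generator has additive order $p$), so that the $k'$ scaled copies $\mathbf{h},\xi\mathbf{h},\dots,\xi^{k'-1}\mathbf{h}$ are exactly what is needed to capture the full $\mathbb{F}_{p^{k'}}$-span over $\mathbb{F}_p$, the use of $H_2H_1^{\mathrm{T}}=\mathbf{0}$ to kill the cross-block symplectic products (correctly noting that the field product vanishes, which is stronger than the trace vanishing), and the dimension count $k'(2n-k_1-k_2)$ matching an $[[n,k_1+k_2-n]]$ code.

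Two points deserve tightening. First, your opening formulation --- that the matrix ``annihilates every stabilizer generator'' and that ``any operator commuting with all generators lies in the rowspace'' --- conflates the stabilizer with its normalizer; the normalizer is strictly larger whenever the code encodes at least one qudit. You correct this in the third paragraph (restricting to operators that act as the identity on the codespace), but the argument should be stated that way from the outset. Second, since the theorem explicitly ties the check matrix to the basis codewords of Definition~\ref{def5}, the most direct and essentially one-line step is missing: for a $\mathrm{Z}$-type row $\xi^{a}\mathbf{h}$ acting on $\ket{c+v_j}$ with $c+v_j\in\mathcal{C}_2$, the phase is $\omega^{\mathrm{Tr}_{p^{k'}/p}(\xi^{a}\,\mathbf{h}\cdot(c+v_j))}=\omega^{0}=1$ because $\mathbf{h}$ is a parity-check row of $\mathcal{C}_2$; for an $\mathrm{X}$-type row $\xi^{a}\mathbf{g}$ with $\mathbf{g}$ a row of $H_1$, the shift $c\mapsto c+\xi^{a}\mathbf{g}$ permutes the summation set $\mathcal{C}_1^{\perp}$ since that code is $\mathbb{F}_{p^{k'}}$-linear and generated by the rows of $H_1$. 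With that verification added, your commutation, independence, and counting arguments complete the identification of $\mathcal{H}_{\mathrm{CSS}}^{(p^{k'})}$ as the check matrix.
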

Next, we consider two classical constacyclic BCH codes $\mathcal{C}_1$ and $\mathcal{C}_2$ over $\mathbb{F}_{q^m}$ with parity check matrices $H_{b_1}$ and $H_{b_2}$ as defined in equation \eqref{eqn:H_b_RS}, where $n|q^m-1$, $b_i\in \{0,1,\dots,n-1\}$, $i\in \{1,2\}$. Then from Theorem \ref{CheckCSS}, we obtain the matrix $\mathcal{H}_{\mathrm{BCH}}$ 
 \begin{align}\label{eqHBCH}
\mathcal{H}_{\mathrm{BCH}} =  \left[\begin{array}{c|c}
\begin{matrix}
H_{b_1}\\\xi H_{b_1} \\ \vdots \\ \xi^{k'-1}H_{b_1}
\end{matrix} & \mathbf{0}\\
\mathbf{0} & \begin{matrix}
H_{b_2}\\\xi H_{b_2} \\ \vdots \\ \xi^{k'-1}H_{b_2}
\end{matrix}
\end{array}\right],
 \end{align}
 where $\xi$ is the primitive element of $\mathbb{F}_{q^m}$ and $q^m=p^{sm}=p^{k'}$.
\subsection{Weakly self-dual quantum constacyclic BCH codes}\label{IVB}
A classical constacyclic BCH code $\mathcal{C}$ is specified by a zero set of the generator polynomial as we described in Subsection \ref{BCH}. In the sequel, we focus on the definition and the computation of the parameters of quantum codes based on the examples in the previous section. The following result provides the relation between the zero sets of the code and its dual. 
 \begin{theorem}\label{th8}
     Let $Z_{\mathcal{C}}$ denotes the zero set of the constacyclic BCH code $\mathcal{C}$ over $\mathbb{F}_q$ and the generator polynomial of the code is given by $g(x)=\prod_{s\in Z_{\mathcal{C}}}(x-\beta\xi^s)$. Then the generator polynomial of the dual code $\mathcal{C}^{\perp}$ is given by $h^{\dagger}(x)=\prod_{s\in Z_{\mathcal{C}^{\perp}}=\mathbb{Z}_n\setminus Z_{\mathcal{C}}}(x-\beta^{-1}\xi^{-s})$.
\end{theorem}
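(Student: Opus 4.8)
The plan is to exploit the structure of the dual code of a constacyclic code and combine it with the factorization of $x^n-\lambda$ from equation \eqref{eq1}. First I would recall the standard fact that if $\mathcal{C}$ is a $\lambda$-constacyclic code of length $n$ over $\mathbb{F}_q$, then its Euclidean dual $\mathcal{C}^{\perp}$ is a $\lambda^{-1}$-constacyclic code: indeed, taking inner products of a codeword with shifts shows that $\mathcal{C}^{\perp}$ is closed under the $\lambda^{-1}$-constacyclic shift, so $\mathcal{C}^{\perp}$ is an ideal of $\mathbb{F}_q[x]/\langle x^n-\lambda^{-1}\rangle$. Since $\lambda=\beta^n$, we have $\lambda^{-1}=(\beta^{-1})^n$, so by the same argument that produced \eqref{eq1}, the polynomial $x^n-\lambda^{-1}$ factors as $\prod_{j=1}^{n}(x-\beta^{-1}\xi^j)$, and the roots of $\mathcal{C}^{\perp}$ are drawn from the set $\{\beta^{-1}\xi^{s} : s\in\mathbb{Z}_n\}$.

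Next I would pin down which of these roots actually lie in $Z_{\mathcal{C}^{\perp}}$. The relation between $\mathcal{C}$ and $\mathcal{C}^{\perp}$ in terms of generator/check polynomials is the classical one: if $g(x)h(x)=x^n-\lambda$ with $g$ the generator of $\mathcal{C}$, then $\mathcal{C}^{\perp}$ is generated (up to a unit scalar) by the reciprocal polynomial $h^{\dagger}(x)=x^{\deg h}h(x^{-1})/h(0)$. The key computation is then to identify the roots of $h^{\dagger}$. Since $g(x)=\prod_{s\in Z_{\mathcal{C}}}(x-\beta\xi^s)$ and $x^n-\lambda=\prod_{s\in\mathbb{Z}_n}(x-\beta\xi^s)$, we get $h(x)=\prod_{s\in\mathbb{Z}_n\setminus Z_{\mathcal{C}}}(x-\beta\xi^s)$. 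Now $h^{\dagger}$ has a root at $y$ iff $h(y^{-1})=0$ iff $y^{-1}=\beta\xi^s$ for some $s\notin Z_{\mathcal{C}}$ iff $y=\beta^{-1}\xi^{-s}$. Hence the roots of $h^{\dagger}(x)$ are exactly $\{\beta^{-1}\xi^{-s} : s\in\mathbb{Z}_n\setminus Z_{\mathcal{C}}\}$, which gives $h^{\dagger}(x)=\prod_{s\in Z_{\mathcal{C}^{\perp}}=\mathbb{Z}_n\setminus Z_{\mathcal{C}}}(x-\beta^{-1}\xi^{-s})$ as claimed, where $Z_{\mathcal{C}^{\perp}}$ is understood as the index set $\mathbb{Z}_n\setminus Z_{\mathcal{C}}$ with the roots located at $\beta^{-1}\xi^{-s}$.

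I would then just need to check the two loose ends: (i) that $h^{\dagger}$ is genuinely a factor of $x^n-\lambda^{-1}$, which follows because $\prod_{s\in\mathbb{Z}_n}(x-\beta^{-1}\xi^{-s})=\prod_{s\in\mathbb{Z}_n}(x-\beta^{-1}\xi^{s})=x^n-\lambda^{-1}$ (reindexing $s\mapsto -s$ over $\mathbb{Z}_n$), so any subproduct divides it; and (ii) that the leading-coefficient normalization in passing from $h$ to $h^{\dagger}$ does not change the ideal generated, which is immediate since we are working up to a nonzero scalar in $\mathbb{F}_q$. The main obstacle, such as it is, is bookkeeping: being careful that "$Z_{\mathcal{C}^{\perp}}$" in the statement refers to the \emph{complementary index set} $\mathbb{Z}_n\setminus Z_{\mathcal{C}}$ while the actual zeros of $h^{\dagger}$ are at the \emph{negated} exponents $\beta^{-1}\xi^{-s}$, and verifying that the reciprocal-polynomial operation interacts correctly with the constant $\beta$ — i.e., that reciprocating a root $\beta\xi^s$ yields $\beta^{-1}\xi^{-s}$ rather than something with a stray power of $\beta$. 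Once the indexing is set up consistently, the proof is a short direct computation using \eqref{eq1} and the reciprocal-polynomial characterization of the dual.
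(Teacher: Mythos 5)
The paper states Theorem~\ref{th8} without supplying a proof of its own, so there is nothing to compare against line by line; your argument would in fact serve as the missing proof. It is correct and is the standard route: (i) $\mathcal{C}^{\perp}$ is $\lambda^{-1}$-constacyclic (verified via $\langle \tau_{\lambda^{-1}}(v),\tau_{\lambda}(c)\rangle=\langle v,c\rangle$ and surjectivity of $\tau_{\lambda}$ on $\mathcal{C}$), (ii) $\mathcal{C}^{\perp}=\langle h^{\dagger}\rangle$ with $h^{\dagger}(x)=x^{\deg h}h(1/x)$ up to a unit, exactly the convention the paper itself uses later in Propositions~\ref{pro1} and~\ref{pro3}, and (iii) the root computation $h^{\dagger}(y)=0 \iff h(y^{-1})=0 \iff y=\beta^{-1}\xi^{-s}$ for $s\in\mathbb{Z}_n\setminus Z_{\mathcal{C}}$. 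The one genuine subtlety is the bookkeeping you flag explicitly: the index set of $Z_{\mathcal{C}^{\perp}}$ is the complement $\mathbb{Z}_n\setminus Z_{\mathcal{C}}$ while the actual roots sit at the inverted locations $\beta^{-1}\xi^{-s}$, and your expansion $x^{\deg h}\prod_s(1/x-\beta\xi^s)=\bigl(\prod_s(-\beta\xi^s)\bigr)\prod_s(x-\beta^{-1}\xi^{-s})$ confirms that only a nonzero scalar separates $h^{\dagger}$ from the monic product in the statement, which is immaterial for the generated ideal. No gaps.
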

\begin{proposition}\label{prop1}
Let $\boldsymbol{v}\in \mathbb{F}_q^n$ having weight $d$ or less with its spectrum having at least $d$ consecutive zeros. Then $\boldsymbol{v} =0$. Furthermore, let $\mathbb{T}$ be the transformed vector set of $\mathcal{C}$. If the zero set $Z_{\mathcal{C}}$ contains $d -1$ consecutive numbers, then the minimum distance of $\mathcal{C}$ is $d$.
\end{proposition}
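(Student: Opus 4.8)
The plan is to prove the two assertions of Proposition~\ref{prop1} in sequence, since the second follows from the first once we apply it to codeword differences. For the first assertion, suppose $\boldsymbol{v}\in\mathbb{F}_q^n$ has Hamming weight at most $d$, and its FFFT spectrum $\boldsymbol{V}$ satisfies $V_{j}=V_{j+1}=\cdots=V_{j+d-1}=0$ for some $d$ consecutive indices (mod $n$). I would first invoke Theorem~\ref{th6}(1): the vanishing of these $d$ spectral components is equivalent to saying that the associated polynomial $v(x)=\sum_i v_i x^i$ has roots at $\beta\xi^{j},\beta\xi^{j+1},\dots,\beta\xi^{j+d-1}$. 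Now suppose for contradiction that $\boldsymbol{v}\neq 0$, so that $v(x)$ has exactly $w$ nonzero coefficients with $1\le w\le d$. The key step is the Vandermonde argument already used for the BCH bound in the excerpt: the $d$ evaluation equations $v(\beta\xi^{j+\ell})=0$, $\ell=0,\dots,d-1$, restricted to the $w\le d$ support positions of $\boldsymbol{v}$, form a homogeneous linear system whose coefficient matrix contains a $w\times w$ submatrix that is a (scaled, by powers of $\beta\xi^{j}$) Vandermonde matrix in the distinct nodes $\{\xi^{i}: v_i\neq 0\}$. Since $\xi$ has order $n$, these nodes are distinct, so the Vandermonde determinant is nonzero, forcing all $v_i=0$ on the support — a contradiction. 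Hence $\boldsymbol{v}=0$.

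For the second assertion, let $\mathbb{T}$ be the transform-domain image of $\mathcal{C}$, and suppose $Z_{\mathcal{C}}$ contains $d-1$ consecutive elements, say $\{b+1,\dots,b+d-1\}$ in the indexing conventions of Subsection~\ref{BCH}. First I would establish $d_{\min}(\mathcal{C})\ge d$: this is exactly the BCH bound (Theorem~5 in the excerpt, the ``BCH bound'' theorem), or alternatively a direct consequence of the first part — any nonzero codeword $\boldsymbol{c}$ has spectrum with $d-1$ consecutive zeros, so if its weight were $\le d-1$, applying the first assertion (with $d-1$ consecutive zeros and weight $\le d-1$) would give $\boldsymbol{c}=0$, a contradiction. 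For the reverse inequality $d_{\min}(\mathcal{C})\le d$, I would need to exhibit a codeword of weight exactly $d$; this is where the statement is slightly delicate, because in general the designed distance only gives a lower bound. The natural approach is to observe that when $Z_{\mathcal{C}}$ consists of precisely $d-1$ consecutive numbers (and nothing else), $\mathcal{C}$ is a Reed–Solomon-like / MDS constacyclic code of dimension $n-(d-1)$, so by the Singleton bound $d_{\min}\le n-k+1 = n-(n-d+1)+1 = d$, giving equality. So the cleanest route is: lower bound from the Vandermonde/BCH argument, upper bound from Singleton applied to the dimension count $k = n - |Z_{\mathcal{C}}| = n-(d-1)$.

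The main obstacle I anticipate is the precise reading of the hypothesis ``$Z_{\mathcal{C}}$ contains $d-1$ consecutive numbers'': taken literally (merely \emph{contains}), the conclusion ``minimum distance is $d$'' is false in general — one only gets $d_{\min}\ge d$, since extra zeros shrink the code and can only increase the distance. For the equality to hold one must read the hypothesis as ``$Z_{\mathcal{C}}$ is exactly a set of $d-1$ consecutive numbers,'' i.e.\ the code is the constacyclic BCH code with designed distance $d$ meeting the Singleton bound (an MDS constacyclic code). Under that reading the proof is: the first assertion (Vandermonde) gives $d_{\min}\ge d$, and the dimension formula $k=n-(d-1)$ together with the Singleton bound gives $d_{\min}\le d$; combining the two yields $d_{\min}=d$. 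I would write the proof this way, noting explicitly that the ``contains'' in the statement should be understood as the zero set being comprised of $d-1$ consecutive indices so that the code is MDS, which is the setting in which the subsequent quantum-MDS constructions are carried out.
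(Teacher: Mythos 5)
Your proposal is essentially correct, but there is nothing in the paper to compare it against: Proposition~\ref{prop1} is stated without proof (it is the constacyclic transplant of Blahut's transform-domain BCH bound), so your write-up would in fact be supplying the missing argument. The first assertion is handled exactly the way the paper itself treats the parity-check matrices $H$ and $H_b$ in Subsection~\ref{BCH}: the $d$ evaluations $v(\beta\xi^{j+\ell})=0$ restricted to the $w\le d$ support positions give a square matrix that factors as a diagonal matrix $\mathrm{diag}\bigl((\beta\xi^{j})^{i}\bigr)$ times a Vandermonde matrix in the distinct nodes $\xi^{i}$, and invertibility forces $\boldsymbol{v}=0$; this is the right argument and is consistent with Theorem~\ref{th6}. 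Your reading of the second assertion is also the correct one to flag: as literally written (``$Z_{\mathcal{C}}$ \emph{contains} $d-1$ consecutive numbers''), only $d_{\min}\ge d$ follows, and the paper's own examples (e.g.\ the $[13,3,11]$ code whose zero set of ten consecutive elements yields distance $11$, not $10+1$ read off from any shorter consecutive run) show the bound is used as a lower bound with the longest consecutive run; equality requires the zero set to be exactly $d-1$ consecutive elements, in which case $k=n-(d-1)$ and the Singleton bound closes the gap, which is precisely the MDS setting the paper exploits in Section~\ref{sec4}. So your proof is sound, and your caveat about ``contains'' versus ``consists of'' identifies a genuine imprecision in the statement rather than a flaw in your argument.
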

From Theorem \ref{th8}, we can say that the parity spectra of the constacyclic BCH code $\mathcal{C}$ and its dual $\mathcal{C}^{\perp}$ are the subsets of $Z_{\mathcal{C}}$ and $Z_{\mathcal{C}^{\perp}}$, respectively. 
In this subsection, we consider weakly self-dual quantum constacyclic BCH codes. First, we will derive a necessary and sufficient condition for the classical weakly self-dual constacyclic codes ($C\subseteq C^{\perp}$).
\begin{proposition}\label{pro1}
    Let $C$ be a constacyclic code of length $n$ over $\mathbb{F}_q$ which is generated by $g(x)=\sum_{i=0}^{n-k} g_ix^i$. Then $C$ is weakly self-dual, i.e., $C\subseteq C^{\perp}$  if and only if $x^n-\lambda$ divides $g(x)g^{\dagger}(x)$ where  $g^{\dagger}(x)=x^{n-k}g(1/x)$  denotes the reciprocal of $g(x)$ and $\lambda=\lambda^{-1}$.
\end{proposition}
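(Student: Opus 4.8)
The plan is to translate the subspace inclusion $C\subseteq C^{\perp}$ into a divisibility statement in $\mathbb{F}_q[x]$ and then manipulate it with a reciprocal identity. Write the check polynomial $h(x)=(x^n-\lambda)/g(x)$, so that $x^n-\lambda=g(x)h(x)$ with $\deg h=k$; since $\lambda\neq 0$ we have $x\nmid x^n-\lambda$, hence $g(0),h(0)\neq 0$, and therefore $\deg g^{\dagger}=n-k$ and $\deg h^{\dagger}=k$ exactly, where $h^{\dagger}(x):=x^{k}h(1/x)$ is the reciprocal of $h$ (defined from $h$ just as $g^{\dagger}$ is defined from $g$ in the statement). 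Because $\lambda=\lambda^{-1}$, the dual $C^{\perp}$ is again a $\lambda$-constacyclic code of length $n$, and by Theorem~\ref{th8} it is the ideal generated by $h^{\dagger}(x)$ (up to a nonzero scalar, which is irrelevant for the ideal it generates). Using the standard dictionary --- for $f_1,f_2\mid x^n-\lambda$ one has $\langle f_1\rangle\subseteq\langle f_2\rangle$ in $\frac{\mathbb{F}_q[x]}{\langle x^n-\lambda\rangle}$ if and only if $f_2\mid f_1$ in $\mathbb{F}_q[x]$, which holds since $\deg f_1<n$ --- the inclusion $C=\langle g\rangle\subseteq C^{\perp}=\langle h^{\dagger}\rangle$ is equivalent to $h^{\dagger}(x)\mid g(x)$.

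The bridge between $h^{\dagger}\mid g$ and $(x^n-\lambda)\mid g\,g^{\dagger}$ is the reciprocal identity
\begin{equation*}
g^{\dagger}(x)\,h^{\dagger}(x)=x^{n-k}g(1/x)\cdot x^{k}h(1/x)=x^{n}\,g(1/x)h(1/x)=x^{n}(x^{-n}-\lambda)=1-\lambda x^{n}=-\lambda(x^{n}-\lambda),
\end{equation*}
where the last equality uses $\lambda^{-1}=\lambda$; in particular $h^{\dagger}\mid(x^n-\lambda)$ with cofactor $-\lambda^{-1}g^{\dagger}$, confirming that $h^{\dagger}$ is a genuine divisor of $x^n-\lambda$. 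Both implications are then immediate. If $C\subseteq C^{\perp}$, then $h^{\dagger}\mid g$, say $g=h^{\dagger}w$; multiplying by $g^{\dagger}$ and substituting the identity gives $g\,g^{\dagger}=w(h^{\dagger}g^{\dagger})=-\lambda\,w(x^n-\lambda)$, so $(x^n-\lambda)\mid g\,g^{\dagger}$. Conversely, if $(x^n-\lambda)\mid g\,g^{\dagger}$, write $g\,g^{\dagger}=(x^n-\lambda)u$ and substitute $x^n-\lambda=-\lambda^{-1}g^{\dagger}h^{\dagger}$ to get $g\,g^{\dagger}=-\lambda^{-1}g^{\dagger}h^{\dagger}u$; since $\mathbb{F}_q[x]$ is an integral domain and $g^{\dagger}\neq 0$, cancel $g^{\dagger}$ to obtain $g=-\lambda^{-1}h^{\dagger}u$, i.e.\ $h^{\dagger}\mid g$, hence $C\subseteq C^{\perp}$.

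I expect the conceptual core --- the reciprocal identity together with the ideal-versus-divisibility dictionary --- to be routine; the only points needing care are bookkeeping. First, one must justify that $C^{\perp}$ really is the $\lambda$-constacyclic code generated by $h^{\dagger}$: this is exactly where the hypothesis $\lambda=\lambda^{-1}$ enters, since it places $C^{\perp}$ inside the same quotient ring $\frac{\mathbb{F}_q[x]}{\langle x^n-\lambda\rangle}$ (without it $C^{\perp}$ would be $\lambda^{-1}$-constacyclic and the polynomial formulation would not apply). Second, one must keep track of the harmless nonzero scalars (the factor $-\lambda$ in the reciprocal identity, and the factor $h(0)$ relating $x^{k}h(1/x)$ to the monic generator of $C^{\perp}$ supplied by Theorem~\ref{th8}); none of these affect the ideals involved, so the argument goes through once they are stated cleanly.
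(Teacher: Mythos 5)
Your proposal is correct and follows essentially the same route as the paper: both arguments reduce $C\subseteq C^{\perp}$ to $h^{\dagger}(x)\mid g(x)$ and then exploit the identity $g^{\dagger}(x)h^{\dagger}(x)=-\lambda(x^{n}-\lambda)$ (which the paper derives inline by substituting $h(1/x)=\frac{(1/x)^n-\lambda}{g(1/x)}$, using $\lambda^{2}=1$). Your write-up is in fact slightly more complete, since you carry out the converse cancellation explicitly where the paper only remarks that it ``follows easily.''
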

\begin{proof}
   Let  $C$ be a weakly self-dual code, i.e., $C\subseteq C^{\perp}$ and $x^n-\lambda=h(x)g(x)$. Then $g(x)=h^{\dagger}(x)p(x)$ for some $p(x)\in \frac{\mathbb{F}_q[x]}{\langle x^n-\lambda\rangle}$. This gives 
   \begin{align*}
g(x)=&x^kh(1/x)p(x)=x^k\bigg(\frac{(1/x)^n-\lambda}{g(1/x)}\bigg)p(x)\\
       =&x^k\bigg(\frac{(1-\lambda x^n)}{x^ng(1/x)}\bigg)p(x)\\
       =&\bigg(-\lambda\frac{(-\lambda+x^n)}{x^{n-k}g(1/x)}\bigg)p(x)\\
       =&\frac{x^n-\lambda}{g^\dagger(x)}(-\lambda p(x))
   \end{align*}
 This implies $g(x)g^\dagger(x)=(x^n-\lambda)(-\lambda p(x))$. Therefore, $x^n-\lambda$ divides $g(x)g^{\dagger}(x)$.  
   Converse follows easily from the first part of the proof. 
\end{proof}
    Hence, any constacyclic code of length $n$ over $\mathbb{F}_q$ can be characterized by its generator polynomial which may be specified in terms of their zero spectrum over $\mathbb{F}_{q^m}$. Proposition \ref{pro1} can also be describe in terms of zero set of the code and its dual which will be useful for the quantum codes construction.
\begin{theorem}\label{th9}
    Let $\mathcal{C}$ be a constacyclic BCH code of length $n$ over $\mathbb{F}_q$. Let $Z_{\mathcal{C}}$ and $Z_{\mathcal{C}^{\perp}}$ are two zero sets of $\mathcal{C}$ and $\mathcal{C}^{\perp}$, respectively. Then $\mathcal{C}$ is weakly self-dual if and only if $Z_{\mathcal{C}^{\perp}}\subseteq Z_{\mathcal{C}}$.
\end{theorem}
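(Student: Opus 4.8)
The plan is to translate the polynomial divisibility criterion of Proposition~\ref{pro1} into a statement about zero sets, using Theorem~\ref{th8} to read off the zero set of $\mathcal{C}^{\perp}$. As in Proposition~\ref{pro1} we work under $\lambda=\lambda^{-1}$, i.e. $\lambda^{2}=1$, so that $\mathcal{C}^{\perp}$ is again a $\lambda$-constacyclic code of length $n$; hence $\mathcal{C}$ and $\mathcal{C}^{\perp}$ are both ideals of $R:=\mathbb{F}_{q}[x]/\langle x^{n}-\lambda\rangle$ and their generators $g(x)$ and $h^{\dagger}(x)$ both divide $x^{n}-\lambda$. Since $n\mid q^{m}-1$ forces $\gcd(n,p)=1$, the polynomial $x^{n}-\lambda=\prod_{s=0}^{n-1}(x-\beta\xi^{s})$ is squarefree with $n$ distinct roots, so I will freely use that divisibility among divisors of $x^{n}-\lambda$ is the same as containment of root sets, and that containment of codes is equivalent to reverse divisibility of generators.

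First I would do the reciprocal-polynomial bookkeeping. From $g(x)=\prod_{s\in Z_{\mathcal{C}}}(x-\beta\xi^{s})$ one gets $g^{\dagger}(x)=x^{n-k}g(1/x)=\big(\prod_{s\in Z_{\mathcal{C}}}(-\beta\xi^{s})\big)\prod_{s\in Z_{\mathcal{C}}}(x-\beta^{-1}\xi^{-s})$, so the roots of $g^{\dagger}$ are $\{\beta^{-1}\xi^{-s}:s\in Z_{\mathcal{C}}\}$. Because $\lambda^{2}=1$ gives $\beta^{2n}=1$, the element $\beta^{-2}$ lies in $\langle\xi\rangle$; fixing $c$ with $\beta^{-2}=\xi^{c}$ yields the involution $\sigma:s\mapsto c-s \bmod n$ of $\mathbb{Z}_{n}$ satisfying $\beta^{-1}\xi^{-s}=\beta\xi^{\sigma(s)}$. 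Thus, in the single labelling $t\mapsto\beta\xi^{t}$ of the roots of $x^{n}-\lambda$ shared by both codes, $g^{\dagger}$ has index set $\sigma(Z_{\mathcal{C}})$, while by Theorem~\ref{th8} the generator $h^{\dagger}$ of $\mathcal{C}^{\perp}$, whose roots are $\{\beta^{-1}\xi^{-t}:t\in\mathbb{Z}_{n}\setminus Z_{\mathcal{C}}\}$, has index set $Z_{\mathcal{C}^{\perp}}=\sigma(\mathbb{Z}_{n}\setminus Z_{\mathcal{C}})=\mathbb{Z}_{n}\setminus\sigma(Z_{\mathcal{C}})$.

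Now I would finish with Proposition~\ref{pro1}: $\mathcal{C}$ is weakly self-dual iff $x^{n}-\lambda\mid g(x)g^{\dagger}(x)$. Since $x^{n}-\lambda$ is squarefree, this holds iff every root $\beta\xi^{t}$, $t\in\mathbb{Z}_{n}$, is a root of $g$ or of $g^{\dagger}$, i.e. iff $\mathbb{Z}_{n}=Z_{\mathcal{C}}\cup\sigma(Z_{\mathcal{C}})$, i.e. iff $\mathbb{Z}_{n}\setminus Z_{\mathcal{C}}\subseteq\sigma(Z_{\mathcal{C}})$. Applying the bijection $\sigma$ (or substituting $Z_{\mathcal{C}^{\perp}}=\mathbb{Z}_{n}\setminus\sigma(Z_{\mathcal{C}})$ from the previous step) turns this into $Z_{\mathcal{C}^{\perp}}\subseteq Z_{\mathcal{C}}$, which gives both implications simultaneously. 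A slightly slicker variant that avoids Proposition~\ref{pro1} is to note directly that $\mathcal{C}\subseteq\mathcal{C}^{\perp}$ iff $h^{\dagger}(x)\mid g(x)$ and then compare root sets; I would keep Proposition~\ref{pro1} as the main hook since it is the immediately preceding result.

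The parts I expect to be routine are the reciprocal-polynomial computation and the passage between polynomial divisibility and root-set containment for divisors of the squarefree $x^{n}-\lambda$. The one delicate point --- and, I suspect, the place where the statement looks paradoxical at first --- is reconciling the two labellings: Theorem~\ref{th8} records the roots of the dual's generator as $\beta^{-1}\xi^{-s}$ with $s$ running over $\mathbb{Z}_{n}\setminus Z_{\mathcal{C}}$, whereas Theorem~\ref{th9} compares $Z_{\mathcal{C}^{\perp}}$ with $Z_{\mathcal{C}}$ in the single labelling by $\beta\xi^{t}$. Making the identity $\beta^{-1}\xi^{-s}=\beta\xi^{\sigma(s)}$ explicit and checking $\sigma\circ\sigma=\mathrm{id}$ is exactly what converts the complement $\mathbb{Z}_{n}\setminus Z_{\mathcal{C}}$ of Theorem~\ref{th8} into the containment $Z_{\mathcal{C}^{\perp}}\subseteq Z_{\mathcal{C}}$; in the cyclic case $\beta=1$ and in the negacyclic case this collapses to the familiar forms.
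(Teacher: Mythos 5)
Your proof is correct and rests on the same underlying mechanism as the paper's proof of Theorem~\ref{th9} --- translating divisibility among the divisors $g(x)$, $g^{\dagger}(x)$, $h^{\dagger}(x)$ of the squarefree polynomial $x^{n}-\lambda$ into containments of their root sets --- but you execute it differently and, importantly, more completely. The paper argues directly that $\mathcal{C}\subseteq\mathcal{C}^{\perp}$ iff $h^{\dagger}(x)\mid g(x)$ and then immediately asserts $Z_{\mathcal{C}^{\perp}}\subseteq Z_{\mathcal{C}}$; you instead route through Proposition~\ref{pro1}'s criterion $x^{n}-\lambda\mid g(x)g^{\dagger}(x)$ and supply the explicit involution $\sigma(s)=c-s\bmod n$ with $\beta^{-2}=\xi^{c}$, which rewrites the roots $\beta^{-1}\xi^{-s}$ of $h^{\dagger}$ recorded in Theorem~\ref{th8} in the common labeling $t\mapsto\beta\xi^{t}$ and gives $Z_{\mathcal{C}^{\perp}}=\sigma(\mathbb{Z}_{n}\setminus Z_{\mathcal{C}})=\mathbb{Z}_{n}\setminus\sigma(Z_{\mathcal{C}})$. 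This relabeling step is precisely what the paper omits: read literally, Theorem~\ref{th8}'s identification $Z_{\mathcal{C}^{\perp}}=\mathbb{Z}_{n}\setminus Z_{\mathcal{C}}$ would make the condition $Z_{\mathcal{C}^{\perp}}\subseteq Z_{\mathcal{C}}$ force $Z_{\mathcal{C}}=\mathbb{Z}_{n}$, i.e.\ $\mathcal{C}=\{0\}$, so your $\sigma$ is exactly what makes the statement nonvacuous and legitimizes the one-line deduction ``$h^{\dagger}(x)\mid g(x)$, hence $Z_{\mathcal{C}^{\perp}}\subseteq Z_{\mathcal{C}}$.'' Your explicit use of $\lambda^{2}=1$ (so that $\mathcal{C}^{\perp}$ lives in the same quotient ring and $\beta^{2}\in\langle\xi\rangle$) is likewise needed and is only implicit in the paper via the hypothesis $\lambda=\lambda^{-1}$ of Proposition~\ref{pro1}. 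In short: same skeleton, but your version fills the gap that reconciles the two labelings of the roots, at the modest cost of carrying the bookkeeping constant $c$; the paper's version is shorter but leaves that reconciliation unstated.
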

\begin{proof}
   Let  $\mathcal{C}$ be weakly self-dual. Since $g(x)=\prod_{s\in Z_{\mathcal{C}}}(x-\beta\xi^s)$ and $h^{\dagger}(x)=\prod_{s\in Z_{\mathcal{C}^{\perp}}=\mathbb{Z}_n\setminus Z_{\mathcal{C}}}(x-\beta^{-1}\xi^{-s})$. Let $g(x)\in \mathcal{C}\subseteq \mathcal{C}^{\perp}=\langle h^\dagger(x)\rangle$. This implies $g(x)=h^\dagger(x)p(x)$ for some $p(x)\in \frac{\mathbb{F}_q[x]}{\langle x^n-\lambda\rangle}$. Hence, $h^\dagger(x)$ is a divisor of $g(x)$. Thus, $Z_{\mathcal{C}^{\perp}}\subseteq Z_{\mathcal{C}}$.

   Conversely, assume that $Z_{\mathcal{C}^{\perp}}\subseteq Z_{\mathcal{C}}$. This implies $g(x)=h^\dagger(x)p(x)$ for some $p(x)\in \frac{\mathbb{F}_q[x]}{\langle x^n-\lambda\rangle}$. Hence, $h^\dagger(x)$ is a divisor of $g(x)$. Therefore, from Proposition \ref{pro1}, we have $\mathcal{C}$ is weakly self-dual.
\end{proof}
Next, we recall the construction of QECCs from weakly self-dual codes in the time domain over $\mathbb{F}_{q}$. 
\begin{definition}\label{def3}
    Let $\mathcal{C}=[n,k,d_1]$ be a weakly self-dual ($\mathcal{C}\subseteq \mathcal{C}^{\perp}=[n,n-k,d_2]$) linear code over $\mathbb{F}_q$. Let $\{u_j: 0\leq j\leq q^{n-2k}\}$ be a set of coset representatives of $\mathcal{C}^\perp\slash \mathcal{C}$. Then $q^{n-2k}$ mutually orthogonal states
    $$\ket{\psi_j}=\frac{1}{\sqrt{|\mathcal{C}|}}\sum_{c\in \mathcal{C}}|c+u_j\rangle$$ span a quantum error-correcting code $\mathcal{C}$  with parameters $[[n, n -2k]]$. Based on classical decoding algorithms for the code $\mathcal{C}^{\perp}$, up to $(d_2 -1)/2$ errors can be corrected. Furthermore, the code can correct errors up to $(d'-1)/2$ where $d'=\mathrm{min}\{\text{wt}~c: c\in \mathcal{C}^\perp \setminus \mathcal{C}\}\geq d_2$.
\end{definition}
Next, we have the main result of this section.
\begin{theorem}\label{th10}
    Let $\mathcal{C}$ be an $[n,k,d_1]$ weakly self-dual constacyclic BCH code over $\mathbb{F}_q$, and the finite field Fourier transform of the vector $\boldsymbol{a}\in \mathcal{C}$ and $\boldsymbol{a}^\perp\in \mathcal{C}^\perp=[n,n-k,d_2]$ be the vector $\boldsymbol{A}$ and $\boldsymbol{A}^\perp$ in $\mathbb{F}_{q^m}$, such that $A_j^q=A_{qj\mod(n)}$ and $(A_j^q)^\perp=A_{qj\mod(n)}^\perp$ respectively. Then, there exists a quantum constacyclic BCH code in the spectral domain with at least, $d_1 -1$ consecutive zeros included in the code states, which can be described as 
    $\frac{1}{\sqrt{|\mathbb{T}|}}\sum_{j_1,\dots,j_n\in \mathbb{T}}|j_1j_2\cdots j_{k_1}\underbrace{0\cdots 0}_{d_1-1}j_{k_1+d}\cdots j_n+j_1'j_2'\cdots j_{k_2}'\underbrace{0\cdots 0}_{d_2-1}j_{k_2+d}'\cdots j_n'\rangle$, where  $j_1',\dots,j_n'\in \mathbb{T}^\perp\setminus\mathbb{T}$; $\mathbb{T}$ and $\mathbb{T}^\perp$ are the sets of Fourier transforms of $\mathcal{C}$ and $\mathcal{C}^\perp$, respectively. The code can correct errors up to $(d'-1)/2$ where $d'=\mathrm{min}\{\text{wt}~c: c\in \mathcal{C}^\perp \setminus \mathcal{C}\}\geq d_2$.
\end{theorem}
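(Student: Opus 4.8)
The plan is to transport the time-domain construction of Definition~\ref{def3} into the spectral domain via the FFFT, using the fact that the FFFT is a bijective $\mathbb{F}_{q^m}$-linear map (Lemma~\ref{lem1}) that, by the Conjugate Symmetry Property (Theorem~\ref{th3}), restricts to a bijection between vectors over $\mathbb{F}_q$ and spectral vectors satisfying the conjugacy constraints $A_j^q = A_{qj \bmod n}$. First I would invoke Theorem~\ref{th9}: since $\mathcal{C}$ is weakly self-dual we have $Z_{\mathcal{C}^\perp} \subseteq Z_{\mathcal{C}}$, so $\mathcal{C} \subseteq \mathcal{C}^\perp$ as classical codes, and Definition~\ref{def3} applies to give the time-domain quantum code with states $\ket{\psi_j} = \frac{1}{\sqrt{|\mathcal{C}|}} \sum_{c \in \mathcal{C}} \ket{c + u_j}$ and parameters $[[n, n-2k]]$, correctable up to $(d'-1)/2$ with $d' = \min\{\mathrm{wt}\, c : c \in \mathcal{C}^\perp \setminus \mathcal{C}\} \geq d_2$.

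Next I would apply the FFFT coordinate-wise inside the superposition: writing $\mathbb{T} = \mathrm{FFFT}(\mathcal{C})$ and $\mathbb{T}^\perp = \mathrm{FFFT}(\mathcal{C}^\perp)$, and noting that $\mathrm{FFFT}$ sends coset representatives of $\mathcal{C}^\perp/\mathcal{C}$ to coset representatives of $\mathbb{T}^\perp/\mathbb{T}$, each $c + u_j$ maps to an element of $\mathbb{T}^\perp$ of the form (spectral codeword of $\mathcal{C}$) $+$ (spectral coset rep). The key structural point is the spectral-null pattern: by Proposition~\ref{prop1} and the BCH construction in Subsection~\ref{BCH}, every codeword of $\mathcal{C}$ has at least $d_1 - 1$ consecutive zeros in its spectrum (the designed-distance zeros $A_b, \dots, A_{b+d_1-2}$), and likewise every codeword of $\mathcal{C}^\perp$ has at least $d_2 - 1$ consecutive spectral zeros coming from $Z_{\mathcal{C}^\perp}$. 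This is exactly what lets me write the spectral code states in the displayed block form with the $\underbrace{0\cdots 0}_{d_1-1}$ and $\underbrace{0\cdots 0}_{d_2-1}$ segments, the remaining positions carrying the free spectral components $j_1,\dots$ ranging over $\mathbb{T}$ (resp. $j_1',\dots$ over $\mathbb{T}^\perp \setminus \mathbb{T}$). Orthonormality and the code dimension are preserved because $\mathrm{FFFT}$ is (up to the fixed nonzero scalars $1/(n\beta^i)$) a unitary-type bijection on the relevant index sets, so $|\mathbb{T}| = |\mathcal{C}|$ and the count $q^{n-2k}$ of mutually orthogonal states is unchanged; the error-correction radius $(d'-1)/2$ carries over verbatim since it is an intrinsic property of the classical pair $(\mathcal{C}, \mathcal{C}^\perp)$, which the transform does not alter.

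The main obstacle I anticipate is bookkeeping rather than conceptual: one must check that the $d_1-1$ consecutive spectral zeros from $Z_{\mathcal{C}}$ and the $d_2-1$ consecutive spectral zeros from $Z_{\mathcal{C}^\perp}$ sit in compatible positions so that the two summands in the state can be written over a common index labeling $1,\dots,n$ without the zero-blocks colliding in a way that collapses dimension — in particular one has to track how the reversal/conjugation in $h^\dagger(x) = \prod_{s \in \mathbb{Z}_n \setminus Z_{\mathcal{C}}}(x - \beta^{-1}\xi^{-s})$ (Theorem~\ref{th8}) relabels the dual's zero set relative to $Z_{\mathcal{C}}$, using the Reversal Preserving Property. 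A secondary subtlety is making precise that the conjugacy constraints $A_j^q = A_{qj \bmod n}$ and $(A_j^q)^\perp = A_{qj \bmod n}^\perp$ are simultaneously respected by the chosen coset representatives, which follows from Theorem~\ref{th3} applied to each of $\mathcal{C}$ and $\mathcal{C}^\perp$ since both are $\mathbb{F}_q$-linear. Once these index-alignment details are settled, the statement follows by simply rewriting $\ket{\psi_j}$ under the FFFT.
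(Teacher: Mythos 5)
Your proposal is correct and follows essentially the same route as the paper's proof: start from the time-domain superposition states of Definition~\ref{def3}, pass to the spectral domain via the FFFT so that $\mathbb{T}$ and $\mathbb{T}^\perp$ inherit the $d_1-1$ and $d_2-1$ consecutive spectral nulls from Proposition~\ref{prop1}, Theorem~\ref{th8}, and Theorem~\ref{th9}, and carry over the error-correction radius $(d'-1)/2$ from the classical pair. The index-alignment and conjugacy subtleties you flag are real but are also left implicit in the paper's own (terser) argument.
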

\begin{proof}
    From Definition \ref{def3}, any basis state of a quantum code can be  represented as 
    $$|\psi_i\rangle=\frac{1}{\sqrt{|\mathcal{C}|}}\sum_{c\in \mathcal{C}}|c+u_i\rangle.$$  From Proposition \ref{prop1} and Theorem \ref{th8}, FFFTs of $\mathcal{C}$  and $\mathcal{C}^\perp$ are two set of vectors over $\mathbb{F}_{q^m}$ with $d_1-1$ and $d_2-1$ consecutive spectral components in their parity spectrums, respectively. In the transform domain the states of the quantum code can be written as 
        $$|\Psi_j\rangle=\frac{1}{\sqrt{|\mathbb{T}|}}\sum_{C\in \mathbb{T}}|C+U_j\rangle,$$  where $U_j\in \mathbb{T}^\perp\setminus\mathbb{T}$. Now, computing the FFFT of the states  in this equation, we get our required result. From Theorem \ref{th9}, there are atleast $d_1-1$ and $d_2-1$ consecutive spectral components in their parity spectrums of $\mathcal{C}$  and $\mathcal{C}^\perp$ such that $Z_{\mathcal{C}^{\perp}}\subseteq Z_{\mathcal{C}}$. Thus, for the code $\mathcal{C}$ up to $(d_1-1)/2$ and 
    for the code $\mathcal{C}^\perp$ up to $(d_2-1)/2$ errors  can be corrected. Moreover,  since $d'=\text{min}\{\text{wgt}~c: c\in \mathcal{C}^\perp \setminus \mathcal{C}\}$, from Proposition \ref{prop1} and Definition \ref{def3}, there are at least $d'-1$ consecutive  spectrum components in the parity spectrum of $\mathcal{C}^\perp \setminus \mathcal{C}$. 
\end{proof}
\begin{example}
    In Example \ref{exam2}, let the generator polynomial of $\mathcal{C}$ in the time domain is $g(x)=(x+1)(x+w^2)(x+w^4)(x+w^6)(x+w^8)(x+w^{10})(x+w^{12})(x+w^{14})(x+w^{16})(x+w^{18})$. Then $\mathcal{C}$ with parameters $[13,3,11]$ has $10$ consecutive spectrum components in the parity spectrum. Also, the check polynomial of $\mathcal{C}$  or generator polynomial of $\mathcal{C}^\perp$ is given by $h^\dagger(x)=(x+w^{-20})(x+w^{-22})(x+w^{-24})=(x+w^6)(x+w^4)(x+w^2)$. Then $\mathcal{C}^\perp$ with parameters $[13,10,4]$ has $3$ consecutive spectrum components in the parity spectrum. It can be easily seen that $Z_{\mathcal{C}^{\perp}}\subseteq Z_{\mathcal{C}}$. Now, from Theorem \ref{th10}, we obtain the MDS (maximum-distance-separable) quantum constacyclic BCH code with parameters $[[13,7,4]]$.
\end{example}
\begin{example}
    Let $\mathbb{F}_{64}$ be a field of order $64$ where $\mathbb{F}_{64}=\mathbb{F}_2(w)$ and $w^6 + w^4 + w^3 + w + 1=0$. Let $n=7$ and $\xi=w^9$. In $\mathbb{F}_{64}[x]$, we have
    $x^{7}-1= (x+1)(x+w^{9})(x+w^{18})(x+w^{27})(x+w^{36})(x+w^{45})(x+w^{54})$. Let $g(x)=(x+1)(x+w^{9})(x+w^{18})(x+w^{27})$. Then $\mathcal{C}$ with parameters $[7,3,5]$ has $4$ consecutive spectrum components in the parity spectrum. Also, generator polynomial of $\mathcal{C}^\perp$ is given by $h^\dagger(x)=(x+w^{-36})(x+w^{-45})(x+w^{-54})=(x+w^{27})(x+w^{18})(x+w^9)$. Then $\mathcal{C}^\perp$ with parameters $[7,4,4]$ has $3$ consecutive spectrum components in the parity spectrum. It can be easily seen that $Z_{\mathcal{C}^{\perp}}\subseteq Z_{\mathcal{C}}$. Now, from Theorem \ref{th10}, we obtain the MDS quantum constacyclic BCH code with parameters $[[7,1,4]]$ which has the same length and dimension but better minimum distance than the QECC given by \cite{chowdhury2009quantum}.
\end{example}
\begin{remark}
   In Theorem \ref{th10}, for the construction of quantum constacyclic BCH code with $d'-1$ consecutive zeros, we have to choose spectral vectors carefully such that conjugate symmetry property holds. Also, to obtain the code in the time domain, we have to perform the IFFFT on the states of the code over $\mathbb{F}_{q^m}$ in the transform domain. Moreover, codes in the time domain and corresponding codes in the transform domain have the same error correcting capability. 
\end{remark}
\subsection{Dual-containing quantum constacyclic BCH codes}\label{sub43}
In this subsection, we consider dual-containing quantum constacyclic BCH codes. First, we will derive a necessary and sufficient condition for the classical dual-containing constacyclic codes ($C^\perp\subseteq C$).
\begin{proposition}\label{pro3}
    Let $C$ be a constacyclic code of length $n$ over $\mathbb{F}_q$ generated by $g(x)=\sum_{i=0}^{n-k} g_ix^i$. Then $C$ is a dual-containing  code, i.e., $C^\perp\subseteq C$  if and only if $g(x)g^{\dagger}(x)$  divides $x^n-\lambda$ where $g^{\dagger}(x)=x^{n-k}g(1/x)$ denotes the reciprocal of $g(x)$ and $\lambda=\lambda^{-1}$.
\end{proposition}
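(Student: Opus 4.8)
The plan is to mirror the proof of Proposition~\ref{pro1}, interchanging the roles of the code and its dual. Writing $x^n-\lambda=h(x)g(x)$, Theorem~\ref{th8} identifies $C^\perp$ as the code generated by the reciprocal $h^\dagger(x)=x^{k}h(1/x)$ of the check polynomial (here $k=\dim C$). Since $h\mid x^n-\lambda$ and $\lambda\neq0$ we have $h(0)\neq0$, so $\deg h^\dagger=k$ and likewise $\deg g^\dagger=n-k$; and — invoking the standing hypothesis $\lambda=\lambda^{-1}$, which makes $C^\perp$ again $\lambda$-constacyclic and hence an ideal of $\mathbb{F}_q[x]/\langle x^n-\lambda\rangle$ — the inclusion $C^\perp\subseteq C=\langle g(x)\rangle$ is equivalent to the polynomial divisibility $g(x)\mid h^\dagger(x)$ in $\mathbb{F}_q[x]$, exactly as in Proposition~\ref{pro1}.

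The one identity I would establish up front is the reciprocal relation between $g^\dagger$ and $h^\dagger$. From $g(1/x)h(1/x)=(1/x)^{n}-\lambda$, multiplying through by $x^{n}$ gives
\[
g^\dagger(x)\,h^\dagger(x)=x^{n}\,g(1/x)h(1/x)=1-\lambda x^{n}=-\lambda\,(x^{n}-\lambda^{-1})=-\lambda\,(x^{n}-\lambda),
\]
the last step using $\lambda=\lambda^{-1}$. So $g^\dagger(x)h^\dagger(x)$ and $x^{n}-\lambda$ agree up to the unit $-\lambda\in\mathbb{F}_q^{*}$, and in particular $g^\dagger(x)$ is a nonzero divisor of $x^{n}-\lambda$ in $\mathbb{F}_q[x]$.

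For the forward implication I would assume $C^\perp\subseteq C$, so $h^\dagger(x)=g(x)p(x)$ for some $p(x)\in\mathbb{F}_q[x]$; substituting into the reciprocal identity gives $g(x)g^\dagger(x)\,p(x)=-\lambda\,(x^{n}-\lambda)$, whence $g(x)g^\dagger(x)\mid x^{n}-\lambda$. For the converse I would assume $g(x)g^\dagger(x)\mid x^{n}-\lambda$, say $x^{n}-\lambda=g(x)g^\dagger(x)r(x)$ with $r(x)\in\mathbb{F}_q[x]$; feeding this into the same identity yields $g^\dagger(x)h^\dagger(x)=-\lambda\,g(x)g^\dagger(x)r(x)$, and cancelling the nonzero factor $g^\dagger(x)$ in the integral domain $\mathbb{F}_q[x]$ leaves $h^\dagger(x)=-\lambda\,g(x)r(x)$, so $g(x)\mid h^\dagger(x)$ and therefore $C^\perp=\langle h^\dagger(x)\rangle\subseteq\langle g(x)\rangle=C$.

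I do not anticipate a genuine obstacle; the care points are purely bookkeeping — tracking the powers of $x$ so that $x^{n}g(1/x)h(1/x)$ is genuinely $g^\dagger(x)h^\dagger(x)$, and making the use of $\lambda=\lambda^{-1}$ explicit, since that is what rewrites $1-\lambda x^{n}$ as $-\lambda(x^{n}-\lambda)$ and what keeps $C^\perp$ in the same ambient ring. As a cross-check one can instead apply Proposition~\ref{pro1} to $C^\perp$ (whose generator is $h^\dagger$, with $(h^\dagger)^\dagger=h$): that gives $C^\perp\subseteq C\iff x^{n}-\lambda\mid h^\dagger(x)h(x)$, and combining the two factorizations $h(x)g(x)=x^{n}-\lambda$ and $g^\dagger(x)h^\dagger(x)=-\lambda(x^{n}-\lambda)$ rewrites this as $g(x)g^\dagger(x)\mid x^{n}-\lambda$; I would nonetheless present the direct argument above as it is more transparent.
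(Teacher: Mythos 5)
Your proof is correct and follows essentially the same route as the paper: both reduce the ideal inclusion $C^\perp\subseteq C$ to the divisibility $g(x)\mid h^\dagger(x)$ and then use the factorization $x^n-\lambda=h(x)g(x)$ together with $\lambda=\lambda^{-1}$ to turn this into $g(x)g^\dagger(x)\mid x^n-\lambda$. Your version is slightly cleaner in that you isolate the identity $g^\dagger(x)h^\dagger(x)=-\lambda(x^n-\lambda)$ up front and spell out the converse (which the paper dismisses as "follows easily") via cancellation of $g^\dagger(x)$ in $\mathbb{F}_q[x]$, but the underlying argument is the same.
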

\begin{proof}
   Let  $C$ be dual-containing, i.e., $C^{\perp}\subseteq C$ and $x^n-\lambda=h(x)g(x)$. 
   The $\lambda^{-1}$ constacyclic code $C^\perp$ is generated by the polynomial $h^{\dagger}(x)=x^{k}h(1/x)$. Then $h^{\dagger}(x)=g(x)p(x)$ for some $p(x)\in \frac{\mathbb{F}_q[x]}{\langle x^n-\lambda\rangle}$. This gives $x^{k}h(1/x)=g(x)p(x)$.  This implies that $x^{k}\frac{(1-\lambda x^n)}{x^ng(1/x)}=g(x)p(x)$. This implies $x^n-\lambda=g(x)g^\dagger(x)(-\lambda p(x))$. Therefore, $g(x)g^{\dagger}(x)$ divides $x^n-\lambda$.  
   Converse follows easily from the first part of the proof. 
\end{proof}
    Hence, any constacyclic code of length $n$ over $\mathbb{F}_q$ can be characterized by its generator polynomial which may be specified in terms of their zero spectrum over $\mathbb{F}_{q^m}$. Proposition \ref{pro1} can also be described in terms of zero set of the code and its dual which will be useful for the dual-containing quantum code construction.
\begin{theorem}\label{th12}
    Let $\mathcal{C}$ be a constacyclic BCH code of length $n$ over $\mathbb{F}_q$. Let $Z_{\mathcal{C}}$ and $Z_{\mathcal{C}^{\perp}}$ are two zero sets of $\mathcal{C}$ and $\mathcal{C}^{\perp}$, respectively. Then $\mathcal{C}$ is dual-containing  if and only if $Z_{\mathcal{C}}\subseteq Z_{\mathcal{C}^\perp}$.
\end{theorem}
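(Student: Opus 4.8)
The statement is the exact complement of Theorem~\ref{th9}. Since $(\mathcal{C}^{\perp})^{\perp}=\mathcal{C}$, the hypothesis ``$\mathcal{C}$ is dual-containing'' is the same as ``$\mathcal{C}^{\perp}$ is weakly self-dual'', and Theorem~\ref{th9} applied to $\mathcal{C}^{\perp}$ turns the latter into $Z_{(\mathcal{C}^{\perp})^{\perp}}\subseteq Z_{\mathcal{C}^{\perp}}$, i.e.\ $Z_{\mathcal{C}}\subseteq Z_{\mathcal{C}^{\perp}}$; so one quick route is simply to invoke Theorem~\ref{th9} for the dual code. To keep the argument self-contained and parallel to the treatment of Proposition~\ref{pro3}, the plan is to argue directly in terms of generator polynomials.

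Write $\mathcal{C}=\langle g(x)\rangle$ with $g(x)=\prod_{s\in Z_{\mathcal{C}}}(x-\beta\xi^{s})$ and, by Theorem~\ref{th8}, $\mathcal{C}^{\perp}=\langle h^{\dagger}(x)\rangle$ with $h^{\dagger}(x)=\prod_{s\in Z_{\mathcal{C}^{\perp}}}(x-\beta^{-1}\xi^{-s})$. For the forward direction, if $\mathcal{C}^{\perp}\subseteq\mathcal{C}$ then $h^{\dagger}(x)\in\langle g(x)\rangle$ and, since $\deg h^{\dagger}<n$, $g(x)\mid h^{\dagger}(x)$ in $\mathbb{F}_{q}[x]$; because $x^{n}-\lambda$ is separable over $\mathbb{F}_{q}$ (its roots $\beta\xi^{j}$ from~\eqref{eq1} are distinct as $n$ is coprime to $p$), divisibility is equivalent to containment of root sets, and transporting the two root lists back to index sets via Theorem~\ref{th8} gives $Z_{\mathcal{C}}\subseteq Z_{\mathcal{C}^{\perp}}$. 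For the converse, $Z_{\mathcal{C}}\subseteq Z_{\mathcal{C}^{\perp}}$ gives, by the same root-versus-index dictionary, $g(x)\mid h^{\dagger}(x)$, hence $\langle h^{\dagger}(x)\rangle\subseteq\langle g(x)\rangle$ in $\mathbb{F}_{q}[x]/\langle x^{n}-\lambda\rangle$, i.e.\ $\mathcal{C}^{\perp}\subseteq\mathcal{C}$; alternatively, writing $x^{n}-\lambda=g(x)h(x)$ and $h^{\dagger}(x)=g(x)p(x)$, a reciprocal-polynomial manipulation gives $h(x)=g^{\dagger}(x)p^{\dagger}(x)$ up to a unit, so $g(x)g^{\dagger}(x)\mid x^{n}-\lambda$ and Proposition~\ref{pro3} (which assumes $\lambda=\lambda^{-1}$) yields the conclusion.

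The only step that is not a verbatim transcription of the proof of Theorem~\ref{th9} is the passage between ``$g(x)\mid h^{\dagger}(x)$'' and ``$Z_{\mathcal{C}}\subseteq Z_{\mathcal{C}^{\perp}}$'': the roots of $g$ are indexed in the form $\beta\xi^{s}$ while those of $h^{\dagger}$ appear in the form $\beta^{-1}\xi^{-s}$, so I must check that the reindexing $s\mapsto -s$ identifies the two index sets consistently. This is legitimate precisely because $\lambda=\lambda^{-1}$ forces $\mathcal{C}$ and $\mathcal{C}^{\perp}$ to be constacyclic codes of the same type inside $\mathbb{F}_{q}[x]/\langle x^{n}-\lambda\rangle$, with a common root-of-unity bookkeeping inherited from the factorization~\eqref{eq1}. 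I expect this bookkeeping to be the main, though modest, obstacle; everything else follows from Theorem~\ref{th8}, Theorem~\ref{th9} and Proposition~\ref{pro3}.
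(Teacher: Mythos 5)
Your direct argument is essentially the paper's own proof of this theorem: the forward direction reads $h^{\dagger}(x)\in\mathcal{C}^{\perp}\subseteq\mathcal{C}=\langle g(x)\rangle$ to get $g(x)\mid h^{\dagger}(x)$ and hence $Z_{\mathcal{C}}\subseteq Z_{\mathcal{C}^{\perp}}$, and the converse reverses the divisibility and invokes Proposition~\ref{pro3}, exactly as you do. Your two additions --- the shortcut via Theorem~\ref{th9} applied to $\mathcal{C}^{\perp}$, and the explicit attention to the separability of $x^{n}-\lambda$ and the $\beta\xi^{s}$ versus $\beta^{-1}\xi^{-s}$ root-indexing (which the paper glosses over) --- only make the argument more careful, not different in substance.
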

\begin{proof}
   Let  $\mathcal{C}$ be a dual-containing code. Since $g(x)=\prod_{s\in Z_{\mathcal{C}}}(x-\beta\xi^s)$ and $h^{\dagger}(x)=\prod_{s\in Z_{\mathcal{C}^{\perp}}=\mathbb{Z}_n\setminus Z_{\mathcal{C}}}(x-\beta^{-1}\xi^{-s})$. Let $h^{\dagger}(x)\in \mathcal{C}^\perp\subseteq \mathcal{C}=\langle g(x)\rangle$. This implies $h^\dagger(x)=g(x)p(x)$ for some $p(x)\in \frac{\mathbb{F}_q[x]}{\langle x^n-\lambda\rangle}$. Hence, $g(x)$ is a divisor of $h^\dagger(x)$. Thus, $Z_{\mathcal{C}}\subseteq Z_{\mathcal{C}^{\perp}}$.

   Conversely, assume that $Z_{\mathcal{C}}\subseteq Z_{\mathcal{C}^{\perp}}$. This implies $h^\dagger(x)=g(x)p(x)$ for some $p(x)\in \frac{\mathbb{F}_q[x]}{\langle x^n-\lambda\rangle}$. Hence, $g(x)$ is a divisor of $h^\dagger(x)$. Therefore, from Proposition \ref{pro3}, we have $\mathcal{C}$ is dual-containing, i.e., $\mathcal{C}^\perp\subseteq \mathcal{C}$.
\end{proof}
Next, we recall the construction of QECCs from dual-containing codes in the time domain over $\mathbb{F}_{q}$.
\begin{definition}\label{def5}
    Let $\mathcal{C}=[n,k,d_1]$ be a  dual-containing ($\mathcal{C}^\perp\subseteq \mathcal{C}=[n,n-k,d_2]$) linear code over $\mathbb{F}_q$. Let $\{v_j: 0\leq j\leq q^{2k-n}\}$ be a set of coset representatives of $\mathcal{C}\slash \mathcal{C}^\perp$. Then $q^{2k-n}$ mutually orthogonal states
    $$\psi_j=\frac{1}{\sqrt{|\mathcal{C}^\perp|}}\sum_{c\in \mathcal{C}^\perp}|c+v_j\rangle$$ span a quantum error-correcting code $\mathcal{C}$  with parameters $[[n, 2k-n]]$. Based on classical decoding algorithms for the code $\mathcal{C}$, up to $(d_1 -1)/2$ errors can be corrected. Furthermore, the code can correct errors up to $(d'-1)/2$ where $d'=\mathrm{min}\{\text{wt}~c: c\in \mathcal{C} \setminus \mathcal{C}^\perp\}\geq d_1$.
\end{definition}
Next, we have the main result of this subsection.
\begin{theorem}\label{th14}
    Let $\mathcal{C}$ be an $[n,k,d_1]$ dual-containing constacyclic BCH code over $\mathbb{F}_q$, and the finite field Fourier transform of the vector $\boldsymbol{a}\in \mathcal{C}$ and $\boldsymbol{a}^\perp\in \mathcal{C}^\perp=[n,n-k,d_2]$ be the vector $\boldsymbol{A}$ and $\boldsymbol{A}^\perp$ in $\mathbb{F}_{q^m}$, such that $A_j^q=A_{qj\mod(n)}$ and $(A_j^q)^\perp=A_{qj\mod(n)}^\perp$ respectively. Then, there exists a quantum constacyclic BCH code in the spectral domain with at least, $d_2 -1$ consecutive zeros included in the code states, which can be described as 
    $\frac{1}{\sqrt{|\mathbb{M}^\perp|}}\sum_{s_1,\dots,s_n\in \mathbb{M}^\perp}|s_1s_2\cdots s_{k_1}\underbrace{0\cdots 0}_{d_2-1}s_{s_1+d}\cdots s_n+s_1's_2'\cdots s_{k_2}'\underbrace{0\cdots 0}_{d_1-1}s_{k_2+d}'\cdots s_n'\rangle$, where  $s_1',\dots,s_n'\in \mathbb{M}\setminus\mathbb{M}^\perp$; $\mathbb{M}$ and $\mathbb{M}^\perp$ are the sets of Fourier transforms of $\mathcal{C}$ and $\mathcal{C}^\perp$, respectively. The code can correct errors up to $(d'-1)/2$ where $d'=\mathrm{min}\{\text{wt}~c: c\in \mathcal{C} \setminus \mathcal{C}^\perp\}\geq d_1$.
\end{theorem}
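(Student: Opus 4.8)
The plan is to mirror the argument used for Theorem~\ref{th10}, replacing the weakly self-dual ingredients by their dual-containing counterparts. First I would start from Definition~\ref{def5}: since $\mathcal{C}$ is dual-containing, a generic basis state of the associated $[[n,2k-n]]$ stabilizer code is $\psi_j=\frac{1}{\sqrt{|\mathcal{C}^\perp|}}\sum_{c\in\mathcal{C}^\perp}|c+v_j\rangle$, where the $v_j$ run over coset representatives of $\mathcal{C}/\mathcal{C}^\perp$. The key observation is that, since $n\mid q^m-1$, the FFFT of Lemma~\ref{lem1} is an $\mathbb{F}_{q^m}$-linear bijection on $\mathbb{F}_{q^m}^n$, hence it merely relabels the computational basis: it carries $\mathcal{C}^\perp$ onto its spectral image $\mathbb{M}^\perp$, $\mathcal{C}$ onto $\mathbb{M}$, and the coset decomposition $\mathcal{C}/\mathcal{C}^\perp$ onto $\mathbb{M}/\mathbb{M}^\perp$ bijectively, so $|\mathbb{M}^\perp|=|\mathcal{C}^\perp|$ and the mutual orthogonality of the states is preserved. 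Applying the transform term by term therefore yields states of the form $|\Psi_j\rangle=\frac{1}{\sqrt{|\mathbb{M}^\perp|}}\sum_{C\in\mathbb{M}^\perp}|C+V_j\rangle$ with $V_j\in\mathbb{M}\setminus\mathbb{M}^\perp$, which is exactly the expression in the statement once the positions of the spectral nulls are made explicit.

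Next I would locate the blocks of consecutive zeros. By Theorem~\ref{th12}, dual-containment is equivalent to $Z_{\mathcal{C}}\subseteq Z_{\mathcal{C}^\perp}$, and by the BCH-type bound in Proposition~\ref{prop1} the zero set of $\mathcal{C}^\perp$ contains $d_2-1$ consecutive indices while that of $\mathcal{C}$ contains $d_1-1$ consecutive indices; combined with Theorem~\ref{th8}, these indices are precisely the positions where every spectral codeword of $\mathbb{M}^\perp$ (resp.\ $\mathbb{M}$) vanishes. Hence the $\mathbb{M}^\perp$-part of $|\Psi_j\rangle$ carries $d_2-1$ consecutive zeros and the coset-representative part, lying in $\mathbb{M}\setminus\mathbb{M}^\perp\subseteq\mathbb{M}$, carries $d_1-1$ consecutive zeros, giving the displayed form. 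Throughout I would invoke the conjugate-symmetry constraint $A_j^q=A_{qj\bmod n}$ of Theorem~\ref{th3} to guarantee that the spectral vectors chosen are genuine FFFTs of vectors over $\mathbb{F}_q$. For the distance claim I would quote Definition~\ref{def5} directly: decoding the code $\mathcal{C}$ corrects up to $(d_1-1)/2$ errors, and more sharply the code corrects up to $(d'-1)/2$ errors with $d'=\min\{\mathrm{wt}\,c: c\in\mathcal{C}\setminus\mathcal{C}^\perp\}\geq d_1$; by Proposition~\ref{prop1} applied to the coset $\mathcal{C}\setminus\mathcal{C}^\perp$, every vector there has at least $d'-1$ consecutive zeros in its spectrum, and this transfers verbatim to the transform-domain description under the FFFT relabeling.

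The main obstacle I anticipate is bookkeeping rather than depth: one must check that the spectral-null positions of $\mathcal{C}$ and of $\mathcal{C}^\perp$ can be aligned so that the two zero-blocks of lengths $d_1-1$ and $d_2-1$ sit simultaneously in the claimed coordinate ranges for every coset representative, and that choosing coset representatives $V_j$ of $\mathbb{M}/\mathbb{M}^\perp$ that respect the conjugate-symmetry constraint is always possible. This last point uses $Z_{\mathcal{C}}\subseteq Z_{\mathcal{C}^\perp}$ together with the fact that the $q$-cyclotomic cosets partition the index set, so that no representative is forced to break the symmetry; the rest of the proof is then a direct term-by-term transfer of Definition~\ref{def5} through the FFFT.
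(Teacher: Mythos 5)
Your proposal follows essentially the same route as the paper's own proof: both start from the basis states of Definition \ref{def5}, pass to the transform domain to obtain $|\Psi_j\rangle=\frac{1}{\sqrt{|\mathbb{M}^\perp|}}\sum_{C\in\mathbb{M}^\perp}|C+V_j\rangle$ with $V_j\in\mathbb{M}\setminus\mathbb{M}^\perp$, invoke Theorem \ref{th12}, Proposition \ref{prop1} and Theorem \ref{th8} to place the $d_1-1$ and $d_2-1$ consecutive spectral nulls, and conclude the error-correction capability from $d'=\min\{\mathrm{wt}\,c: c\in\mathcal{C}\setminus\mathcal{C}^\perp\}$. Your additional remarks on the FFFT being an orthogonality-preserving bijection and on the conjugate-symmetry constraint only make explicit points the paper relegates to a remark, so the argument is correct and matches the paper's.
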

\begin{proof}
    From Definition \ref{def5}, any basis state of a quantum code can be  represented as 
    $$|\psi_i\rangle=\frac{1}{\sqrt{|\mathcal{C}^\perp|}}\sum_{c\in \mathcal{C}^\perp}|c+v_i\rangle.$$  From Proposition \ref{prop1} and Theorem \ref{th8}, FFFTs of $\mathcal{C}$  and $\mathcal{C}^\perp$ are two sets of vectors over $\mathbb{F}_{q^m}$ with $d_1-1$ and $d_2-1$ consecutive spectral components in their parity spectrums, respectively. In the transform domain, the states of the quantum code can be written as 
        $$|\Psi_j\rangle=\frac{1}{\sqrt{|\mathbb{M}^\perp|}}\sum_{C\in \mathbb{M}^\perp}|C+V_j\rangle,$$  where $V_j\in \mathbb{M}\setminus\mathbb{M}^\perp$. Now, computing the FFFT of the states  in this equation, we get our required result. From Theorem \ref{th12}, there are at least $d_1-1$ and $d_2-1$ consecutive spectral components in their parity spectra of $\mathcal{C}$  and $\mathcal{C}^\perp$ such that $Z_{\mathcal{C}}\subseteq Z_{\mathcal{C}^\perp}$. Thus, for the code $\mathcal{C}$, up to $(d_1-1)/2$ and 
    for the code $\mathcal{C}^\perp$ up to $(d_2-1)/2$ errors can be corrected. Moreover,  since $d'=\text{min}\{\text{wt}~c: c\in \mathcal{C} \setminus \mathcal{C}^\perp\}$, from Proposition \ref{prop1} and Definition \ref{def5}, there are at least $d'-1$ consecutive  spectral components in the parity spectra of $\mathcal{C} \setminus \mathcal{C}^\perp$. 
\end{proof}
\begin{example}
    In Example \ref{exam2}, let the generator polynomial of $\mathcal{C}$ in the time domain is $$g(x)=(x+w^{18})(x+w^{20})(x+w^{22})(x+w^{24}).$$ Then $\mathcal{C}$ with parameters $[13,9,5]$ has $4$ consecutive spectral components in the parity spectrum. Also, the check polynomial of $\mathcal{C}$  or generator polynomial of $\mathcal{C}^\perp$ is given by 
    \begin{align*}
        h^\dagger(x)=&(x+1)(x+w^{-2})(x+w^{-4})(x+w^{-6})(x+w^{-8})(x+w^{-10})(x+w^{-12})(x+w^{-14})(x+w^{-16})\\=&(x+1)(x+w^{24})(x+w^{22})(x+w^{20})(x+w^{18})(x+w^{16})(x+w^{14})(x+w^{12})(x+w^{10}).
    \end{align*} Then $\mathcal{C}^\perp$ with parameters $[13,4,10]$ has $9$ consecutive spectrum components in the parity spectrum. It can be easily seen that $Z_{\mathcal{C}}\subseteq Z_{\mathcal{C}^{\perp}}$. Now, from Theorem \ref{th14}, we obtain the MDS (maximum-distance-separable) quantum constacyclic BCH code with parameters $[[13,5,5]]$.
\end{example}
\subsection{Construction of QECCs from repeated-root constacyclic codes}\label{repeatedroot}
In this subsection, we construct quantum codes in the spectral domain using the repeated-root constacyclic codes with block length $L=p^\eta n$ with $t\geq 0$ and $p\nmid n$. It is well-known that there is a relation between cyclotomic cosets and constacyclic codes \cite{chen2012constacyclic}. Here, we use the $q$-cyclotomic cosets modulo $n$ and generator polynomials to describe constacyclic codes satisfying  the CSS-based quantum construction.

For an integer $r$ with $0\leq r\leq n-1$, the 
$q$-cyclotomic coset of $r$ modulo $n$  is defined by

$$\mathbb{C}_r=\{r.q^j(\mathrm{mod}~n)|j=0,1,\dots\}.$$
A subset $\{r_1,r_2,\dots,r_\rho\}$ of $\{0,1,\dots,n -1\}$ is called a complete set of representatives of all $q$-cyclotomic
cosets modulo $n$ if $\mathbb{C}_{r_1},\mathbb{C}_{r_2},\dots,\mathbb{C}_{r_\rho}$ are distinct and 
$$\cup_{i=1}^{\rho}\mathbb{C}_{r_i}=\{0,1,\dots,n -1\}.$$ In particular, 
$$\mathbb{C}_r=\{r,qr,q^r,\dots,q^{e-1}s\},$$ where $e$ is the smallest positive integer such that $q^{e-1}s\cong s~(\mod n)$. The $q$-cyclotomic coset $\mathbb{C}_r$ is said to be symmetric if and only if $n-r\in \mathbb{C}_r$, and asymmetric if otherwise. Also, the asymmetric cosets always appear in pairs denoted by $\mathbb{C}_r$ and $\mathbb{C}_{-r}=\mathbb{C}_{n-r}$. Let the number of symmetric cosets be $\rho_1$ and asymmetric pairs be $\rho_2$. Since, $\xi$ is the primitive $n^{\mathrm{th}}$ root of unity in some extension field of $\mathbb{F}_q$. Let $\mathcal{M}_\xi(x)$ denotes the minimal polynomial of $\xi$ over $\mathbb{F}_q$. For cyclic codes, it is well-known that (refer \cite{huffman2010fundamentals}) 
$$x^n-1=\mathcal{M}_{\xi^{r_1}}(x)\mathcal{M}_{\xi^{r_2}}(x)\cdots\mathcal{M}_{\xi^{r_\rho}}(x)$$
with $\mathcal{M}_{\xi^{r_i}}=\prod_{j\in \mathbb{C}_r}(x-\xi^j),$ $i=1,2\dots,\rho$, all being irreducible over $\mathbb{F}_q[x]$. Hence, 
\begin{equation}\label{eq20}
    x^L-1=(x^n-1)^{p^\eta}=\mathcal{M}_{\xi^{r_1}}(x)^{p^\eta}\mathcal{M}_{\xi^{r_2}}(x)^{p^\eta}\cdots\mathcal{M}_{\xi^{r_\rho}}(x)^{p^\eta}
\end{equation}
is the irreducible decomposition of $x^n-1$ in $\mathbb{F}_q[x]$. 

Now, from Section \ref{sec2} and equation \eqref{eq20}, we can find the irreducible decomposition of constacyclic codes in terms of minimal polynomials. Here, we assume that $\beta^L=\lambda$
\begin{equation}\label{eq21}
    (\beta^{-1}x)^L-1=((\beta^{-1}x)^n-1)^{p^\eta}=\mathcal{M}_{\xi^{r_1}}(\beta^{-1}x)^{p^\eta}\mathcal{M}_{\xi^{r_2}}(\beta^{-1}x)^{p^\eta}\cdots\mathcal{M}_{\xi^{r_\rho}}(\beta^{-1}x)^{p^\eta}.
\end{equation}
This gives
\begin{equation}\label{eq22}
    x^L-\lambda=\lambda\mathcal{M}_{\xi^{r_1}}(\beta^{-1}x)^{p^\eta}\mathcal{M}_{\xi^{r_2}}(\beta^{-1}x)^{p^\eta}\cdots\mathcal{M}_{\xi^{r_\rho}}(\beta^{-1}x)^{p^\eta}.
\end{equation}
For constacyclic codes, we consider $\mathcal{M}_r=\prod_{i\in \mathbb{C}_r}(x-\beta\xi^i).$ From this we can write equation \eqref{eq22} as follows:
$$x^L-\lambda=\lambda\bigg(\prod_{l=1}^{\rho_1}(\mathcal{M}_{i_l}(\beta^{-1}x))^{p^\eta}\prod_{l=1}^{\rho_2}(\mathcal{M}_{j_l}(\beta^{-1}x)\mathcal{M}_{-j_l}(\beta^{-1}x))^{p^\eta}\bigg),$$
where $\mathbb{C}_{i_l}$ is symmetric ($1\leq l\leq \rho_1$) and $\mathbb{C}_{j_l}$, $\mathbb{C}_{-j_l}$ are asymmetric pairs ($1\leq l\leq \rho_2$). 

Let $C$ be a constacyclic code of length $L$ with generator polynomial $g(x)$ which is a divisor of $x^L-\lambda=h(x)g(x)$. Then the dual of $C$ has the generator polynomial $h^\dagger(x)$, as defind in subsection \ref{IVB}. Further, if $C_r$ is symmetric, then $\mathcal{M}_r=\mathcal{M}_r^\dagger$, however if $C_r$ and $C_{-r}$ are asymmetric pairs, then $\mathcal{M}_r=\mathcal{M}_{-r}^\dagger$. Now, we consider constacyclic BCH codes and assume that $\mathcal{M}_r$ has $d$ consecutive zeros, then $\mathcal{M}^b$ has $bd$ consecutive zeros. Next, we will find the necessary and sufficient condition for weakly self-dual constacyclic codes with length $L=p^\eta n$. 
\begin{theorem}\label{th15}
    Let $C$ be a constacyclic code of length $L$ with generator polynomial 
    $$g(x)=\bigg(\prod_{l=1}^{\rho_1}(\mathcal{M}_{i_l}(\beta^{-1}x))^{a^l}\prod_{l=1}^{\rho_2}(\mathcal{M}_{j_l}(\beta^{-1}x))^{b^l}(\mathcal{M}_{-j_l}(\beta^{-1}x)^{c^l})\bigg).$$ Then $C\subseteq C^\perp$ if and only if $2a_l\geq p^\eta$ and $b_l+c_l\geq p^\eta$. Suppose that the number of consecutive zeros of the polynomial set $\bigg\{(\mathcal{M}_{i_{l_1}}(\beta^{-1}x))^{a^l},\mathcal{M}_{j_{l_2}}^{b^l}(\beta^{-1}x),\mathcal{M}_{-j_{l_2}}^{c^l}(\beta^{-1}x)|1\leq l_1\leq \rho_1, 1\leq l_2\leq \rho_2 \bigg\}$ is $d-1$. Then there exists a $L$-qudit quantum error-correcting code with distance $d$.
\end{theorem}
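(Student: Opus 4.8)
\emph{The criterion.} The plan is to reduce $C\subseteq C^{\perp}$ to a polynomial divisibility and then to a comparison of exponents in an irreducible factorization. By Proposition~\ref{pro1} (applied with block length $L$ in place of $n$), $C\subseteq C^{\perp}$ holds if and only if $x^{L}-\lambda$ divides $g(x)g^{\dagger}(x)$. First I would write out $g^{\dagger}(x)$: reciprocation distributes over the product defining $g(x)$, and using $\mathcal{M}_{r}^{\dagger}=\mathcal{M}_{r}$ for symmetric cosets and $\mathcal{M}_{r}^{\dagger}=\mathcal{M}_{-r}$ for asymmetric pairs, it exchanges the factors $\mathcal{M}_{j_{l}}(\beta^{-1}x)$ and $\mathcal{M}_{-j_{l}}(\beta^{-1}x)$ while fixing the symmetric ones, so that up to a unit
\begin{equation*}
g(x)g^{\dagger}(x)=\prod_{l=1}^{\rho_{1}}\bigl(\mathcal{M}_{i_{l}}(\beta^{-1}x)\bigr)^{2a_{l}}\prod_{l=1}^{\rho_{2}}\bigl(\mathcal{M}_{j_{l}}(\beta^{-1}x)\mathcal{M}_{-j_{l}}(\beta^{-1}x)\bigr)^{b_{l}+c_{l}}.
\end{equation*}

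\emph{Comparing with the factorization of $x^{L}-\lambda$.} In the factorization of $x^{L}-\lambda$ recalled just before the statement, each symmetric factor $\mathcal{M}_{i_{l}}(\beta^{-1}x)$ and each asymmetric factor $\mathcal{M}_{\pm j_{l}}(\beta^{-1}x)$ appears with exponent $p^{\eta}$, and these are pairwise coprime (they are the images under $x\mapsto\beta^{-1}x$ of distinct irreducible minimal polynomials). Since $\mathbb{F}_{q}[x]$ is a unique factorization domain and $\lambda$ is a unit, $x^{L}-\lambda\mid g(x)g^{\dagger}(x)$ is equivalent to the factorwise inequalities $2a_{l}\ge p^{\eta}$ for $1\le l\le\rho_{1}$ and $b_{l}+c_{l}\ge p^{\eta}$ for $1\le l\le\rho_{2}$, which is the asserted equivalence.

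\emph{The quantum code and its distance.} Granted $C\subseteq C^{\perp}$, Definition~\ref{def3} (equivalently its transform-domain form, Theorem~\ref{th10}) produces an $L$-qudit quantum error-correcting code from the weakly self-dual repeated-root constacyclic code $C$. For the distance I would invoke the BCH bound (Proposition~\ref{prop1}): if the factors listed in the stated polynomial set jointly exhibit $d-1$ consecutive zeros---using that a factor $\mathcal{M}_{r}(\beta^{-1}x)^{b}$ contributes $bd$ consecutive zeros when $\mathcal{M}_{r}$ contributes $d$, as remarked above---then the associated code has minimum distance at least $d$, and hence the quantum code obtained has distance $d$.

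\emph{Main obstacle.} The one genuinely delicate step is the explicit computation of $g^{\dagger}(x)$: one must carry the substitution $x\mapsto\beta^{-1}x$ (with $\beta^{L}=\lambda$) and the standing hypothesis $\lambda=\lambda^{-1}$ through reciprocation so that $C^{\perp}$ is again $\lambda$-constacyclic, and must check that the exchange $\mathcal{M}_{j_{l}}\leftrightarrow\mathcal{M}_{-j_{l}}$ sends the exponent pair $(b_{l},c_{l})$ to $(c_{l},b_{l})$, so that both asymmetric factors end up with the symmetric exponent $b_{l}+c_{l}$; everything else is bookkeeping in a UFD. A secondary point, more of interpretation than of substance, is to pin down precisely what ``number of consecutive zeros of the polynomial set'' means in the repeated-root (multiplicity greater than one) setting and to confirm that this run indeed lower-bounds the distance of the quantum code produced by Definition~\ref{def3}.
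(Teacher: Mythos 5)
Your argument is correct and reaches the same exponent inequalities as the paper, but the two proofs pass through dual formulations of the self-orthogonality criterion. You invoke Proposition~\ref{pro1} and test $x^{L}-\lambda \mid g(x)g^{\dagger}(x)$, which requires computing $g^{\dagger}$ (reciprocation fixes the symmetric factors and swaps $\mathcal{M}_{j_{l}}\leftrightarrow\mathcal{M}_{-j_{l}}$, so $gg^{\dagger}$ carries exponents $2a_{l}$ and $b_{l}+c_{l}$ to be compared against the uniform exponent $p^{\eta}$ in the factorization of $x^{L}-\lambda$). The paper instead writes down the generator $h^{\dagger}(x)$ of $C^{\perp}$ explicitly, with exponents $p^{\eta}-a_{l}$, $p^{\eta}-b_{l}$, $p^{\eta}-c_{l}$ on the factors $\mathcal{M}_{i_{l}}(\beta^{-1}x)$, $\mathcal{M}_{-j_{l}}(\beta^{-1}x)$, $\mathcal{M}_{j_{l}}(\beta^{-1}x)$ respectively, and imposes $h^{\dagger}(x)\mid g(x)$ via the zero-set containment $Z_{C^{\perp}}\subseteq Z_{C}$ of Theorem~\ref{th9}; the resulting factorwise inequalities $p^{\eta}-a_{l}\le a_{l}$, $p^{\eta}-b_{l}\le c_{l}$, $p^{\eta}-c_{l}\le b_{l}$ are exactly your conditions. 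Both routes are equivalent exponent bookkeeping in the UFD $\mathbb{F}_{q}[x]$: yours avoids constructing $h^{\dagger}$ but must justify that reciprocation distributes over the repeated-root factorization (a fact the paper records just before the theorem), while the paper's version makes the dual generator explicit, which it then reuses verbatim for the dual-containing case in Theorem~\ref{th16}. The distance claim is handled identically in both proofs, via Proposition~\ref{prop1} for the run of consecutive spectral zeros and Theorem~\ref{th10} (with Definition~\ref{def3}) for the resulting $L$-qudit code. Your two closing caveats --- that $\lambda=\lambda^{-1}$ is needed for $C^{\perp}$ to be constacyclic of the same type, and that ``consecutive zeros of the polynomial set'' needs a precise reading when multiplicities exceed one --- are genuine, and are left equally implicit in the paper's own proof.
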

\begin{proof}
    Let $C$ be a constacyclic code of length $L$ with generator polynomial 
    $$g(x)=\bigg(\prod_{l=1}^{\rho_1}(\mathcal{M}_{i_l}(\beta^{-1}x))^{a^l}\prod_{l=1}^{\rho_2}(\mathcal{M}_{j_l}(\beta^{-1}x))^{b^l}(\mathcal{M}_{-j_l}(\beta^{-1}x)^{c^l})\bigg).$$ Then the generator polynomial of its dual code $C^\perp$ is given by
    $$h^\dagger(x)=\bigg(\prod_{l=1}^{\rho_1}(\mathcal{M}_{i_l}(\beta^{-1}x))^{p^\eta-a^l}\prod_{l=1}^{\rho_2}(\mathcal{M}_{-j_l}(\beta^{-1}x))^{p^\eta-b^l}(\mathcal{M}_{j_l}(\beta^{-1}x))^{p^\eta-c^l})\bigg).$$ From Theorem \ref{th8}, $C$ is weakly self-dual code if and only if $Z_{C^\perp}\subseteq Z_C$. Therefore, $C\subseteq C^\perp$ if and only if $h^\dagger(x)$ divides $g(x)$. Hence, $p^\eta-a^l\leq a^l$,  $p^\eta-b^l\leq c^l$ and  $p^\eta-c^l\leq b^l$. Thus,  $C\subseteq C^\perp$ if and only if $2a_l\geq p^\eta$ and $b_l+c_l\geq p^\eta$. From Proposition \ref{pro3}, there are $d-1$ consecutive zeros in the parity spectrum of $C$. Now, From Theorems \ref{th10}, we get the required result.
\end{proof}
Next, we have a direct consequence of the Theorem \ref{th15}.
\begin{corollary}\label{Cor1}
    Let $C$ be a constacyclic code of length $L$ with generator polynomial 
    $$g(x)=\bigg(\prod_{l=1}^{\rho_1}(\mathcal{M}_{i_l}(\beta^{-1}x))^{a^l}\bigg),$$ i.e., all $q$-cyclotomic cosets are symmetric. Also, assume that $0 \leq 2a_l\geq p^\eta$. Then there exists a quantum error-correcting code of length $L$, dimension $k$ and minimum distance $d$, where $d-1$ is the number of consecutive zeros of the polynomial set $\bigg\{(\mathcal{M}_{i_{l}}(\beta^{-1}x))^{a^l}|1\leq l\leq \rho_1 \bigg\}$.
\end{corollary}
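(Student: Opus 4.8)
The plan is to obtain this as the degenerate case $\rho_2 = 0$ of Theorem \ref{th15}. First I would observe that the hypothesis ``all $q$-cyclotomic cosets are symmetric'' means there are no asymmetric pairs, so the factor $\prod_{l=1}^{\rho_2}(\mathcal{M}_{j_l}(\beta^{-1}x))^{b^l}(\mathcal{M}_{-j_l}(\beta^{-1}x))^{c^l}$ appearing in Theorem \ref{th15} is empty and $g(x)$ has exactly the shape $\prod_{l=1}^{\rho_1}(\mathcal{M}_{i_l}(\beta^{-1}x))^{a^l}$ assumed here. Then the condition ``$b_l+c_l \geq p^\eta$'' of Theorem \ref{th15} is vacuous, and only ``$2a_l \geq p^\eta$'' survives; note that $0 \leq a_l \leq p^\eta$ automatically, the upper bound being forced by $g(x) \mid x^L-\lambda$ via equation~\eqref{eq22}, since there each irreducible factor $\mathcal{M}_{i_l}(\beta^{-1}x)$ occurs with multiplicity exactly $p^\eta$.

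Next I would invoke the chain of implications already set up. By Theorem \ref{th9} (equivalently Theorem \ref{th8}), the inequalities $2a_l \geq p^\eta$ for every $l$ are equivalent to $h^\dagger(x)\mid g(x)$, i.e.\ to $Z_{C^\perp} \subseteq Z_C$, hence to $C \subseteq C^\perp$, so $C$ is weakly self-dual; Definition \ref{def3} together with Theorem \ref{th10} then produce a quantum constacyclic code over qudits of length $L$ whose dimension $k$ is read off from $\dim C = L - \deg g$ through the $[[n,n-2\dim C]]$ formula, and whose code states carry the consecutive spectral nulls of the parity spectrum of $C$. Finally, Proposition \ref{prop1} converts the combinatorial datum ``the polynomial set $\{(\mathcal{M}_{i_l}(\beta^{-1}x))^{a^l} : 1 \leq l \leq \rho_1\}$ contributes $d-1$ consecutive elements to $Z_C$'' into the bound that every nonzero codeword has weight at least $d$, i.e.\ that the minimum distance is $d$. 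Collecting these gives the asserted $[[L,k,d]]$ quantum code.

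There is no genuine obstacle: the substance is entirely contained in Theorem \ref{th15}, and the corollary is just the readout when $\rho_2 = 0$. The only items worth a line of verification are (i) that the conjugate-symmetry constraint of Theorem \ref{th3} is met, which holds because each chosen coset $\mathbb{C}_{i_l}$ is closed under multiplication by $q$ modulo $n$ and the exponent $a^l$ is applied uniformly across that coset, so the spectral vectors in question really are FFFTs of vectors over $\mathbb{F}_q$; and (ii) the elementary bookkeeping from the exponents $a^l$ to the explicit length, dimension, and designed distance, which is where a careful count of $\deg g$ and of the longest run of consecutive zeros in $Z_C$ is carried out.
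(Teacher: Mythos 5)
Your proposal is correct and matches the paper's treatment: the paper states this corollary as a direct consequence of Theorem \ref{th15}, obtained precisely by setting $\rho_2=0$ so that only the symmetric-coset condition $2a_l\geq p^\eta$ survives, with the quantum code then produced via Theorem \ref{th10} and the distance read off from the consecutive zeros as in Proposition \ref{prop1}. Your added remarks on the vacuity of the $b_l+c_l$ condition and on verifying the conjugate-symmetry constraint are consistent with, and slightly more careful than, the paper's own (unwritten) argument.
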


Now, we will find the necessary and sufficient condition for dual-containing constacyclic codes with length $L=p^\eta n$. 

\begin{theorem}\label{th16}
    Let $C$ be a constacyclic code of length $L$ with generator polynomial 
    $$g(x)=\bigg(\prod_{l=1}^{\rho_1}(\mathcal{M}_{i_l}(\beta^{-1}x))^{a^l}\prod_{l=1}^{\rho_2}(\mathcal{M}_{j_l}(\beta^{-1}x))^{b^l}(\mathcal{M}_{-j_l}(\beta^{-1}x)^{c^l})\bigg).$$ Then $C^\perp\subseteq C$ if and only if $2a_l\leq p^\eta$ and $b_l+c_l\leq p^\eta$. Suppose that the number of consecutive zeros of the polynomial set $\bigg\{(\mathcal{M}_{i_{l_1}}(\beta^{-1}x))^{a^l},\mathcal{M}_{j_{l_2}}^{b^l}(\beta^{-1}x),\mathcal{M}_{-j_{l_2}}^{c^l}(\beta^{-1}x)|1\leq l_1\leq \rho_1, 1\leq l_2\leq \rho_2 \bigg\}$ is $d-1$. Then there exists a $L$-qudit quantum error-correcting code with distance $d$.
\end{theorem}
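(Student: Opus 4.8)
The plan is to mirror the proof of Theorem~\ref{th15}, swapping each weakly self-dual ingredient for its dual-containing counterpart: Theorem~\ref{th12} (respectively Proposition~\ref{pro3}) in place of the weakly self-dual zero-set (respectively polynomial) criterion, and Theorem~\ref{th14} in place of Theorem~\ref{th10}.

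First I would write down the generator polynomial of $C^\perp$. With $h(x)=(x^L-\lambda)/g(x)$ and the reciprocal facts recorded just above the statement --- $\mathcal{M}_r=\mathcal{M}_r^\dagger$ when $\mathbb{C}_r$ is symmetric and $\mathcal{M}_r=\mathcal{M}_{-r}^\dagger$ when $\mathbb{C}_r,\mathbb{C}_{-r}$ form an asymmetric pair, together with Theorem~\ref{th8} --- the generator of $C^\perp$ is
$$h^\dagger(x)=\prod_{l=1}^{\rho_1}\big(\mathcal{M}_{i_l}(\beta^{-1}x)\big)^{p^\eta-a^l}\prod_{l=1}^{\rho_2}\big(\mathcal{M}_{-j_l}(\beta^{-1}x)\big)^{p^\eta-b^l}\big(\mathcal{M}_{j_l}(\beta^{-1}x)\big)^{p^\eta-c^l},$$
which is the same $h^\dagger$ appearing in the proof of Theorem~\ref{th15}, using that $x^L-\lambda$ is (up to the unit $\lambda$) the product of all these irreducible factors each raised to the $p^\eta$-th power.

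Next I would apply Theorem~\ref{th12}: $C^\perp\subseteq C$ if and only if $Z_C\subseteq Z_{C^\perp}$, which (via Proposition~\ref{pro3}) is equivalent to $g(x)\mid h^\dagger(x)$ in $\frac{\mathbb{F}_q[x]}{\langle x^L-\lambda\rangle}$. Reading off the multiplicities of the pairwise coprime irreducible factors $\mathcal{M}_{i_l}(\beta^{-1}x)$, $\mathcal{M}_{j_l}(\beta^{-1}x)$, $\mathcal{M}_{-j_l}(\beta^{-1}x)$ on both sides, this divisibility becomes $a^l\le p^\eta-a^l$ for each symmetric coset and $b^l\le p^\eta-c^l$, $c^l\le p^\eta-b^l$ for each asymmetric pair, i.e.\ $2a_l\le p^\eta$ and $b_l+c_l\le p^\eta$, which is the claimed condition.

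For the parameters, I would proceed exactly as in Theorem~\ref{th15}: by the remark that an $a$-fold power of a factor with $d$ consecutive zeros has $ad$ consecutive zeros, the hypothesis on the polynomial set $\{(\mathcal{M}_{i_{l_1}}(\beta^{-1}x))^{a^l},\mathcal{M}_{j_{l_2}}^{b^l}(\beta^{-1}x),\mathcal{M}_{-j_{l_2}}^{c^l}(\beta^{-1}x)\}$ forces $d-1$ consecutive zeros into the parity spectrum of $C$, so by Proposition~\ref{prop1} the minimum distance is $d$; since the conjugate-symmetry constraint of Theorem~\ref{th3} still holds for these spectral vectors, Theorem~\ref{th14} then produces an $L$-qudit quantum error-correcting code of distance $d$. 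The main obstacle is the repeated-root bookkeeping: one must verify that reading the divisibility $g\mid h^\dagger$ factor-by-factor is legitimate in the non-reduced quotient ring $\mathbb{F}_q[x]/\langle x^L-\lambda\rangle$ (i.e.\ that the $\mathcal{M}_r(\beta^{-1}x)^{p^\eta}$ are pairwise coprime so unique factorization applies to the relevant degrees), and that consecutive-zero counting is genuinely additive under taking $p^\eta$-th powers and products of the $\mathcal{M}_r(\beta^{-1}x)$; once these are in hand, the exponent comparison and the appeal to Theorem~\ref{th14} are routine.
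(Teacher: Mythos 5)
Your proposal matches the paper's proof essentially step for step: the same expression for $h^\dagger(x)$, the same reduction via Theorem~\ref{th12} to the divisibility $g(x)\mid h^\dagger(x)$, the same exponent comparison yielding $2a_l\leq p^\eta$ and $b_l+c_l\leq p^\eta$, and the same appeal to the consecutive-zero count together with Theorem~\ref{th14} for the quantum code. The only differences are cosmetic: you explicitly flag the coprimality bookkeeping that the paper leaves implicit, and you invoke Proposition~\ref{prop1} for the distance where the paper cites Proposition~\ref{pro3}, which if anything is the more apt reference.
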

\begin{proof}
    Let $C$ be a constacyclic code of length $L$ with generator polynomial 
    $$g(x)=\bigg(\prod_{l=1}^{\rho_1}(\mathcal{M}_{i_l}(\beta^{-1}x))^{a^l}\prod_{l=1}^{\rho_2}(\mathcal{M}_{j_l}(\beta^{-1}x))^{b^l}(\mathcal{M}_{-j_l}(\beta^{-1}x)^{c^l})\bigg).$$ Then the generator polynomial of its dual code $C^\perp$ is given by
    $$h^\dagger(x)=\bigg(\prod_{l=1}^{\rho_1}(\mathcal{M}_{i_l}(\beta^{-1}x))^{p^\eta-a^l}\prod_{l=1}^{\rho_2}(\mathcal{M}_{-j_l}(\beta^{-1}x))^{p^\eta-b^l}(\mathcal{M}_{j_l}(\beta^{-1}x))^{p^\eta-c^l})\bigg).$$ From Theorem \ref{th12}, $C$ is dual-containing code if and only if $Z_{C}\subseteq Z_{C^\perp}$. Therefore, $C^\perp\subseteq C$ if and only if $g(x)$ divides $h^\dagger(x)$. Hence, $p^\eta-a^l\geq a^l$,  $p^\eta-b^l\geq c^l$ and  $p^\eta-c^l\geq b^l$. Thus,  $C^\perp\subseteq C$ if and only if $2a_l\leq p^\eta$ and $b_l+c_l\leq p^\eta$. From Proposition \ref{pro3}, there are $d-1$ consecutive zeros in the parity spectrum of $C$. Now, from Theorem \ref{th14}, we get the required result.
\end{proof}
Next, we have the following result as a direct consequence of Theorem \ref{th16}.
\begin{corollary}
    Let $C$ be a constacyclic code of length $L$ with generator polynomial 
    $$g(x)=\bigg(\prod_{l=1}^{\rho_1}(\mathcal{M}_{i_l}(\beta^{-1}x))^{a^l}\bigg),$$ i.e., all $q$-cyclotomic cosets are symmetric. Also, assume that $0 \leq 2a_l\leq p^\eta$. Then there exists a quantum error-correcting code of length $N$, dimension $k$ and minimum distance $d$, where $d-1$ is the number of consecutive zeros of the polynomial set $\bigg\{(\mathcal{M}_{i_{l}}(\beta^{-1}x))^{a^l}|1\leq l\leq \rho_1 \bigg\}$.
\end{corollary}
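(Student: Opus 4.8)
The plan is to derive this corollary directly from Theorem~\ref{th16} by specializing to the case $\rho_2 = 0$, i.e. when the complete set of $q$-cyclotomic coset representatives modulo $n$ contains only symmetric cosets. First I would note that under this hypothesis the factorization reduces to $x^L-\lambda = \lambda\prod_{l=1}^{\rho_1}(\mathcal{M}_{i_l}(\beta^{-1}x))^{p^\eta}$, so every constacyclic code whose generator polynomial is a product of minimal-polynomial powers has exactly the form $g(x)=\prod_{l=1}^{\rho_1}(\mathcal{M}_{i_l}(\beta^{-1}x))^{a^l}$ with $0\le a^l\le p^\eta$; this is precisely the situation of Theorem~\ref{th16} with all the exponents $b^l,c^l$ absent.

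Next I would apply Theorem~\ref{th16}. Since for a symmetric coset one has $\mathcal{M}_r=\mathcal{M}_r^\dagger$, the dual code $C^\perp$ is generated by $h^\dagger(x)=\prod_{l=1}^{\rho_1}(\mathcal{M}_{i_l}(\beta^{-1}x))^{p^\eta-a^l}$. The only constraint in Theorem~\ref{th16} that survives when $\rho_2=0$ is $2a_l\le p^\eta$; the condition $b_l+c_l\le p^\eta$ on asymmetric pairs is vacuous. Thus the hypothesis $0\le 2a_l\le p^\eta$ gives $p^\eta-a^l\ge a^l$ for every $l$, i.e. $h^\dagger(x)\mid g(x)$ in $\mathbb{F}_q[x]/\langle x^L-\lambda\rangle$, which by Proposition~\ref{pro3} (equivalently Theorem~\ref{th12}) is exactly the statement $Z_C\subseteq Z_{C^\perp}$, i.e. $C^\perp\subseteq C$. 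With $C$ now dual-containing, Theorem~\ref{th14} (the spectral-domain CSS construction) yields a quantum constacyclic code of length $L$ whose minimum distance $d$ satisfies: $d-1$ equals the number of consecutive zeros present in the parity spectrum of $C$, which, by Proposition~\ref{prop1} together with the fact that the $a^l$-th power of a minimal polynomial with $e$ consecutive zeros has $a^le$ consecutive zeros, is the number of consecutive zeros of the set $\{(\mathcal{M}_{i_l}(\beta^{-1}x))^{a^l}\mid 1\le l\le\rho_1\}$. This gives the claimed parameters (reading the ``length $N$'' in the statement as $L$), the quantum dimension being determined as in Definition~\ref{def5}.

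I do not anticipate a genuine difficulty here, since the result is essentially a restriction of an already-proved theorem. The one point that needs care is the bookkeeping of \emph{consecutive} zeros after each $\mathcal{M}_{i_l}$ is raised to the power $a^l$ and the factors are multiplied together: one must track the union of the zero sets of all the factors of $g$ and take the longest run of consecutive exponents there, exactly as in the proof of Theorem~\ref{th16}, rather than examining any single factor in isolation — this is the only place where an error could slip in.
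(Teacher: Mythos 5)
Your proposal is correct and follows exactly the route the paper intends: the paper offers no separate proof, presenting the corollary as a direct consequence of Theorem~\ref{th16} obtained by setting $\rho_2=0$ so that the asymmetric-pair condition $b_l+c_l\leq p^\eta$ is vacuous and only $2a_l\leq p^\eta$ remains, after which Theorem~\ref{th14} supplies the quantum code. One small slip: from $p^\eta-a^l\geq a^l$ you should conclude $g(x)\mid h^{\dagger}(x)$ (not $h^{\dagger}(x)\mid g(x)$), which is the divisibility that Theorem~\ref{th12} equates with $Z_C\subseteq Z_{C^\perp}$ and hence $C^\perp\subseteq C$; your stated inequality and final conclusion are nevertheless the right ones.
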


\begin{example}
Let $L=3^2.13=117$, $p^\eta=9$ and $q=3^3$. Then in $\mathbb{F}_{27}[x]$, we have
\begin{align*}
    x^{117}-w^{13}=(x^{13}-w^{13})^{9}=& (x+1)^9(x+w^2)^9(x+w^4)^9(x+w^6)^9(x+w^8)^9\\&(x+w^{10})^9(x+w^{12})^9(x+w^{14})^9(x+w^{16})^9(x+w^{18})^9\\&(x+w^{20})^9(x+w^{22})^9(x+w^{24})^9.
\end{align*} 
Here, one can easily check that only $C_0=\{0\}$ is a symmetric $q$-cyclotomic coset, as $n-r\in C_r$. Also, $C_2=\{2\}$, $C_4=\{4\}$, $C_6=\{6\}$, $C_8=\{8\}$, $C_{10}=\{10\}$, $C_{12}=\{12\}$, $C_{14}=\{1\}=C_1$, $C_{16}=\{3\}=C_3$, $C_{18}=\{5\}=C_5$, $C_{20}=\{7\}=C_7$, $C_{22}=\{9\}=C_9$, $C_{24}=\{11\}=C_{11}$. Here, $\{C_{2},C_{11}\}$, $\{C_{4},C_{9}\}$, $\{C_{6},C_{7}\}$, $\{C_{8},C_{5}\}$, $\{C_{10},C_{3}\}$ and $\{C_{12},C_{1}\}$ are asymmetric $q$-cyclotomic 
 cosets that appeared in pairs, as $n-r\notin C_r$. Let 
\begin{align*}
    g(x)=& (x+1)^5(x+w^2)^5(x+w^4)^5(x+w^6)^5(x+w^8)^5\\&(x+w^{10})^5(x+w^{12})^5(x+w^{14})^5(x+w^{16})^5(x+w^{18})^5\\&(x+w^{20})^5(x+w^{22})^5(x+w^{24})^5
\end{align*} 
and 
\begin{align*}
    h(x)=& (x+1)^4(x+w^2)^4(x+w^4)^4(x+w^6)^4(x+w^8)^4\\&(x+w^{10})^4(x+w^{12})^4(x+w^{14})^4(x+w^{16})^4(x+w^{18})^4\\&(x+w^{20})^4(x+w^{22})^4(x+w^{24})^4.
\end{align*}
Now,
\begin{align*}
    h^\dagger(x)=& (x+1)^4(x+w^{-2})^4(x+w^{-4})^4(x+w^{-6})^4(x+w^{-8})^4(x+w^{-10})^4(x+w^{-12})^4(x+w^{-14})^4\\&(x+w^{-16})^4(x+w^{-18})^4(x+w^{-20})^4(x+w^{-22})^4(x+w^{-24})^4\\
    =&(x+1)^4(x+w^{24})^4(x+w^{22})^4(x+w^{20})^4(x+w^{18})^4(x+w^{16})^4(x+w^{14})^4(x+w^{12})^4(x+w^{10})^4\\&(x+w^{8})^4(x+w^{6})^4(x+w^{4})^4(x+w^{2})^4
\end{align*} 
Since $p^\eta=9$, 
$2a_l\geq 9$, $b_l+c_l\geq 9$. Hence, from Theorem \ref{th15}, $C\subseteq C^\perp$. Then $\mathcal{C}$ with parameters $[117,52,3]$ has $2$ consecutive spectral components in the parity spectrum. Also, $\mathcal{C}^\perp$ with parameters $[117,65,3]$ has $2$ consecutive spectral components in the parity spectrum. Now, from Theorem \ref{th15}, we obtain the  quantum constacyclic code with parameters $[[117,13,3]]$, which is far away from MDS ( $19< 119, ~~2d+k\leq n+2)$. It can be inferred that quantum constacyclic BCH codes are more efficient than repeated root constacyclic codes.
\end{example}
\begin{example}
  Let $L=3.8=24$, $p^\eta=3$ and $q=3^4$. Then in $\mathbb{F}_{81}[x]$, we have
\begin{align*}
    x^{24}-2=(x^{8}-2)^{3}=& (x+w^5)^3(x+w^{15})^3(x+w^{25})^3(x+w^{35})^3(x+w^{45})^3(x+w^{55})^3(x+w^{65})^3\\&(x+w^{75})^3.
\end{align*} 
We compute $C_5=\{5\}=C_{45}$, $C_{15}=\{7\}=C_{55}=C_7$, $C_{25}=\{1\}=C_{65}=C_1$ and  $C_{35}=\{3\}=C_{75}=C_3$. Here, $\{C_{5},C_{45}\}$, $\{C_{15},C_{55}\}$, $\{C_{25},C_{65}\}$ and $\{C_{35},C_{75}\}$ are asymmetric $q$-cyclotomic 
 cosets that appeared in pairs. Hence all $q$-cyclotomic cosets are asymmetric, as $n-r \notin C_r$. Let 
\begin{align*}
    g(x)=& (x+w^5)^3(x+w^{15})^3(x+w^{25})^3(x+w^{45})^9(x+w^{55})^3(x+w^{65})^3.
\end{align*}  
Then
\begin{align*}
    h(x)=& (x+w^{35})^3(x+w^{75})^3
\end{align*} 
and
\begin{align*}
    h(x)^\dagger=& (x+w^{-35})^3(x+w^{-75})^3\\
    =&(x+w^{55})^3(x+w^{5})^3.
\end{align*}  
Since $b_l+c_l\geq 3$ and $p^\eta=3$, from Theorem \ref{th15}, $C\subseteq C^\perp$. Then $\mathcal{C}$ with parameters $[24,6,6]$ has $5$ consecutive spectral components in the parity spectrum. Also, $\mathcal{C}^\perp$ with parameters $[24,18,3]$ has $2$ consecutive spectral components in the parity spectrum. Now, from Theorem \ref{th15}, we obtain the  quantum constacyclic code with parameters $[[24,12,3]]$, which is far away from MDS ( $18< 26, ~~2d+k\leq n+2)$. It can be inferred  that quantum constacyclic BCH codes are more efficient than repeated root constacyclic codes.  
\end{example}
    \begin{example}
  Let $L=2^2.5=20$. Then in $\mathbb{F}_{16}[x]$, we have
\begin{align*}
    x^{20}-1=(x^{5}-1)^{4}=& (x+1)^4(x+w^3)^4(x+w^{6})^4(x+w^{9})^4(x+w^{12})^4.
\end{align*} 
Here, $n=5$, $q^m=2^4=16$ and $n|q^m-1$.
Also,  $C_{0}=\{0\}$, $C_1=\{1,2,4,3\}=C_{2}=C_3=C_4$, all are symmetric $q$-cyclotomic cosets, as $n-r \in C_r$. Here $\xi=w^3$. Since 
\begin{align*}
    x^{L}-1=(x^{n}-1)^{p^\eta}=& 
    \mathcal{M}_{\xi^{r_1}}(x)^{p^\eta}\mathcal{M}_{\xi^{r_2}}(x)^{p^\eta}\cdots\mathcal{M}_{\xi^{r_\rho}}(x)^{p^\eta}
\end{align*} 
where $\mathcal{M}_{\xi^{r_i}}=\prod_{j\in \mathbb{C}_r}(x-\xi^j),$ $i=1,2\dots,\rho_1$. Then  in $\mathbb{F}_{2}[x]$, we have
\begin{align*}
    x^{20}-1=(x^{5}-1)^{4}=& \mathcal{M}_{\xi^{r_1}}(x)^{4}\mathcal{M}_{\xi^{r_2}}(x)^{4}\\
    =&\prod_{j\in \mathbb{C}_0}(x-\xi^j)\prod_{j\in \mathbb{C}_1}(x-\xi^j)\\
    =&(x-1)^4(x-\xi)^4(x-\xi^2)^4(x-\xi^3)^4(x-\xi^4)^4\\
    =&(x-1)^4(x-w^3)^4(x-w^6)^4(x-w^9)^4(x-w^4)^4\\
    =&(x+1)^4(x^4 + x^3 + x^2 + x + 1)^4.
\end{align*} 
Let
\begin{align*}
    g(x)=& (x+1)^3(x^4 + x^3 + x^2 + x + 1)^3.
\end{align*}  
Finally, we have
\begin{align*}
    h^\dagger(x)=h(x)=& (x+1)(x^4 + x^3 + x^2 + x + 1).
\end{align*} 
Since $a_l=3$, $2a_l\geq 4$ and $p^\eta=4$, from Theorem \ref{th15}, $C\subseteq C^\perp$. We have $\mathcal{C}$ with parameters $[20,5,4]$ having $3$ consecutive spectral components in the parity spectra. Also,  $\mathcal{C}^\perp$ with parameters $[20,15,2]$ has $1$ spectral component in the parity spectrum. Now, from Theorem \ref{th15}, we obtain the  quantum constacyclic code with parameters $[[20,5,2]]$, which is far away from MDS ( $9< 22, ~~2d+k\leq n+2)$. We infer that quantum cyclic BCH codes are more efficient than repeated root cyclic codes.  
\end{example}

\section{ Encoding of quantum constacyclic BCH code}\label{secA}
\subsection{Encoding architecture using Q(FFFT)}\label{sub51}
The encoding procedure of the constacyclic BCH 
code involves performing the inverse of the quantum version of the finite field Fourier transform ($\mathrm{FFFT}$) on the message state along with a few ancilla qudits.  Here, our framework is based on the idea used in \cite{nadkarni2021entanglement}. The procedure to construct the encoding operator comprises two steps: 
\begin{enumerate}
   \item To obtain the stabilizers of the initial
state.
\item To obtain the encoding operator as the Clifford
operator (Clifford operators, which convert one basis operator to another, are unitary operators that are necessary when working with basis operators \cite{farinholt2014ideal}) that transforms the code stabilizers to stabilizers of the initial state.
\end{enumerate}
From Lemma \ref{lem1}, the inverse finite field Fourier transform (IFFFT) on a vector $\boldsymbol{A}=(A_0,A_1,\dots,\\A_{n-1})$ is
\begin{align}
\mathrm{IFFFT}(\boldsymbol{A})=\!\! (a_0,a_1,\dots,a_{n-1}),\text{where }a_i\!=\!\!=\frac{1}{n\beta^i}\sum_{j=0}^{n-1}\xi^{-ij}A_j, \label{eqn:FFFT_inv}
\end{align}
where $\xi$ is the primitive element of $\mathbb{F}_{q^m}$.

Throughout, unless explicitly specified, $\mathrm{FFFT}$ and $\mathrm{IFFFT}$ are over $\mathbb{F}_{q^m}$ and operate on $n$ qudits. The operator $Q=\mathrm{Q}(\mathrm{FFFT})$ denotes the action of $\mathrm{FFFT}$ on the basis states of the qudits. For example, $\mathrm{Q}(\mathrm{FFFT})\ket{\boldsymbol{c}} = \ket{\mathrm{FFFT}(\boldsymbol{c})}$ for $\boldsymbol{c}\in\mathbb{F}_{q}^n$. Thus, we obtain the $\mathrm{Q}(\mathrm{FFFT})$ to be
\begin{align}
\mathrm{Q}(\mathrm{FFFT}) = \underset{\gamma\in\mathbb{F}_{q}^{n}}{\sum}\ket{\mathrm{FFFT}(\boldsymbol{\gamma})}\bra{\boldsymbol{\gamma}}. \label{eqn:QFFFT}
\end{align}

We note that the inverse of the operator $\mathrm{Q}(\mathrm{FFFT})$ is
\begin{align}
\mathrm{Q}(\mathrm{IFFFT}) =(\mathrm{Q}(\mathrm{FFFT}))^{-1} \!\!=\! (\mathrm{Q}(\mathrm{FFFT}))^{\dagger} \!= \!\!\underset{\gamma\in\mathbb{F}_{q}^{n}}{\sum}\!\!\!\ket{\boldsymbol{\gamma}}\!\bra{\mathrm{FFFT}(\boldsymbol{\gamma})}\!.\label{eqn:QFFFTinv}
\end{align}
 
We observe that the parity check matrix of the constacyclic BCH code in equation \eqref{eqn:H_b_RS} and the $\mathrm{FFFT}$ matrix are isomorphic. Thus, we can choose the encoding operator of the quantum constacyclic BCH code to be $\mathcal{E} = (\mathrm{Q}(\mathrm{FFFT}))^{-1}$ and obtain the initial state $\ket{\psi_0}$ as the state stabilized by the operators $(\mathrm{Q}(\mathrm{FFFT}))S_i((\mathrm{Q}(\mathrm{FFFT}))^{-1}$, where $S_i\in S$, the stabilizer group of the quantum constacyclic BCH code and $S_i$'s  are the minimal generators of $S$. The encoding procedure involves performing the encoding operation $(\mathrm{Q}(\mathrm{FFFT}))^{-1}$ on the transmitter-end qudits of $\ket{\psi_0}$. Further, $(\mathrm{Q}(\mathrm{FFFT}))S_i((\mathrm{Q}(\mathrm{FFFT}))^{-1}$ is obtained by conjugating the stabilizer generator $S_i$ with $(\mathrm{Q}(\mathrm{FFFT})^{-1}$. The encoding operator $(\mathrm{Q}(\mathrm{FFFT}))^{-1}$ operates only on the $n$ codeword qudits that are present at the transmitter end. The operators $(\mathrm{Q}(\mathrm{FFFT})S_i((\mathrm{Q}(\mathrm{FFFT}))^{-1}$, where $S_i\in S$, are called as the \emph{initial stabilizers} as they stabilize the initial state $\ket{\psi_0}$.

Moreover, to obtain the initial state $\ket{\psi_0}$ that is stabilized by the initial stabilizers, we perform the following two steps:
\begin{itemize}
\item First  we simplify the form of the initial stabilizers.
\item We obtain the state stabilized by the simplified form of the initial stabilizers. This state is the initial state $\ket{\psi_0}$. The initial state $\ket{\psi_0}$ of the codeword qudits obtained is a tensor product of an $(n-2k)$-qudit message state $\ket{\phi}$ with a few ancilla qudits. 
\end{itemize}
\subsection{Simplification of the initial stabilizers}\label{sub52}
Here, we obtain the simplified version of the initial stabilizers provided in Subsection \ref{sub51}. For our constructed quantum constacyclic BCH code, the stabilizer generator is $S_i$. Thus, the initial stabilizers are 
\begin{align}
(\mathrm{Q}(\mathrm{FFFT})S_i((\mathrm{Q}(\mathrm{FFFT}))^{-1}.\label{eqn:InitStab}
\end{align}
 
We recall that the matrix $\mathcal{H_{\mathrm{BCH}}}$ of the quantum constacyclic BCH code is of the form provided in equation \eqref{eqHBCH}. The operators corresponding to $\mathcal{H_{\mathrm{BCH}}}$ are
\begin{align}
S^{(\mathrm{X})}_{i,l} &\!= \mathrm{X}^{(p^{k'})}(\xi^l)\otimes \mathrm{X}^{(p^{k'})}(\xi^l(\beta\xi^{i}) \otimes \cdots \otimes \mathrm{X}^{(p^{k'})}(\xi^l(\beta\xi^i)^{(n-1)})= \underset{g=0}{\overset{n-1}{\otimes}}\mathrm{X}^{(p^{k'})}(\xi^l(\beta\xi^{i})^g),\label{eqn:X_Stab_RS}\\
S^{(\mathrm{Z})}_{j,m} &\!=\mathrm{Z}^{(p^{k'})}(\xi^m)\!\otimes\! \mathrm{Z}^{(p^{k'})}(\xi^m(\beta\xi^{j}))\! \otimes \!\cdots \!\otimes \!\mathrm{Z}^{(p^{k'})}(\xi^m(\beta\xi^{j})^{(n-1)})= \underset{g=0}{\overset{n-1}{\otimes}}\mathrm{Z}^{(p^{k'})}(\xi^m(\beta\xi^{j})^{g}),\label{eqn:Z_Stab_RS}
\end{align}
where $i\in\{b_1,\dots,b_1+\delta-2\}$, $j\in\{b_2,\dots,b_2+\delta-2\}$, and $l,m\in\{0,\dots,k'-1\}$. We note that the subscripts $l$ and $i$ in $S^{(\mathrm{X})}_{i,l}$, and the subscripts $m$ and $j$ in $S^{(\mathrm{Z})}_{j,m}$ correspond to the power of $\xi$ in $\xi^l(\beta\xi^{i})^{g}$ and $\xi^m(\beta\xi^{j})^{g}$, respectively.

For $\gamma\in\mathbb{F}_{p^{k'}}$, let $\mathrm{X}_{f}^{(p^{k'})}(\gamma)$ and $\mathrm{Z}_{f}^{(p^{k'})}(\gamma)$ denote the operators operating $\mathrm{X}^{(p^{k'})}(\gamma)$ and $\mathrm{Z}^{(p^{k'})}(\gamma)$ on the ${f}^{\mathrm{th}}$ qudit and the identity operator on the rest of the qudits, respectively, where $f\in\{1,\dots,n\}$. For $\gamma\in \mathbb{F}_{p^{k'}}$, we have the following relations whose proofs are provided in Appendix \ref{app:RelationProof}:
\begin{itemize}
\item[(N1)]$\mathrm{Q}(\mathrm{FFFT})\mathrm{X}_{f}^{(p^{k'})}(\gamma)(\mathrm{Q}(\mathrm{FFFT}))^{-1}\newline =\underset{j=0}{\overset{n-1}{\otimes}}\mathrm{X}^{(p^{k'})}(\gamma(\beta\xi^{j})^{(f-1)}),~\forall~{f}\in\{1,\dots,n\}$.
\item[(N2)] $\mathrm{Q}(\mathrm{FFFT})\mathrm{Z}_{f}^{(p^{k'})}(\gamma)(\mathrm{Q}(\mathrm{FFFT}))^{-1}\newline =\underset{j=0}{\overset{n-1}{\otimes}}\mathrm{Z}^{(p^{k'})}\left(\frac{1}{n\lambda}\gamma(\beta\xi^{j})^{(n-{f}+1)}\right),~\forall~{f}\in\{1,\dots,n\}$.
\item[(N3)] $\mathrm{Q(FFFT)}\left(\underset{f=0}{\overset{n-1}{\otimes}}\mathrm{X}^{(p^{k'})}(\gamma(\beta\xi^{j})^{{f}})\right)(\mathrm{Q(FFFT))}^{-1}\newline =\mathrm{X}_{(n-{j}+1)}^{(p^{k'})}(n\gamma),~\forall~{j}\in\{1,\dots,n\}$.
\item[(N4)] $\mathrm{Q(FFFT)}\left(\underset{f=0}{\overset{n-1}{\otimes}}\mathrm{Z}^{(p^{k'})}(\gamma(\beta\xi^{j})^{{f}})\right)(\mathrm{Q(FFFT))}^{-1}\newline =\mathrm{Z}_{({j}+1)}^{(p^{k'})}(\gamma),~\forall~{j}\in\{0,\dots,n-1\}$.
\end{itemize}

We consider $S^{(\mathrm{X})}_{i,l} = \underset{g=0}{\overset{n-1}{\otimes}}\mathrm{X}^{(p^{k'})}(\xi^l(\beta\xi^{i})^g)$ from equation \eqref{eqn:X_Stab_RS}, where $i\in\{b_1,\dots,b_1+\delta-2\}$ and $l\in\{0,\dots,k'-1\}$. We substitute $S_i$ with $S^{(\mathrm{X})}_{i,l}$ in equation \eqref{eqn:InitStab}, to obtain the initial stabilizer based on $S^{(\mathrm{X})}_{i,l}$ to be
\begin{align}
&(\mathrm{Q}(\mathrm{FFFT})S_i(\mathrm{Q}(\mathrm{FFFT}))^{-1})=\!\left(\!\mathrm{Q}(\mathrm{FFFT})\!\left(\underset{g=0}{\overset{n-1}{\otimes}}\mathrm{X}^{(p^{k'})}(\xi^l(\beta\xi^i)^{g})\!\right)\!\!(\mathrm{Q}(\mathrm{FFFT})^{-1})\!\right). \label{eqn:InitStaD_X_mid1}
\end{align}

Using the property (N3) in equation \eqref{eqn:InitStaD_X_mid1} with $\gamma=\xi^l$ and $f=g$, we obtain the initial stabilizer to be
\begin{align}
&\left(\!\mathrm{Q}(\mathrm{FFFT})\!\left(\underset{g=0}{\overset{n-1}{\otimes}}\mathrm{X}^{(p^{k'})}(\xi^l(\beta\xi^i)^{g})\!\right)\!\!(\mathrm{Q}(\mathrm{FFFT})^{-1})\!\right)= \mathrm{X}_{(n-i+1)}^{(p^{k'})}(n\xi^l). \label{eqn:InitStabMid_X}
\end{align}

Similarly, we consider $S^{(\mathrm{Z})}_{j,m} = \underset{g=0}{\overset{n-1}{\otimes}}\mathrm{Z}^{(p^{k'})}(\xi^m(\beta\alpha^{j})^{g})$ from equation \eqref{eqn:Z_Stab_RS}, where $j\in\{b_2,\dots,$ $b_2+\delta-2\}$ and $m\in\{0,\dots,k'-1\}$. We substitute $S_i$ with $S^{(\mathrm{Z})}_{j,m}$ in equation \eqref{eqn:InitStab}, to obtain the initial stabilizer based on $S^{(\mathrm{Z})}_{j,m}$ to be
\begin{align}
&(\mathrm{Q}(\mathrm{FFFT})S_i(\mathrm{Q}(\mathrm{FFFT}))^{-1})=\!\!\left(\!\!\mathrm{Q}(\mathrm{FFFT})\!\!\left(\underset{g=0}{\overset{n-1}{\otimes}}\mathrm{Z}^{(p^{k'})}(\xi^m(\beta\alpha^{j})^{g})\!\right)\!\!(\mathrm{Q}(\mathrm{FFFT}))^{-1}\!\!\right)\!. \label{eqn:InitStab_Z_mid1}
\end{align}

Using the property (N4) in equation \eqref{eqn:InitStab_Z_mid1} with $\gamma = \xi^m$ and $f=g$, we obtain the initial stabilizer to be
\begin{align}
&\!\!\left(\!\!\mathrm{Q}(\mathrm{FFFT})\!\!\left(\underset{g=0}{\overset{n-1}{\otimes}}\mathrm{Z}^{(p^{k'})}(\xi^m(\beta\alpha^{j})^{g})\!\right)\!\!(\mathrm{Q}(\mathrm{FFFT}))^{-1}\!\!\right)\!=\mathrm{Z}_{(j+1)}^{(p^{k'})}(\xi^m). \label{eqn:InitStabMid_Z}
\end{align}

As $i\in\{b_1,\dots,b_1+\delta-2\}$ and $j\in\{b_2,\dots,b_2+\delta-2\}$, we have
\begin{align*}
(n-i+1)&\in\{n-b_1-\delta+3,\dots,n-b_1+1\},\\
(j+1)&\in\{b_2+1,\dots,b_2+\delta-1\}.
\end{align*}
 Thus, considering $s=(n-i+1)$ and $t=(j+1)$ in equations \eqref{eqn:InitStabMid_X} and \eqref{eqn:InitStabMid_Z}, we obtain the initial stabilizers to be 
\begin{align}
 \underbrace{\mathrm{X}_{s}^{(p^{k'})}(n\xi^l)}_{\text{over }n\text{ qudits}}~~\text{and}~~\underbrace{\mathrm{Z}_{t}^{(p^{k'})}(\xi^m)}_{\text{over }n\text{ qudits}},\!\!\label{eqn:InitStabFinal}
 \end{align}
 where $s\in\{n-b_1-\delta+3,\dots,n-b_1+1\}$, $t\in\{b_2+1,\dots,b_2+\delta-1\}$, and $l,m\in\{0,\dots, k'-1\}$.
 
Based on the ranges of $s$ and $t$, we define the position index sets as follows:
\begin{align}
T&=\{1,\dots,n\},\label{eqn:P}\\
T_{\mathrm{X}}&=\{n-b_1-\delta+3,\dots,n-b_1+1\}, \label{eqn:P_X}\\
T_{\mathrm{Z}} &= \{b_2+1,\dots,b_2+\delta-1\}.\label{eqn:P_Z}
\end{align}

In equation \eqref{eqn:InitStabFinal}, $s$ and $t$ are the indices of the qudit on which the operators $\mathrm{X}_s^{(p^{k'})}(n\xi^l)$ and $\mathrm{Z}_t^{(p^{k'})}(\xi^m)$ operate $\mathrm{X}^{(p^{k'})}(n\xi^l)$ and $\mathrm{Z}^{(p^{k'})}(\xi^m)$, respectively. Thus, $s,t\in\{1,\dots,n\}$ and
\begin{align}
s \in T_{\mathrm{X}}\text{ and }t\in T_{\mathrm{Z}}. \label{eqn:st_PX_PZ}
\end{align}

 Also, the set $T$ contains the indices of all the $n$ qudits, ranging from $1$ to $n$ and all the elements in the sets in the equations \eqref{eqn:P}-\eqref{eqn:P_Z} are modulo $n$ and range from $1$ to $n$.  From equation \eqref{eqn:InitStabFinal}, we observe that $T_{\mathrm{X}}$ and $T_{\mathrm{Z}}$ represent the qudit indices where the initial stabilizers operate $\mathrm{X}^{(p^{k'})}(n\xi^l)$ and $\mathrm{Z}^{(p^{k'})}(\xi^m)$, respectively. 
\subsection{ Initial state $\ket{\Psi_o}$ of the codeword qudits}\label{sub53}
In this subsection, we find  the initial state $\ket{\psi_0}$ based on the obtained initial stabilizers. We first define the following set of operators and states that will be used to find the initial state:
\begin{itemize}
\item[1)] The discrete Fourier transform operator $\mathrm{DFT}_{p^{k'}}$  (\cite{grassl2003efficient,farinholt2014ideal}) is given by
\begin{align}
\mathrm{DFT}_{p^{k'}}:= \frac{1}{\sqrt{p^{k'}}}\underset{\theta,\nu\in\mathbb{F}_{p^{k'}}}{\sum}\omega^{\mathrm{Tr}_{p^{k'}/p}(\theta\nu)}\ket{\theta}\bra{\nu}, \label{eqn:DFT_EARS}
\end{align} 
where $\omega = \mathrm{e}^{\mathrm{i}\frac{2\pi}{p}}$.
\item[2)]The addition operator $\mathrm{ADD}_{p^{k'}}(1,2)$ (\cite{grassl2003efficient,farinholt2014ideal}) is defined as follows:
\begin{align}
\mathrm{ADD}_{p^{k'}}(1,2) :=\!\! \underset{\theta_1,\theta_2 \in \mathbb{F}_{p^{k'}}}{\sum}\!\!\!\ket{\theta_1}\bra{\theta_1} \otimes \ket{(\theta_1+\theta_2)}\bra{\theta_2}. \label{eqn:ADD_EARS}
\end{align}
\item[3)] A single qudit state $\ket{\epsilon}$:
\begin{align}
\ket{\epsilon} &:=\mathrm{DFT}_{p^{k'}}^{-1}\ket{0} = \mathrm{DFT}_{p^{k'}}^{\dagger}\ket{0}= \frac{1}{\sqrt{p^{k'}}}\underset{\theta,\nu\in\mathbb{F}_{p^{k'}}}{\sum}\omega^{-\mathrm{Tr}_{p^{k'}/p}(\theta\nu)}\ket{\theta}\bra{\nu}\ket{0},\nonumber\\
&= \frac{1}{\sqrt{p^{k'}}}\underset{\theta\in\mathbb{F}_{p^{k'}}}{\sum}\omega^{-\mathrm{Tr}_{p^{k'}/p}(\theta.0)}\ket{\theta}= \frac{1}{\sqrt{p^{k'}}}\underset{\theta\in\mathbb{F}_{p^{k'}}}{\sum}\ket{\theta}.\label{eqn:ket_epsilon}
\end{align}
\end{itemize}
 
We have defined the state $\ket{\epsilon}$ because the initial state $\ket{\psi_0}$ will have a few codeword qudits containing these states as they are stabilized by the initial stabilizers.

For $\xi\in\mathbb{F}_{p^{k'}}$, we have the following relations that will be used further to get the initial state:
\begin{itemize}
\item[(N5)]$\mathrm{X}^{(p^{k'})}(n\xi^l)\ket{\epsilon} = \ket{\epsilon},\text{ where }\ket{\epsilon}=\mathrm{DFT}_{p^{k'}}^{-1}\ket{0}$
.\item[(N6)]$\mathrm{Z}^{(p^{k'})}(\xi^m)\ket{0} = \ket{0}.$
\end{itemize}
The proofs of the relations (N5)-(N6) are provided in Appendix \ref{app:RelationProof}.

  Let $\ket{\phi}$ be the $(n-2k)$-qudit message state to be encoded. The initial stabilizers in \eqref{eqn:InitStabFinal} act as non-identity operators only on the qudits in $T_{\mathrm{X}} \cup T_{\mathrm{Z}}$ as from equation \eqref{eqn:st_PX_PZ}, $s \in T_\mathrm{X}$ and $t \in T_\mathrm{Z}$. Thus, all the initial stabilizers operate the identity operator on the qudits in $T\setminus (T_{\mathrm{X}}\cup T_{\mathrm{Z}})$. Since the size of $T\setminus (T_{\mathrm{X}}\cup T_{\mathrm{Z}})$ is $n-2k=(-n+2(\delta-1))$ as the size of $T$, $T_{\mathrm{X}}$ and $T_{\mathrm{Z}}$,  is $n$, $(\delta-1)$, 
 and $(\delta-1)$, respectively, and  the identity operator stabilizes all the quantum states. Thus, we store the message state $\ket{\phi}$ in the $(n-2k)$ qudits in $T\setminus (T_{\mathrm{X}}\cup T_{\mathrm{Z}})$ where $n-k=\delta-1$.
   
    Next, we find the initial state of the rest of the qudits based on the following two cases: \newline
 \textbf{\underline{Case I}: Index ${s \in T_{\mathrm{X}}}$}

 The initial stabilizers that operate on the $s^{\mathrm{th}}$ qudit in a non-identity fashion are $\{\mathrm{X}_{s}^{(p^{k'})}(n\xi^l)\}_{l=0}^{k'-1} $. From (N5), the initial state for each qudit $s$ is $\ket{\epsilon}$.\newline
 \textbf{\underline{Case 2}: Index ${s \in T_{\mathrm{Z}}}$}

 The initial stabilizers that operate on the $s^{\mathrm{th}}$ qudit in a non-identity fashion are $\{\mathrm{Z}_{s}^{(p^{k'})}(\xi^m)\}_{m=0}^{k'-1}$. From (N6), the initial state for each qudit $s$ is $\ket{0}$. 
 
 From equation \eqref{eqn:InitStabFinal}, the initial stabilizers that operate on the $s^{\mathrm{th}}$ qudit in a non-identity fashion are $\{\mathrm{X}_{s}^{(p^{k'})}(n\xi^l)\}_{l=0}^{k'-1}$ and $\{\mathrm{Z}_{s}^{(p^{k'})}(\xi^m)\}_{m=0}^{k'-1}$.

\subsection{Encoding algorithm of the quantum constacyclic BCH  code}
Here, we provide the encoding algorithm for the quantum constacyclic BCH code. Since quantum constacyclic BCH code is a CSS code, and hence
the stabilizer generators can be decomposed into two sets, one containing the shift operators $\mathrm{X}^{(p^{k'})}(\cdot)$, corresponding to $H_{b1},\dots,\xi^{k'-1}H_{b1}$ in equation \eqref{eqHBCH}, and the other containing the clock operators $\mathrm{Z}^{(p^{k'})}(\cdot)$, corresponding to $H_{b2},\dots,\xi^{k'-1}H_{b2}$ in equation \eqref{eqHBCH}. 

Similar to the classical case, here we consider $\mathrm{Q}(\mathrm{FFFT})\ket{{\boldsymbol{\gamma}}}$ as the spectrum of $\ket{{\boldsymbol{\gamma}}}$, where ${\boldsymbol{\gamma}} \in \mathbb{F}_{p^{k'}}^{n}$. Analogous to the null spectra in the classical case, we consider the $\mathrm{X}$-null spectrum and the $\mathrm{Z}$-null spectrum corresponding to the positions in $T_{\mathrm{X}}$ and $T_{\mathrm{Z}}$, respectively. These positions correspond to the ancilla qudits and hence can be viewed similar to the nulls in the classical constacyclic BCH code that correspond to the parity bits in the spectral domain. 

Based on $T_{\mathrm{X}}$ and $T_{\mathrm{Z}}$,  using (N5), (N6), and equation \eqref{eqn:InitStabFinal}, we split the $n$-qudit codeword block into $3$ subblocks as follows:
\begin{itemize}
\item Subblock $D_X$ corresponds to the qudits that correspond to the $\mathrm{X}$-null spectrum. As  $T_{\mathrm{X}}$ has $\delta-1$ elements. From (N5), this block contains the qudits to be in state $\ket{\epsilon}$ before encoding. We call the qudits in this block as the $\mathrm{X}$-ancilla qudits. (Case 1)
\item Subblock $D_Z$ corresponds to the qudits that correspond to the $\mathrm{Z}$-null spectrum. As $T_{\mathrm{Z}}$ has $\delta-1$ elements. From (N6), this block contains the qudits to be in state $\ket{0}$ before encoding. We call the qudits in this block as the $\mathrm{Z}$-ancilla qudits. (Case 2) 
\item Subblock $D_M$ corresponds to the message subblock comprising of the remaining $n_m = n-2k=2(\delta-1)-n$ qudits. This block contains the message $\ket{\phi}$ before encoding. We call the qudit in this block as the message qudits.
\end{itemize}
\begin{figure}
    \centering
    \includegraphics[scale=.5]{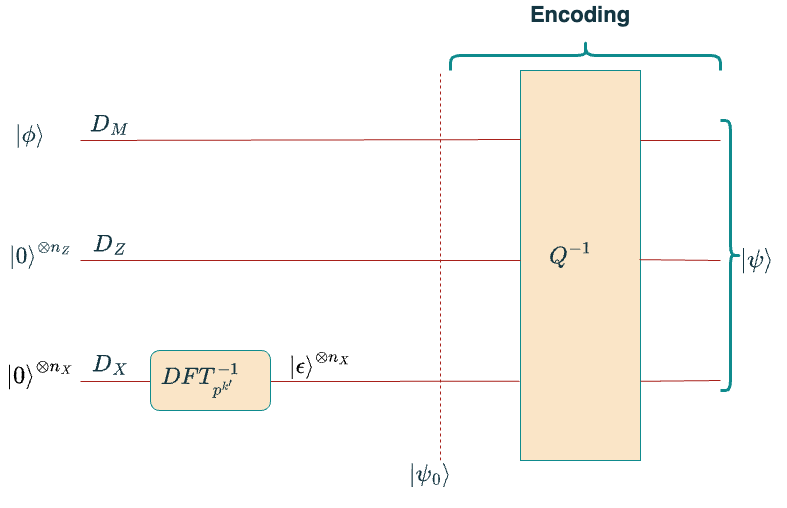}
    \caption{Encoding circuit for quantum constacyclic BCH codes. The encoding operator $\mathcal{E} = Q^{-1}$ is performed on the first $n$ qudits of the initial state $\ket{\psi_0}$ to obtain the codeword $\ket{\psi}$. $\ket{\epsilon}$ is obtained from $\ket{0}$ by performing $\mathrm{DFT}_{p^{k'}}^{-1}$. 
    Then the initial state $\ket{\psi_0}$ is obtained from message state $\ket{\phi}$, $\ket{0}$s, and $\ket{\epsilon}$s. }
    \label{EncodingBCH}
\end{figure}

 Figure \ref{EncodingBCH} provides the encoding circuit for the quantum constacyclic BCH  code. We note that $\ket{\epsilon} = \mathrm{DFT}_{p^{k'}}^{-1}\ket{0} = \mathrm{DFT}_{p^{k'}}\ket{0}$.

The initial state of the codeword is 
\begin{align}
\ket{\psi_0} = \ket{\phi}_{D_M}(\ket{0}^{\otimes n_{\mathrm{Z}}})_{D_Z}(\ket{\epsilon}^{\otimes n_{\mathrm{X}}})_{D_X},\label{eqn:InitRSCodeword}
\end{align}
where the subscripts $D_M$, $D_X$, and $D_Z$ denote the subblocks of the message qudits, $\mathrm{X}$-ancilla qudits, and $\mathrm{Z}$-ancilla qudits, respectively. The message qudits are initially in the state $\ket{\phi}$. The $\mathrm{Z}$-ancilla qudits are initially in the state $\ket{0}$. The $\mathrm{X}$-ancilla qudits are initially in the state $\ket{\epsilon}$. 

The encoding operation $\mathcal{E}$ is given as
\begin{align}
\mathcal{E} = (\mathrm{Q}(\mathrm{FFFT}))^{-1}, \label{eqn:EncodingOperator}
\end{align}
and the codeword is
\begin{align}
\ket{\psi} = \mathcal{E}\ket{\psi_0}. \label{eqn:Codeword}
\end{align}

\section{Decoding and error correction procedure of quantum constacyclic BCH code}\label{decoding}

The error correction procedure involves the following steps: 
\begin{itemize}
    \item[(a)] syndrome computation,
    \item[(b)] error deduction and recovery.
\end{itemize}

 The syndrome computation procedure involves obtaining the syndrome based on the error using the $\mathrm{Q}(\mathrm{FFFT})$, $\mathrm{DFT}_{p^{k'}}$ and $\mathrm{ADD}_{p^{k'}}$  operations defined in equations \eqref{eqn:QFFFT}, \eqref{eqn:DFT_EARS}, and \eqref{eqn:ADD_EARS}.
 
  The ancillary qudits used for syndrome computation called the \emph{syndrome qudits} store the syndrome after the syndrome computation procedure. These syndrome qudits are measured. 
Based on the measurement result, the syndrome is obtained and the error is deduced classically using the Berlekamp-Massey algorithm or the Euclidean algorithm or one can use our proposed algorithm in Section \ref{sec 3}. Based on the error deduced, the inverse error operation is performed on the received qudits to recover the codeword.

 We first provide our syndrome computation procedure to obtain the syndrome based on the error. We discuss the syndrome computation for the basis errors.
 Let the error $E = \mathrm{X}^{(p^{k'})}(\alpha_1)\mathrm{Z}^{(p^{k'})}\\(\gamma_1)\otimes \mathrm{X}^{(p^{k'})}(\alpha_2)\mathrm{Z}^{(p^{k'})}(\gamma_2)\otimes \dots \otimes \mathrm{X}^{(p^{k'})}(\alpha_n)\mathrm{Z}^{(p^{k'})}(\gamma_n)$ operate on the codeword $\ket{\psi}$ to obtain $E\ket{\psi}$. Let ${\boldsymbol{\alpha}} = [\alpha_1~\alpha_2~\dots ~\alpha_n]$ and ${\boldsymbol{\gamma}} = [\gamma_1~\gamma_2~\dots ~\gamma_n]$. Then $E=\mathrm{X}^{(p^{k'})}({\boldsymbol{\boldsymbol{\alpha}}})\mathrm{Z}^{(p^{k'})}({\boldsymbol{\gamma}}) \equiv[{\boldsymbol{\alpha}}|{\boldsymbol{\gamma}}]_{p^{k'}}$.
 
 For the stabilizer codes over $\mathbb{F}_{p^{k'}}$, in general, the elements of the syndrome of $E\ket{\psi}$ are the symplectic products of the stabilizer generators $S_i$s with the error $E$. Each syndrome element belongs to $\mathbb{F}_{p}$, and is stored in a subqudit. However, for our quantum constacyclic BCH code, we obtain the syndrome elements that belong to $\mathbb{F}_{p^{k'}}$, and is stored in a qudit. 
 
 In the check matrix $\mathcal{H_{\mathrm{BCH}}}$ in equation \eqref{eqHBCH}, $H_{b1}$ and $H_{b2}$ are used for the correction of $\mathrm{Z}^{(p^{k'})}(\cdot)$ and $\mathrm{X}^{(p^{k'})}(\cdot)$ errors, respectively. For the basis error $E$, we obtain the syndrome $\ket{s}$ similar to the classical case as 
 \begin{align}
 \ket{s} = \ket{{s_{\mathrm{X}}}} \otimes \ket{{s_{\mathrm{Z}}}}, \label{eqn:Synd_EARS}
 \end{align}
 where
 \begin{align}
 {s_{\mathrm{X}}} &= (H_{b2}{\boldsymbol{\alpha}}^{\mathrm{T}})^{\mathrm{T}} = {\boldsymbol{\alpha}}H_{b2}^{\mathrm{T}}\label{eqn:Synd_X_EARS}\\
  {s_{\mathrm{Z}}} &= (H_{b1}{\boldsymbol{\gamma}}^{\mathrm{T}})^{\mathrm{T}} = {\boldsymbol{\gamma}}H_{b1}^{\mathrm{T}}. \label{eqn:Synd_Z_EARS}
 \end{align}

We provide a syndrome computation circuit for the  quantum constacyclic BCH  code in Figure \ref{Syndrome}. It consists of the following operations:
\begin{itemize}
\item[(a)]$\mathrm{Q(FFFT)}$ on all the codewords qudits. Syndrome qudits are initially in state $\ket{0}$. The state of the qudits after the $\mathrm{Q}(\mathrm{FFFT})$ operation is performed is represented by $\ket{v_1}$ in Figure \ref{Syndrome}.
\item[(b)] $\mathrm{ADD}_{p^{k'}}$ operations are performed considering the qudits in $D_{\mathrm{Z}}$  as the control qudits and the first $(\delta-1)$ syndrome qudits as the target qudits. After these $\mathrm{ADD}_{p^{k'}}$ operations, the first $(\delta-1)$ syndrome qudits contain the syndrome $\ket{{s_{\mathrm{X}}}}$. The state of the qudits after the $\mathrm{ADD}_{p^{k'}}(1,2)$ operation is performed with the $\mathrm{Z}$-ancilla qudits in $D_Z$  and the syndrome qudits as the target is represented by $\ket{v_2}$ in Figure \ref{Syndrome}.
\item[(c)] $\mathrm{DFT}_{p^{k'}}$ operations are performed on all the qudits in blocks $D_{\mathrm{X}}$  to obtain the syndrome $\ket{{s_{\mathrm{Z}}}}$ over the qudits in these blocks. The state of the qudits after the $\mathrm{DFT}_{p^{k'}}$ operation is performed on the $\mathrm{X}$-ancilla qudits in $D_X$ is represented by $\ket{v_3}$ in Figure \ref{Syndrome}.
\item[(d)] $\mathrm{ADD}_{p^{k'}}$ operations are performed considering the qudits in $D_{\mathrm{X}}$ and the last $(\delta-1)$ syndrome qudits as the target qudits. After these $\mathrm{ADD}_{p^{k'}}$ operations, the last $(\delta-1)$ syndrome qudits contain the syndrome $\ket{{s_{\mathrm{Z}}}}$. The state of the qudits after the $\mathrm{ADD}_{p^{k'}}$ operation is performed with the $\mathrm{X}$-ancilla qudits in $D_X$  and the syndrome qudits as the target is represented by $\ket{v_4}$ in Figure \ref{Syndrome}.
\item[(e)] Perform the inverse operations $\mathrm{DFT}_{p^{k'}}^{-1}$ followed by $(\mathrm{Q(FFFT)})^{-1}$ on the corresponding qudits to obtain back $E\ket{\psi}$.
\end{itemize}

\begin{figure}
    \centering
    \includegraphics[scale=.38]{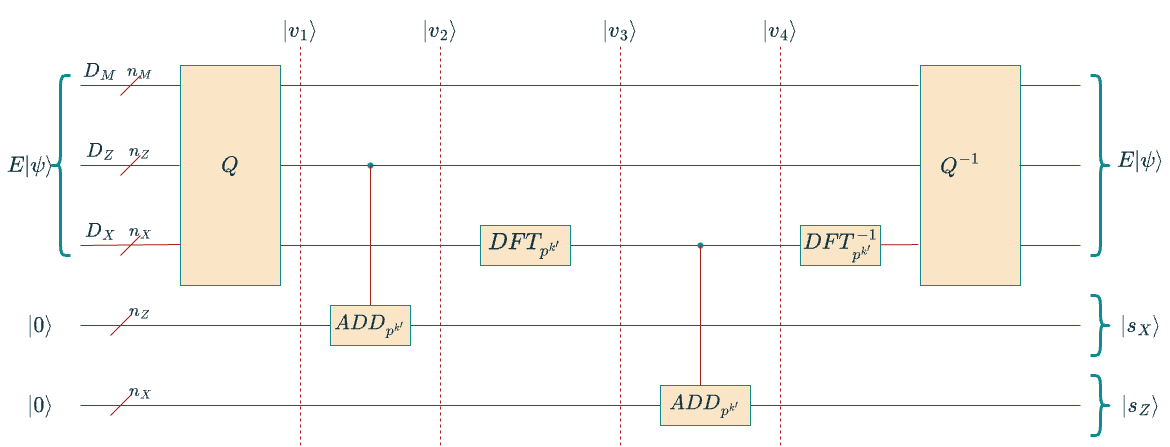}
    \caption{Syndrome computation circuit  for quantum constacyclic BCH codes.  The $n_m$, $n_{\mathrm{X}}$, and $n_{\mathrm{Z}}$ represent the number of qudits corresponding  to subblocks $D_M$, $D_{\mathrm{X}}$, and $D_{\mathrm{Z}}$, respectively. The syndrome computation involves performing $Q= \mathrm{Q}(\mathrm{FFFT})$, $\mathrm{ADD}_{p^{k'}}$, $\mathrm{DFT}_{p^{k'}}$, $\mathrm{DFT}_{p^{k'}}^{-1}$,  and $Q^{-1}$ operations on the codeword qudits along with $2(\delta-1)$ syndrome qudits that are initially in the state $\ket{0}$ each. At the end of the syndrome computation, the syndrome qudits are in the state $\ket{s} = \ket{s_{\mathrm{X}}}\otimes \ket{s_{\mathrm{Z}}}$, while the state of the codeword qudits is $E\ket{\psi}$.  $\ket{v_1}$, $\ket{v_2}$, $\ket{v_3}$, and $\ket{v_4}$ represent the intermediate states of the codeword qudits along with the syndrome qudits.}
    \label{Syndrome}
\end{figure}

Next, we will prove that the syndrome computation circuit transforms the state of the syndrome qudits into the syndrome state in equation \eqref{eqn:Synd_EARS} for the error of form $\mathrm{X}_i^{(p^{k'})}(\theta)\mathrm{Z}_i^{(p^{k'})}(\chi)$, where $i\in\{1,\dots,n\}$, $\theta,\chi\in\mathbb{F}_{p^{k'}}$, as all other basis errors can be written as the product of the errors of this form.  

For $E=\mathrm{X}_i^{(p^{k'})}(\theta)\mathrm{Z}_i^{(p^{k'})}(\chi)$, considering ${\boldsymbol{\alpha}} = \theta\mathbf{e}_i^{\mathrm{T}}$ and ${\boldsymbol{\gamma}} = \chi\mathbf{e}_i^{\mathrm{T}}$ in equations \eqref{eqn:Synd_EARS}, \eqref{eqn:Synd_X_EARS}, and \eqref{eqn:Synd_Z_EARS}, we obtain
\begin{align}
\ket{s} = \ket{{s_{\mathrm{X}}}}\otimes \ket{{s_{\mathrm{Z}}}},
\end{align}
where
\begin{align}
{s_{\mathrm{X}}} &= {\boldsymbol{\alpha}}H_{b2}^{\mathrm{T}} = \theta\mathbf{e}_i^{\mathrm{T}}H_{b2}^{\mathrm{T}}= [\theta(\beta\xi^{(b_2+j-1)})^{(i-1)}]_{j=1}^{\delta-1},\label{eqn:Synd_RS_X_BasisError}\\
{s_{\mathrm{Z}}} &= {\boldsymbol{\gamma}}H_{b1}^{\mathrm{T}} = \chi\mathbf{e}_i^{\mathrm{T}}H_{b1}^{\mathrm{T}}= [\chi(\beta\xi^{(b_1+j-1)})^{(i-1)}]_{j=1}^{\delta-1}.\label{eqn:Synd_RS_Z_BasisError}
\end{align}

From equation \eqref{eqn:Codeword}, the erroneous state is
 \begin{align}
 E\ket{\psi} \!\!=\! \mathrm{X}_i^{(p^{k'})}\!(\theta)\mathrm{Z}_i^{(p^{k'})}\!(\chi)((\mathrm{Q}(\mathrm{FFFT}))^{-1}\!)\!\ket{\psi_0}\!. \label{eqn:E_psi}
 \end{align}
  
The $n_s = 2(\delta-1)$ syndrome qudits are  initially in the state $\ket{0}$ each. After applying $Q = \mathrm{Q}(\mathrm{FFFT})$ on the first $n$ qudits of $E\ket{\psi}$, from equation \eqref{eqn:E_psi} and Figure \ref{Syndrome}, we obtain $\ket{v_1}$ to be
\begin{align}
&\ket{v_1}\nonumber\\
 &= (\mathrm{Q}(\mathrm{FFFT}) \otimes \mathrm{I}_{p^{k'}}^{\otimes n_s})(E\ket{\psi} \otimes \ket{0}^{\otimes n_s}),\nonumber\\
  &= ((\mathrm{Q}(\mathrm{FFFT}) )E\ket{\psi}) \otimes \ket{0}^{\otimes n_s},\nonumber\\
&= ((\mathrm{Q}(\mathrm{FFFT})\mathrm{X}_i^{(p^{k'})}\!(\theta)\mathrm{Z}_i^{(p^{k'})}\!(\chi)(\mathrm{Q}(\mathrm{FFFT}))^{-1}\!)\!\ket{\psi_0})\otimes\! \ket{0}^{\!\otimes n_s}\!,\nonumber\\
&= (((\mathrm{Q}(\mathrm{FFFT})\mathrm{X}_i^{(p^{k'})}\!(\theta)(\mathrm{Q}(\mathrm{FFFT}))^{-1})(\mathrm{Q}(\mathrm{FFFT})\mathrm{Z}_i^{(p^{k'})}\!(\chi)(\mathrm{Q}(\mathrm{FFFT}))^{-1})\!)\ket{\psi_0})\! \otimes\! \ket{0}^{\!\otimes n_s}. \label{eqn:ECC_mid1}
\end{align}
 Substituting relations (N1) and (N2) in equation \eqref{eqn:ECC_mid1}, we get
 \begin{align}
\ket{v_1}&= \bigg(\bigg(\left(\underset{j=0}{\overset{n-1}{\otimes}}\mathrm{X}^{(p^{k'})}(\theta(\beta\xi^{j})^{(i-1)})\right)\left(\underset{l=0}{\overset{n-1}{\otimes}}\mathrm{Z}^{(p^{k'})}(\frac{1}{n\lambda}\chi(\beta\xi^l)^{(n-i+1)})\right)\bigg)\ket{\psi_0}\bigg)\! \otimes\! \ket{0}^{\!\otimes n_s},\nonumber\\
&= \bigg(\bigg(\underset{j=0}{\overset{n-1}{\otimes}}\mathrm{X}^{(p^{k'})}(\theta(\beta\xi^{j})^{(i-1)})\mathrm{Z}^{(p^{k'})}(\frac{1}{n\lambda}\chi(\beta\xi^{j})^{(n-i+1)})\bigg)\ket{\psi_0}\bigg)\! \otimes\! \ket{0}^{\!\otimes n_s},\nonumber\\
&=\!\! \bigg(\!\!\bigg(\underset{g=1}{\overset{n}{\otimes}}\mathrm{X}^{(p^{k'})}(\theta(\beta\xi^{(g-1)})^{(i-1)})\mathrm{Z}^{(p^{k'})}(\frac{1}{n\lambda}\chi(\beta\xi^{(g-1)})^{(n-i+1)} \bigg)\ket{\psi_0}\bigg)\! \otimes\! \ket{0}^{\!\otimes n_s}, \label{eqn:ECC_mid2}
 \end{align}
where $g=(j+1)$.

From equation \eqref{eqn:InitRSCodeword}, $\ket{\psi_0}$ is a tensor product of states in different blocks of the codeword. The operator 
\begin{align}
E_{\mathrm{F}} :=\underset{g=1}{\overset{n}{\otimes}}\mathrm{X}^{(p^{k'})}(\theta(\beta\xi^{(g-1)})^{(i-1)})\mathrm{Z}^{(p^{k'})}(\frac{1}{n\lambda}\chi(\beta\xi^{(g-1)})^{(n-i+1)}\label{eqn:E_F}
\end{align}
 is also a tensor product of $n$ single qudit operators. Thus, in equation \eqref{eqn:ECC_mid2}, we can separately consider the message, $\mathrm{X}$-ancilla, and $\mathrm{Z}$-ancilla, qudits of $\ket{v_1}$ in blocks $D_M$, $D_X$, and $D_Z$, respectively,  to obtain
\begin{align}
\ket{v_1} = \ket{m_1}_{D_M}\ket{x_1}_{D_X}\ket{z_1}_{D_Z}\ket{0}^{\otimes n_s},\label{eqn:vartheta_1_state1}
\end{align}
where the subscript denotes the blocks where the states physically reside.
\subsection{Computation of the syndrome $\ket{{s_{\mathrm{X}}}}$}\label{sub61}

First, we show that the state of the $\mathrm{Z}$-ancilla qudits in block $D_Z$ for $\ket{v_1}$ are also components of the $\ket{{s_{\mathrm{X}}}}$. For the block $D_Z$, from Figure \ref{Syndrome}, we consider the state in $\ket{v_1}$. Thus, the state of a qudit in $D_Z$ in $\ket{v_1}$ is obtained by applying the part of operator $E_F$ from equation \eqref{eqn:E_F} corresponding to the $\mathrm{Z}$-ancilla qudits in the block $D_Z$ on  the initial state $\ket{0}$ of these qudits. Thus, for $m\in T_{\mathrm{Z}}$, we obtain the state $\ket{z^{(m)}}$ of the $m^{\mathrm{th}}$ qudit in $\ket{v_1}$ to be
\begin{align}
\ket{z^{(m)}} &=\mathrm{X}^{(p^{k'})}(\theta(\beta\xi^{(m-1)})^{(i-1)})\mathrm{Z}^{(p^{k'})}(\frac{1}{n\lambda}\chi(\beta\xi^{(m-1)})^{(n-i+1)} \ket{0}\\
&= \mathrm{X}^{(p^{k'})}(\theta(\beta\xi^{(m-1)})^{(i-1)})\ket{0}=\ket{\theta(\beta\xi^{(m-1)})^{(i-1)}}.\label{eqn:s_X_BZ}
\end{align}

From equation \eqref{eqn:P_Z}, $m \in T_{\mathrm{Z}} = \{b_2+1,\dots ,b_2+\delta-1\}$ and $(m-1)\in\{b_2,\dots ,b_2+\delta-2\}$. From equation \eqref{eqn:Synd_RS_X_BasisError}, the state 
$\ket{\theta(\beta\xi^{(m-1)})^{(i-1)}}$ over the $\mathrm{Z}$-ancilla qudit in $D_Z$ in equation \eqref{eqn:s_X_BZ} is a component of the $\ket{{s_{\mathrm{X}}}}$ part of the syndrome $\ket{s}$. The size of $D_Z$ containing the $\mathrm{Z}$-ancilla qudits is $n_\mathrm{Z} = \delta-1$. Thus, from  $D_Z$, we obtained all the $\delta-1$  qudits of $\ket{{s_{\mathrm{X}}}}$. 

The $\mathrm{ADD}_{p^{k'}}$ operations are performed considering the qudits in blocks $D_Z$  as the control qudits and the first $(\delta-1)$ syndrome qudits as the target qudits as shown in Figure \ref{Syndrome} to transform the state of these syndrome qudits to $\ket{{s_{\mathrm{X}}}}$. The state of all the qudits after these $\mathrm{ADD}_{p^{k'}}$ operations is $\ket{v_2}$.
\subsection{Computation of the syndrome $\ket{{s_{\mathrm{Z}}}}$}\label{sub62}
In this subsection, we obtain the syndrome $\ket{{s_{\mathrm{Z}}}}$ by performing $\mathrm{DFT}_{p^{k'}}$ operation on all the $\mathrm{X}$-ancilla qudits in block $D_X$  of $\ket{v_2}$. 

We introduce the following two relations based on $\mathrm{DFT}_{p^{k'}}$ that are proved in Appendix \ref{app:RelationProof}.
\begin{itemize}
\item[(N7)] $\mathrm{DFT}_{p^{k'}}\mathrm{X}^{(p^{k'})}(\gamma)\mathrm{DFT}_{p^{k'}}^{-1} =\mathrm{Z}^{(p^{k'})}(\gamma)$.
\item[(N8)] $\mathrm{DFT}_{p^{k'}}~\mathrm{Z}^{(p^{k'})}(\gamma)\mathrm{DFT}_{p^{k'}}^{-1} =\mathrm{X}^{(p^{k'})}(-\gamma)$.
\end{itemize}

For block $D_X$ that contains the $\mathrm{X}$-ancilla qudits, from Figure \ref{Syndrome}, the state in $\ket{v_2}$ and $\ket{v_1}$ are the same as no operation is performed on block $D_X$ in between. Thus, the state of an $\mathrm{X}$-ancilla qudit in $D_X$ in $\ket{v_2}$ is obtained by applying the part of operator $E_F$ in equation \eqref{eqn:E_F} corresponding to block $D_X$ on to the initial state $\ket{\epsilon}$ of these $\mathrm{X}$-ancilla qudits. Thus, for $m\in T_{\mathrm{X}}$, we obtain the state $\ket{x_2^{(m)}}$ of the $m^{\mathrm{th}}$ qudit in $\ket{v_2}$ to be
\begin{align}
\ket{x_2^{(m)}}&=\mathrm{X}^{(p^{k'})}(\theta(\beta\xi^{(m-1)})^{(i-1)})\mathrm{Z}^{(p^{k'})}(\frac{1}{n\lambda}\chi(\beta\xi^{(m-1)})^{(n-i+1)}) \ket{\epsilon},\nonumber\\
&=\mathrm{X}^{(p^{k'})}(\theta(\beta\xi^{(m-1)})^{(i-1)})\mathrm{Z}^{(p^{k'})}(\frac{1}{n\lambda}\chi(\beta\xi^{(m-1)})^{(n-i+1)} )\mathrm{DFT}_{p^{k'}}^{-1}\!\ket{0}\!.\label{eqn:s_Z_BX_mid1}
\end{align}

On applying $\mathrm{DFT}_{p^{k'}}$ on the $m^{\mathrm{th}}$ qudit, we obtain the state $\ket{x_3^{(m)}}$ in $\ket{v_3}$ to be
\allowdisplaybreaks
\begin{align}
\ket{x_3^{(m)}}
&=\mathrm{DFT}_{p^{k'}}\mathrm{X}^{(p^{k'})}(\theta(\beta\xi^{(m-1)})^{(i-1)})\mathrm{Z}^{(p^{k'})}(\frac{1}{n\lambda}\chi(\beta\xi^{(m-1)})^{(n-i+1)} )\mathrm{DFT}_{p^{k'}}^{-1}\!\ket{0}\!,\nonumber\\
&=(\mathrm{DFT}_{p^{k'}}\mathrm{X}^{(p^{k'})}(\theta(\beta\xi^{(m-1)})^{(i-1)}))\mathrm{DFT}_{p^{k'}}^{-1})~(\mathrm{DFT}_{p^{k'}}\mathrm{Z}^{(p^{k'})}(\frac{1}{n\lambda}\chi(\beta\xi^{(m-1)})^{(n-i+1)} ) \mathrm{DFT}_{p^{k'}}^{-1})\ket{0},\nonumber\\
&=\mathrm{Z}^{(p^{k'})}(\theta(\beta\xi^{(m-1)})^{(i-1)})\mathrm{X}^{(p^{k'})}(-\frac{1}{n\lambda}\chi(\beta\xi^{(m-1)})^{(n-i+1)})\ket{0},(\text{From (N7) and (N8)})\nonumber\\
&=\mathrm{Z}^{(p^{k'})}(\theta(\beta\xi^{(m-1)})^{(i-1)})\ket{-\frac{1}{n\lambda}\chi(\beta\xi^{(m-1)})^{(n-i+1)}},\nonumber\\
&=\omega^{\mathrm{Tr}_{p^{k'}/p}(\theta(\beta\xi^{(m-1)})^{(i-1)})(-\frac{1}{n\lambda}\chi(\beta\xi^{(m-1)})^{(n-i+1)}))}\ket{-\frac{1}{n\lambda}\chi(\beta\xi^{(m-1)})^{(n-i+1)}},\nonumber\\
&=\omega^{\mathrm{Tr}_{p^{k'}/p}(-\frac{1}{n\lambda}\theta\chi\beta^{n})}\ket{-\frac{1}{n\lambda}\chi(\beta\xi^{(m-1)})^{(n-i+1)}},\nonumber\\
&=\omega^{\mathrm{Tr}_{p^{k'}/p}(-\frac{\lambda}{n\lambda}\theta\chi)}\ket{-\frac{1}{n\lambda}\chi(\beta\xi^{(m-1)})^{(n-i+1)}},~(\because ~\beta^n=\lambda)\nonumber\\
&= \omega^{\mathrm{Tr}_{p^{k'}/p}(-\frac{1}{n}\theta\chi)}\ket{-\frac{\lambda}{n\lambda}\chi(\beta\xi^{(m-1)})^{-(i-1)}},~(\because ~\xi^n=1)\nonumber\\
&= \omega^{\mathrm{Tr}_{p^{k'}/p}(-\frac{1}{n}\theta\chi)}\ket{-\frac{1}{n}\chi(\beta\xi^{-(m-1)})^{(i-1)}},
\label{eqn:s_Z_BR}
\end{align}
as $\beta^{-1}=\beta$ and $\xi^{(-m+1)(i-1)} = \xi^{(n-m+1)(i-1)}$.

This gives $m\in T_\mathrm{X}$. Thus, from equation \eqref{eqn:P_X}, $m\in\{n-b_1-\delta+3,\dots,n-b_1+1\}$ and $(n-m+1)\in\{b_1,\dots, b_1+\delta-2\}$. From equation \eqref{eqn:Synd_RS_Z_BasisError}, the state 
$\ket{-\frac{1}{n}\chi(\beta\xi^{(n-m+1)})^{(i-1)}}$ over the $\mathrm{X}$-ancilla qudit in $D_X$ in equation \eqref{eqn:s_Z_BR} is a component of the $\ket{{s_{\mathrm{Z}}}}$ part of the syndrome $\ket{s}$, neglecting the phase $\omega^{\mathrm{Tr}_{p^{k'}/p}(-\frac{1}{n}\theta\chi)}$. Since the size of $D_X$ that contains the $\mathrm{X}$-ancilla qudits is $n_\mathrm{X} = \delta-1$. Thus, from  $D_X$, we obtained all the $\delta-1$ qudits of $\ket{{s_{\mathrm{Z}}}}$.

 The $\mathrm{ADD}_{p^{k'}}$ operations are performed considering the qudits in blocks $D_X$ as the control qudits and the last $(\delta-1)$ syndrome qudits as the target qudits as shown in Figure \ref{Syndrome} to transform the state of these syndrome qudits to $\ket{{s_{\mathrm{Z}}}}$. Thus, we have obtained the entire syndrome over the $2(\delta-1)$ syndrome qudits.  Moreover, the $\mathrm{X}$-ancilla qudits and the $\mathrm{Z}$-ancilla qudits in blocks $D_X$ and $D_Z$ contribute towards the syndrome corresponding to the $\mathrm{Z}^{(p^{k'})}(\cdot)$ and $\mathrm{X}^{(p^{k'})}(\cdot)$ errors, respectively. This follows from the fact that the syndromes for $\mathrm{Z}^{(p^{k'})}(\cdot)$ and $\mathrm{X}^{(p^{k'})}(\cdot)$ errors are based on $H_{b1}$ and $H_{b2}$, and the blocks $D_X$ and $D_Z$ were also defined based on these matrices in Subsection \ref{BCH}. Finally, we apply the inverse operations $\mathrm{DFT}_{p^{k'}}^{-1}$,  and $(\mathrm{Q}(\mathrm{FFFT}))^{-1}$ to transform the state of the codeword qudits back to $E\ket{\psi}$.

\subsection{Error deduction and recovery}\label{sub63}
After the syndrome computation procedure is performed, the syndrome qudits are measured. The measurement result from the first $n_{\mathrm{Z}}$ syndrome qudits is ${s_{\mathrm{X}}}$ and the measurement result from the rest $n_{\mathrm{X}}$ syndrome qudits is ${s_{\mathrm{Z}}}$.

 For an error $E = [{\boldsymbol{\alpha}}|\boldsymbol{{\gamma}}]$ on the first $n$ qudits of the code, where ${\boldsymbol{\alpha}}\boldsymbol{,{\gamma}} \in \mathbb{F}_{p^{k'}}^{n}$, from equations \eqref{eqn:Synd_X_EARS} and \eqref{eqn:Synd_Z_EARS}, ${s_{\mathrm{X}}}$ and ${s_{\mathrm{Z}}}$ are the syndromes of the errors ${\boldsymbol{\alpha}}$ and ${\boldsymbol{\gamma}}$ with respect to the classical constacyclic BCH codes whose parity check matrices are $H_{b2}$ and $H_{b1}$, respectively. Using the classical Berlekamp-Massey algorithm, Euclidean algorithm or our proposed classical constacyclic BCH  decoding algorithm in Section \ref{sec 3} or any other classical decoding algorithm on ${s_{\mathrm{X}}}$ and ${s_{\mathrm{Z}}}$, separately, the errors ${\boldsymbol{\alpha}}$ and ${\boldsymbol{\gamma}}$ are obtained. Let ${\boldsymbol{\alpha}} = [e_1~e_2~\dots~e_n]$ and ${\boldsymbol{\gamma}} = [f_1~f_2~\dots~f_n]$. Then, the corresponding quantum errors on the first $n$ qudits of the codeword are $\underset{i=1}{\overset{n}{\otimes}}\mathrm{X}^{(p^{k'})}(e_i)\mathrm{Z}^{(p^{k'})}(f_i)$. The inverse of these quantum errors are applied to the codeword qudits to recover $\ket{\psi}$.
\section{Conclusions}\label{conclu}
 Constacyclic codes over finite fields are characterized in the transform domain using finite field Fourier transform defined over an appropriate extension
field. We proposed an efficient spectral decoding for systematic constacyclic codes. The computational complexity of our proposed method for some parameters is smaller than the computational complexity of the classical syndrome-based decoding algorithms by a linear factor. We have constructed QECCs from constacyclic BCH codes in the spectral domain and provided encoding, syndrome computation and decoding circuits for these codes. Also, we constructively established through several examples that quantum constacyclic BCH codes are more efficient than repeated-root constacyclic codes over all cases of asymmetric and symmetric cyclotomic cosets. 
\section*{Acknowledgement}
S. Patel is supported under IISc-IoE Postdoctoral fellowship.
\bibliography{funbibfile}
\appendix
\section{Appendix}\label{app:RelationProof}
    In this Appendix, we derive a few relations that have been used in the main body of the paper. For $\gamma \in \mathbb{F}_{p^{k'}}$, the following relations hold true.
    \begin{remark}
    The following identities (N1)-(N4) are used to obtain the  initial stabilizers $\mathrm{X}_s^{(p^{k'})}(n\xi^l)$ and $\mathrm{Z}_t^{(p^{k'})}(\xi^m)$ in the simplest form.
\end{remark}

\begin{itemize}
\item[\textbf{(N1)}] \normalsize{${\mathrm{Q}(\mathrm{FFFT})\mathrm{X}_i^{(p^{k'})}(\gamma)(\mathrm{Q}(\mathrm{FFFT}))}^{{-1}}{=}\underset{{j=0}}{\overset{{n-1}}{{\otimes}}}{\mathrm{X}}^{{(p^{k'})}}({\gamma(\beta\xi^j)^{(i-1)}),}~{\forall}~{i\in}\{{1,}\dots,{n}\}$ }.
\begin{proof}
For $i\in\{1,\dots,n\}$ and $\gamma \in \mathbb{F}_{p^{k'}}$, we have
\allowdisplaybreaks
\begin{align}
\mathrm{Q}(\mathrm{FFFT})\mathrm{X}_i^{(p^{k'})}(\gamma)(\mathrm{Q}(\mathrm{FFFT}))^{-1}
&=\!\!\!\left(\!\!\underset{{\boldsymbol{\theta_1}}\in\mathbb{F}_{p^{k'}}^{n}}{\sum}\hspace{-0.01cm}\!\!\!\!\ket{\mathrm{FFFT}({\boldsymbol{\theta_1}})}\!\hspace{-0.02cm}\bra{{\boldsymbol{\theta_1}}}\!\!\hspace{-0.01cm}\right)\!\!\mathrm{X}_i^{(p^{k'})}\!(\gamma)\hspace{-0.01cm}\!\!\left(\!\underset{{\boldsymbol{\theta_2}}\in\mathbb{F}_{p^{k'}}^{n}}{\sum}\!\!\!\!\ket{{\boldsymbol{\theta_2}}}\!\hspace{-0.02cm}\bra{\mathrm{FFFT}({\boldsymbol{\theta_2}})\hspace{-0.01cm}}\!\!\!\right)\!\!, \nonumber\\
&=\underset{{\boldsymbol{\theta_1}},{\boldsymbol{\theta_2}}\in\mathbb{F}_{p^{k'}}^{n}}{\sum}\!\!\!\!\ket{\mathrm{FFFT}({\boldsymbol{\theta_1}})}\!\bra{\boldsymbol{\theta_1}}\mathrm{X}_i^{(p^{k'})}\!(\gamma)\ket{\boldsymbol{\theta_2}}\!\bra{\mathrm{FFFT}({\boldsymbol{\theta_2}})}, \nonumber\\
&=\underset{{\boldsymbol{\theta_1}},{\boldsymbol{\theta_2}}\in\mathbb{F}_{p^{k'}}^{n}}{\sum}\!\!\!\!\ket{\mathrm{FFFT}({\boldsymbol{\theta_1}})}\!\bra{\boldsymbol{\theta_1}}\ket{{\boldsymbol{\theta_2}}+\gamma\mathbf{e}_i^{\mathrm{T}}}\!\bra{\mathrm{FFFT}({\boldsymbol{\theta_2}})}, \nonumber\\
&=\underset{{\boldsymbol{\theta_1}},{\boldsymbol{\theta_2}}\in\mathbb{F}_{p^{k'}}^{n}}{\sum}\!\!\!\!\ket{\mathrm{FFFT}({\boldsymbol{\theta_1}})}\!\delta_{{\boldsymbol{\theta_1}},({\boldsymbol{\theta_2}}+\gamma\mathbf{e}_i^{\mathrm{T}})}\!\bra{\mathrm{FFFT}({\boldsymbol{\theta_2}})}, \nonumber\\
&=\underset{{\boldsymbol{\theta_2}}\in\mathbb{F}_{p^{k'}}^{n}}{\sum}\!\!\!\!\ket{\mathrm{FFFT}({\boldsymbol{\theta_2}}+\gamma\mathbf{e}_i^{\mathrm{T}})}\bra{\mathrm{FFFT}({\boldsymbol{\theta_2}})}, \nonumber\\
&=\underset{{\boldsymbol{\theta_2}}\in\mathbb{F}_{p^{k'}}^{n}}{\sum}\!\!\!\!\ket{\mathrm{FFFT}({\boldsymbol{\theta_2}})+\mathrm{FFFT}(\gamma\mathbf{e}_i^{\mathrm{T}})}\bra{\mathrm{FFFT}({\boldsymbol{\theta_2}})}, \label{eqn:FXF-1_mid1}
\end{align}
as $\mathrm{FFFT}(\cdot)$ is a linear function.\vspace{0.1cm}

Let ${\boldsymbol{\theta_2}'} = \mathrm{FFFT}({\boldsymbol{\theta_2}})$, then ${\boldsymbol{\theta_2}} = \mathrm{FFFT}^{-1}({\boldsymbol{\theta_2}'})$. As $\mathrm{FFFT}$ is an invertible function, for every ${\boldsymbol{\theta_2}} \in \mathbb{F}_{p^{k'}}^n$ we obtain a unique ${\boldsymbol{\theta_2}'} = \mathrm{FFFT}({\boldsymbol{\theta_2}}) \in \mathbb{F}_{p^{k'}}^n$. Thus, the set $\{{\boldsymbol{\theta_2}'} = \mathrm{FFFT}({\boldsymbol{\theta_2}})\}_{{\boldsymbol{\theta_2}}\in\mathbb{F}_{p^{k'}}^{n}}$ contains $p^{k'n}$ unique elements in $\mathbb{F}_{p^{k'}}^{n}$ as the set $\{{\boldsymbol{\theta_2}}\}_{{\boldsymbol{\theta_2}}\in\mathbb{F}_{p^{k'}}^{n}}$ contains $p^{k'n}$ elements. As the field $\mathbb{F}_{p^{k'}}^n$ has only $p^{k'n}$ elements, the set $\{{\boldsymbol{\theta_2}'} = \mathrm{FFFT}({\boldsymbol{\theta_2}})\}_{{\boldsymbol{\theta_2}}\in\mathbb{F}_{p^{k'}}^{n}}$ contains all the elements of $\mathbb{F}_{p^{k'}}^{n}$.\vspace{0.1cm}

 By substituting ${\boldsymbol{\theta_2}'}$ in equation \eqref{eqn:FXF-1_mid1}, the summation over ${\boldsymbol{\theta_2}}$ taking all the values of $\mathbb{F}_{p^{k'}}^n$ transforms into a summation over ${\boldsymbol{\theta_2}'}$ taking all the values of $\mathbb{F}_{p^{k'}}^n$. We will use this idea throughout to transform summations when we change summation variables. Since $\mathrm{FFFT}(\gamma\mathbf{e}_i^{\mathrm{T}}) = [\gamma\beta^{i-1}~\gamma(\beta\xi)^{(i-1)}~\dots~\gamma(\beta\xi^{(n-1)})^{(i-1)}]$. From equation \eqref{eqn:FXF-1_mid1}, we get
\begin{align}
&\mathrm{Q}(\mathrm{FFFT})\mathrm{X}_i^{(p^{k'})}(\beta)(\mathrm{Q}(\mathrm{FFFT}))^{-1}\nonumber\\
&=\underset{{\boldsymbol{\theta_2}'}\in\mathbb{F}_{p^{k'}}^{n}}{\sum}\!\!\!\!\ket{{\boldsymbol{\theta_2}'}+[\gamma\beta^{i-1}~\gamma(\beta\xi)^{(i-1)}~\dots~\gamma(\beta\xi^{(n-1)})^{(i-1)}]}\bra{{\boldsymbol{\theta_2}'}},\nonumber\\
&=\underset{{\boldsymbol{\theta_2}'}\in\mathbb{F}_{p^{k'}}^{n}}{\sum}\!\!\!\!\left(\underset{j=0}{\overset{n-1}{\otimes}}\mathrm{X}^{(p^{k'})}(\gamma(\beta\xi^j)^{(i-1)})\right) \ket{{\boldsymbol{\theta_2}'}}\bra{{\boldsymbol{\theta_2}'}},\nonumber\\ 
&=\left(\underset{j=0}{\overset{n-1}{\otimes}}\mathrm{X}^{(p^{k'})}(\gamma(\beta\xi^j)^{(i-1)}\right)\underset{{\boldsymbol{\theta_2}'}\in\mathbb{F}_{p^{k'}}^{n}}{\sum}\!\!\!\! \ket{{\boldsymbol{\theta_2}'}}\bra{{\boldsymbol{\theta_2}'}}.\label{eqn:FXF-1_mid2} 
\end{align} 
By using the completeness relation of quantum mechanics, $\underset{{\boldsymbol{\theta_2}'}\in\mathbb{F}_{p^{k'}}^{n}}{\sum}\!\!\!\! \ket{{\boldsymbol{\theta_2}'}}\bra{{\boldsymbol{\theta_2}'}} = \mathrm{I}_{p^{k'}}^{\otimes n}$, we have
\begin{align*}
&\mathrm{Q}(\mathrm{FFFT})\mathrm{X}_i^{(p^{k'})}(\gamma)(\mathrm{Q}(\mathrm{FFFT}))^{-1}=\underset{j=0}{\overset{n-1}{\otimes}}\mathrm{X}^{(p^{k'})}(\gamma(\beta\xi^j)^{(i-1)}).
\end{align*}
\end{proof}
\item[\textbf{(N2)}] \normalsize{${\mathrm{Q}(\mathrm{FFFT})\mathrm{Z}_i^{(p^{k'})}(\gamma)(\mathrm{Q}(\mathrm{FFFT}))}^{{-1}}{=}\left(\underset{{j=0}}{\overset{{n-1}}{{\otimes}}}{\mathrm{Z}}^{{(p^{k'})}}{(\frac{1}{n\lambda}\gamma(\beta\xi^j}^{{(n-i+1)}}))\right){,}~{\forall}~{i\in}\{{1,}\dots,{n}\}$}.
\begin{proof}
For $i\in\{1,\dots,n\}$, $\gamma \in \mathbb{F}_{p^{k'}}$ ,and ${\boldsymbol{\theta_2}}\in\mathbb{F}_{p^{k'}}^n$, we note that
\begin{align}
\mathrm{Z}_i^{(p^{k'})}\!(\gamma)\ket{{\boldsymbol{\theta_2}}} = \omega^{\mathrm{Tr}_{p^{k'}/p}({\boldsymbol{\theta_2}}\gamma\mathbf{e}_i^{\mathrm{T}})}\ket{{\boldsymbol{\theta_2}}}, \label{eqn:R2_mid1_EARS}
\end{align}
where $\omega^{\mathrm{Tr}_{p^{k'}/p}({\boldsymbol{\theta_2}}\gamma\mathbf{e}_i^{\mathrm{T}})}$ is obtained as the operator $\mathrm{Z}_i^{(p^{k'})}\!(\gamma)$ performs $\mathrm{Z}^{(p^{k'})}\!(\gamma)$ on the $i^{\mathrm{th}}$ qudit.

For $i\in\{1,\dots,n\}$, we consider
\begin{align}
&\mathrm{Q}(\mathrm{FFFT})\mathrm{Z}_i^{(p^{k'})}(\gamma)(\mathrm{Q}(\mathrm{FFFT}))^{-1}\nonumber\\
&=\!\left(\!\underset{{\boldsymbol{\theta_1}}\in\mathbb{F}_{p^{k'}}^{n}}{\sum}\!\!\!\!\ket{\mathrm{FFFT}({\boldsymbol{\theta_1}})}\!\!\bra{{\boldsymbol{\theta_1}}}\!\!\right)\!\!\mathrm{Z}_i^{(p^{k'})}\!(\gamma)\!\!\left(\!\underset{{\boldsymbol{\theta_2}}\in\mathbb{F}_{p^{k'}}^{n}}{\sum}\!\!\!\!\ket{{\boldsymbol{\theta_2}}}\!\!\bra{\mathrm{FFFT}({\boldsymbol{\theta_2}})}\!\!\right)\!\!, \nonumber\\
&=\!\underset{{\boldsymbol{\theta_1}},{\boldsymbol{\theta_2}}\in\mathbb{F}_{p^{k'}}^{n}}{\sum}\!\!\!\!\!\!\!\ket{\mathrm{FFFT}({\boldsymbol{\theta_1}})}\!\bra{\boldsymbol{\theta_1}}\mathrm{Z}_i^{(p^{k'})}\!(\gamma)\ket{\boldsymbol{\theta_2}}\!\bra{\mathrm{FFFT}({\boldsymbol{\theta_2}})}\!.\label{eqn:R2_mid2_EARS}
\end{align}

Substituting equation \eqref{eqn:R2_mid1_EARS} in equation \eqref{eqn:R2_mid2_EARS}, we obtain
\begin{align}
&\mathrm{Q}(\mathrm{FFFT})\mathrm{Z}_i^{(p^{k'})}(\gamma)(\mathrm{Q}(\mathrm{FFFT}))^{-1}\nonumber\\
&=\underset{{\boldsymbol{\theta_1}},{\boldsymbol{\theta_2}}\in\mathbb{F}_{p^{k'}}^{n}}{\sum}\!\!\!\!\!\!\omega^{\mathrm{Tr}_{p^{k'}/p}({\boldsymbol{\theta_2}}\gamma\mathbf{e}_i^{\mathrm{T}})}\ket{\mathrm{FFFT}({\boldsymbol{\theta_1}})}\!\bra{\boldsymbol{\theta_1}}\ket{\boldsymbol{\theta_2}}\!\bra{\mathrm{FFFT}({\boldsymbol{\theta_2}})}, \nonumber\\
&=\underset{{\boldsymbol{\theta_1}},{\boldsymbol{\theta_2}}\in\mathbb{F}_{p^{k'}}^{n}}{\sum}\!\!\!\!\!\!\omega^{\mathrm{Tr}_{p^{k'}/p}({\boldsymbol{\theta_2}}\gamma\mathbf{e}_i^{\mathrm{T}})}\ket{\mathrm{FFFT}({\boldsymbol{\theta_1}})}\!\delta_{{\boldsymbol{\theta_1}},{\boldsymbol{\theta_2}}}\!\bra{\mathrm{FFFT}({\boldsymbol{\theta_2}})}, \nonumber\\
&=\underset{{\boldsymbol{\theta_2}}\in\mathbb{F}_{p^{k'}}^{n}}{\sum}\!\!\omega^{\mathrm{Tr}_{p^{k'}/p}({\boldsymbol{\theta_2}}\gamma\mathbf{e}_i^{\mathrm{T}})}\ket{\mathrm{FFFT}{\boldsymbol{\theta_2}}}\bra{\mathrm{FFFT}({\boldsymbol{\theta_2}})}, \nonumber\\
&=\underset{{\boldsymbol{\theta_2}}\in\mathbb{F}_{p^{k'}}^{n}}{\sum}\!\!\omega^{\mathrm{Tr}_{p^{k'}/p}({\boldsymbol{\theta_2}}\gamma\mathbf{e}_i^{\mathrm{T}})}\ket{\mathrm{FFFT}({\boldsymbol{\theta_2}})}\bra{\mathrm{FFFT}({\boldsymbol{\theta_2}})}. \label{eqn:FZF-1_mid1}
\end{align}
Let ${\boldsymbol{\theta_2}'} = \mathrm{FFFT}({\boldsymbol{\theta_2}})$. Let ${\boldsymbol{\theta_2}'}=[T_0~T_1~\dots~T_{n-1}]$. From equation \eqref{eqn:FFFT_inv}, we have
\begin{align}
&{\boldsymbol{\theta_2}} \!=\! \mathrm{FFFT}^{-1}({\boldsymbol{\theta_2}'})\!=
\!\!=\! \left[\frac{1}{n\beta^n}\underset{j=0}{\overset{n-1}{\sum}}T_j(\beta\xi^j)^{(n-m)}\right]_{m=0}^{n-1}.\label{eqn:R2_mid3_EARS}
\end{align}
Let $l = (m+1)$, then $l$ ranges from $1$ to $n$ as $m$ ranges from $0$ to $n-1$. Substituting $l=(m+1)$ in equation \eqref{eqn:R2_mid3_EARS},  we get
\begin{align}
&{\boldsymbol{\theta_2}} = \mathrm{FFFT}^{-1}({\boldsymbol{\theta_2}'})= \left[\frac{1}{n\lambda}\underset{j=0}{\overset{n-1}{\sum}}T_j(\beta\xi^j)^{(n-l+1)}\right]_{l=1}^{n}, ~\text{where} \beta^n=\lambda \nonumber\\
\Rightarrow &\mathrm{Tr}_{p^{k'}/p}({\boldsymbol{\theta_2}}\gamma\mathbf{e}_i^{\mathrm{T}}) = \mathrm{Tr}_{p^{k'}/p}(\mathrm{FFFT}^{-1}({\boldsymbol{\theta_2}'})\gamma\mathbf{e}_i^{\mathrm{T}}),\nonumber\\
& =\mathrm{Tr}_{p^{k'}/p}\left(\left[\frac{1}{n\lambda }\underset{j=0}{\overset{n-1}{\sum}}T_j(\beta\xi^j)^{(n-l+1)}\right]_{l=1}^{n}\!\!\!\gamma\mathbf{e}_i^{\mathrm{T}}\right),\nonumber\\
& =\mathrm{Tr}_{p^{k'}/p}\left(\frac{1}{n\lambda }\underset{j=0}{\overset{n-1}{\sum}}\gamma T_j(\beta\xi^j)^{(n-i+1)}\right).\label{eqn:FZF-1_mid2}
\end{align}
 Substituting ${\boldsymbol{\theta_2}'} = \mathrm{FFFT}({\boldsymbol{\theta_2}})$ and equation \eqref{eqn:FZF-1_mid2} in equation \eqref{eqn:FZF-1_mid1}, we obtain
\begin{align}
&\mathrm{Q}(\mathrm{FFFT})\mathrm{Z}_i^{(p^{k'})}(\gamma)(\mathrm{Q}(\mathrm{FFFT}))^{-1}\nonumber\\
&=\underset{{\boldsymbol{\theta_2}'}\in\mathbb{F}_{p^{k'}}^{n}}{\sum}\!\!\!\!\omega^{\mathrm{Tr}_{p^{k'}/p}\left(\frac{1}{n\lambda }\underset{j=0}{\overset{n-1}{\sum}}\gamma T_j(\beta\xi^j)^{(n-i+1)}\right)}\ket{{\boldsymbol{\theta_2}'}}\bra{{\boldsymbol{\theta_2}'}},\nonumber\\
&=\!\!\!\!\!\!\underset{T_0,\dots,T_{n-1}\in\mathbb{F}_{p^{k'}}}{\sum}\!\!\!\!\!\!\!\!\!\!\omega^{\!\mathrm{Tr}_{\!p^{k'}\!/\!p}\!\left(\frac{1}{n\lambda }\underset{j=0}{\overset{n-1}{\sum}}\gamma T_j(\beta\xi^j)^{(n-i+1)}\!\!\right)}\!\!\ket{T_0\! \dots\! T_{n-1}}\!\bra{T_0\! \dots\! T_{n-1}},\nonumber\\
&=\!\!\!\!\!\!\underset{T_0,\dots,T_{n-\!1}\in\mathbb{F}_{p^{k'}}}{\sum}\!\!\!\!\left(\underset{j=0}{\overset{n-1}{\otimes}}\mathrm{Z}^{(p^{k'})}\!(\frac{1}{n\lambda }\gamma (\beta\xi^j)^{(n-i+1)}\!)\!\!\right)\!\!\ket{T_0\! \dots\! T_{n-1}}\!\bra{T_0\! \dots\! T_{n-1}},\nonumber\\
&=\!\!\left(\underset{j=0}{\overset{n-1}{\otimes}}\mathrm{Z}^{(p^{k'})}\!(\frac{1}{n\lambda }\gamma (\beta\xi^j)^{(n-i+1)}\!)\!\!\right)\!\!\!\!\underset{T_0,\dots,T_{n-\!1}\in\mathbb{F}_{p^{k'}}}{\sum}\!\!\!\!\!\!\!\!\ket{T_0\! \dots\! T_{n-1}}\!\bra{T_0\! \dots\! T_{n-1}},\nonumber\\
&=\left(\underset{j=0}{\overset{n-1}{\otimes}}\mathrm{Z}^{(p^{k'})}\!(\frac{1}{n\lambda }\gamma (\beta\xi^j)^{(n-i+1)}\!)\right)\underset{{\boldsymbol{\theta_2}'}\in\mathbb{F}_{p^{k'}}^{n}}{\sum}\!\!\!\! \ket{{\boldsymbol{\theta_2}'}}\bra{{\boldsymbol{\theta_2}'}}.\label{eqn:FZF-1_mid3} 
\end{align} 
By using the completeness relation of quantum mechanics, $\underset{{\boldsymbol{\theta_2}'}\in\mathbb{F}_{p^{k'}}^{n}}{\sum}\!\!\!\! \ket{{\boldsymbol{\theta_2}'}}\bra{{\boldsymbol{\theta_2}'}} = \mathrm{I}_{p^{k'}}^{\otimes n}$,. Thus, from equation \eqref{eqn:FZF-1_mid3}, we get
\begin{align*}
&\mathrm{Q}(\mathrm{FFFT})\mathrm{Z}_i^{(p^{k'})}\!(\gamma)(\mathrm{Q}(\mathrm{FFFT}))^{\!-1}\!\!=\!\!\left(\underset{j=0}{\overset{n-1}{\otimes}}\mathrm{Z}^{(p^{k'})}\!(\frac{1}{n\lambda}\gamma (\beta\xi^j)^{(n-i+1)})\!\!\right)\!.
\end{align*}
\end{proof}
\item[\textbf{(N3)}] $\mathrm{Q(FFFT)}\left(\underset{{i=0}}{\overset{{n-1}}{{\otimes}}}{\mathrm{X}}^{{(p^{k'})}}{(\gamma(\beta\xi^j)^i})\right){(\mathrm{Q(FFFT))}}^{{-1}}={\mathrm{X}}_{{(n-j+1)}}^{{(p^{k'})}}{(n\gamma),}~{\forall}~{j\in}\{{1,}\dots,{n}\}$.
\begin{proof}

For $j\in\{1,\dots, n\}$ and $\gamma \in \mathbb{F}_{p^{k'}}$, we have
\begin{align}
\underset{i=0}{\overset{n-1}{\otimes}}\mathrm{X}^{(p^{k'})}(\gamma(\beta\xi^j)^i) &= \mathrm{X}^{(p^{k'})}(\gamma) \otimes \mathrm{X}^{(p^{k'})}(\gamma(\beta\xi^j)) \otimes \cdots \otimes \mathrm{X}^{(p^{k'})}(\gamma(\beta\xi^{j})^{n-1}).\label{eqn:R3_mid2_EARS}
\end{align}

From equation \eqref{eqn:R3_mid2_EARS}, the operator $\underset{i=0}{\overset{n-1}{\otimes}}\mathrm{X}^{(p^{k'})}(\gamma(\beta\xi^j)^i)$ performs the operation $\mathrm{X}^{(p^{k'})}(\gamma(\beta\xi^j)^i)$ on the $(i+1)^{\mathrm{th}}$ qudit, for all $i\in\{0,\dots,n-1\}$. For $s\in\{1,\dots, n\}$ and $\gamma\in\mathbb{F}_{p^{k'}}$, the operator $\mathrm{X}^{(p^{k'})}_s(\gamma)$ performs $\mathrm{X}^{(p^{k'})}(\gamma)$ on the $s^{\mathrm{th}}$ qudit. Thus, from equation \eqref{eqn:R3_mid2_EARS}, we have
\begin{align}
\underset{i=0}{\overset{n-1}{\otimes}}\mathrm{X}^{(p^{k'})}(\gamma(\beta\xi^j)^i) = \overset{n-1}{\underset{i=0}{\prod}} \mathrm{X}^{(p^{k'})}_{(i+1)}(\gamma(\beta\xi^j)^i). \label{eqn:R3_mid3_EARS}
\end{align}

Using equation \eqref{eqn:R3_mid3_EARS}, we have
\allowdisplaybreaks
\begin{align}
&\mathrm{Q(FFFT)}\left(\underset{i=0}{\overset{n-1}{\otimes}}\mathrm{X}^{(p^{k'})}(\gamma(\beta\xi^j)^i)\right)(\mathrm{Q(FFFT))}^{-1}\nonumber\\
&=\mathrm{Q(FFFT)}\left(\overset{n-1}{\underset{i=0}{\prod}} \mathrm{X}^{(p^{k'})}_{(i+1)}(\gamma(\beta\xi^j)^i)\right)(\mathrm{Q(FFFT))}^{-1},\nonumber\\
&=\mathrm{Q}(\mathrm{FFFT})\mathrm{X}_1^{(p^{k'})}(\gamma)\cdots\mathrm{X}_n^{(p^{k'})}\!(\gamma(\beta\xi^{j})^{(n-1)})(\mathrm{Q}(\mathrm{FFFT}))^{\!-1}\!\!,\nonumber\\
&=\mathrm{Q}(\mathrm{FFFT})\mathrm{X}_1^{(p^{k'})}(\gamma)(\mathrm{Q}(\mathrm{FFFT}))^{\!-1}\mathrm{Q}(\mathrm{FFFT})\mathrm{X}_2^{(p^{k'})}(\gamma(\beta\xi^j))\nonumber\\
&~~~(\mathrm{Q}(\mathrm{FFFT}))^{\!-1}\!\!\cdots\mathrm{Q}(\mathrm{FFFT})\mathrm{X}_n^{(p^{k'})}\!(\gamma(\beta\xi^{j})^{(n-1)})(\mathrm{Q}(\mathrm{FFFT}))^{\!-1}\!\!,\nonumber\\
&= \underset{i=0}{\overset{n-1}{\prod}}\mathrm{Q(FFFT)}\mathrm{X}_{(i+1)}^{(p^{k'})}(\gamma(\beta\xi^j)^i)(\mathrm{Q(FFFT)})^{-1},\nonumber\\
&= \underset{i=0}{\overset{n-1}{\prod}}\underset{m=0}{\overset{n-1}{\otimes}}\mathrm{X}^{(p^{k'})}(\gamma(\beta\xi^j)^i(\beta\xi^m)^{(i+1-1)}),~~~(\text{From}~(N1))\nonumber\\
&= \underset{i=0}{\overset{n-1}{\prod}}\underset{m=0}{\overset{n-1}{\otimes}}\mathrm{X}^{(p^{k'})}(\gamma\beta^{i}(\beta\xi^{j+m})^i),\nonumber\\
&= \underset{i=0}{\overset{n-1}{\prod}}\underset{m=0}{\overset{n-1}{\otimes}}\mathrm{X}^{(p^{k'})}(\gamma\beta^{n-i-1}\beta^{-n+1}(\beta\xi^{j+m})^i),~~~(\text{since}~ \beta=\beta^{-1})\nonumber\\
&= \underset{m=0}{\overset{n-1}{\otimes}}\underset{i=0}{\overset{n-1}{\prod}}\mathrm{X}^{(p^{k'})}(\gamma\beta^{-n+1}\beta^{n-i-1}(\beta\xi^{j+m})^i),\nonumber\\
&= \underset{m=0}{\overset{n-1}{\otimes}}\mathrm{X}^{(p^{k'})}\left(\underset{i=0}{\overset{n-1}{\sum}}\gamma\beta^{-n+1}\beta^{n-i-1}(\beta\xi^{j+m})^i\right),\nonumber\\
&~~~~(\because \mathrm{X}^{(p^{k'})}(\alpha)\mathrm{X}^{(p^{k'})}(\gamma) = \mathrm{X}^{(p^{k'})}(\alpha+\gamma))\nonumber\\
&= \underset{g=1}{\overset{n}{\otimes}}\mathrm{X}^{(p^{k'})}\left(\underset{i=0}{\overset{n-1}{\sum}}\gamma\beta^{-n+1}\beta^{n-i-1}(\beta\xi^{j+g-1})^i\right),\text{ where }g = (m+1).\label{eqn:R3_mid4_EARS}
\end{align}

As $\xi^n = 1$, we note that
\begin{align}
\underset{i=0}{\overset{n-1}{\sum}}\beta^{n-i-1}(\beta\xi^{j+g-1})^i&= \begin{cases}
\underset{i=0}{\overset{n-1}{\sum}}\beta^{n-1}\xi^{0} & (g+j-1)~\mathrm{mod}~n=0\\
\underset{i=0}{\overset{n-1}{\sum}}\beta^{n-i-1}(\beta\xi^{j+g-1})^i & (g+j-1)~\mathrm{mod}~n\neq 0
\end{cases},\nonumber\\
&= \begin{cases}
n\beta^{n-1} & (g+j-1)~\mathrm{mod}~n=0\\
0 & (g+j-1)~\mathrm{mod}~n\neq 0\end{cases},\label{eqn:R3_mid5_EARS}
\end{align}
as $\xi^{n}=1$ and the roots of unity sum to $0$ (from Lemma \ref{lem1}). From equation \eqref{eqn:R3_mid5_EARS}, we have
\begin{align}
&\underset{j=0}{\overset{n-1}{\sum}}\beta^{n-i-1}(\beta\xi^{j+g-1})^i)= n\beta^{n-1}\delta_{(g+j-1),n} = n\beta^{n-1}\delta_{g,(n-j+1)}.\label{eqn:R3_mid6_EARS}
\end{align}

Substituting equation \eqref{eqn:R3_mid6_EARS} in equation \eqref{eqn:R3_mid4_EARS}, we obtain

\begin{align}
&\mathrm{Q(FFFT)}\left(\underset{i=0}{\overset{n-1}{\otimes}}\mathrm{X}^{(p^{k'})}(\gamma(\beta\xi^j)^{i})\right)(\mathrm{Q(FFFT))}^{-1}\nonumber\\
&= \underset{g=1}{\overset{n}{\otimes}}\mathrm{X}^{(p^{k'})}\left(\gamma\beta^{-n+1} n\beta^{n-1}\delta_{g,n-j+1}\right)= \mathrm{X}_{(n-j+1)}^{(p^{k'})}\left(n\gamma\right),\nonumber
\end{align}
as $j \in \{1,\dots,n\}$, this gives $(n-j+1)\in\{1,\dots,n\}$, and in the tensor product $\underset{g=1}{\overset{n}{\otimes}}\mathrm{X}^{(p^{k'})}\left(\gamma n\delta_{g,n-j+1}\right)$, only the $(n-j+1)^{\mathrm{st}}$ element is non-identity ($\because g= (n-j+1)$), while the rest become identity ($\because g\neq  (n-j+1) \Rightarrow \mathrm{X}^{(p^{k'})}(0)=\mathrm{I}_{p^{k'}}$). 
\end{proof}

\item[\textbf{(N4)}] $\mathrm{Q(FFFT)}\left(\underset{{i=0}}{\overset{{n-1}}{{\otimes}}}{\mathrm{Z}}^{{(p^{k'})}}{(\gamma(\beta\xi^j)^{i})}\right){(\mathrm{Q(FFFT))}}^{{-1}} ={\mathrm{Z}}_{{(j+1)}}^{{(p^{k'})}}{(\gamma),}~{\forall}~{j\in}\{{0,}\dots,{n-1}\}$.
\begin{proof}
For $j\in\{0,\dots, n-1\}$ and $\gamma \in \mathbb{F}_{p^{k'}}$, we have
\begin{align}
&\underset{i=0}{\overset{n-1}{\otimes}}\mathrm{Z}^{(p^{k'})}(\gamma(\beta\xi^j)^{i})= \mathrm{Z}^{(p^{k'})}(\gamma )\otimes \mathrm{Z}^{(p^{k'})}(\gamma(\beta\xi^j)) \otimes \cdots \otimes \mathrm{Z}^{(p^{k'})}(\gamma(\beta\xi^{j})^{(n-1)}).\label{eqn:R4_mid2_EARS}
\end{align}

From equation \eqref{eqn:R4_mid2_EARS}, the operator $\underset{j=0}{\overset{n-1}{\otimes}}\mathrm{Z}^{(p^{k'})}(\gamma(\beta\xi^j)^{i})$ performs the operation $\mathrm{Z}^{(p^{k'})}(\gamma(\beta\xi^j)^{i})$ on the $(i+1)^{\mathrm{st}}$ qudit, for all $i\in\{0,\dots,n-1\}$. For $s\in\{1,\dots, n\}$ and $\gamma\in\mathbb{F}_{p^{k'}}$, the operator $\mathrm{Z}^{(p^{k'})}_s(\gamma)$ performs $\mathrm{Z}^{(p^{k'})}(\gamma)$ on the $s^{\mathrm{th}}$ qudit. Thus, from equation \eqref{eqn:R3_mid2_EARS}, we have
\begin{align}
\underset{i=0}{\overset{n-1}{\otimes}}\mathrm{Z}^{(p^{k'})}(\gamma(\beta\xi^j)^{i}) = \overset{n-1}{\underset{i=0}{\prod}} \mathrm{Z}^{(p^{k'})}_{(i+1)}(\gamma(\beta\xi^j)^{i}). \label{eqn:R4_mid3_EARS}
\end{align}

Using equation \eqref{eqn:R4_mid3_EARS}, we have
\allowdisplaybreaks
\begin{align}
&\mathrm{Q(FFFT)}\left(\underset{i=0}{\overset{n-1}{\otimes}}\mathrm{Z}^{(p^{k'})}(\gamma(\beta\xi^j)^{i})\right)(\mathrm{Q(FFFT))}^{-1}\nonumber\\
&=\mathrm{Q(FFFT)}\left(\overset{n-1}{\underset{i=0}{\prod}} \mathrm{Z}^{(p^{k'})}_{(i+1)}(\gamma(\beta\xi^j)^{i})\right)(\mathrm{Q(FFFT))}^{-1},\nonumber\\
&=\mathrm{Q}(\mathrm{FFFT})\mathrm{Z}_1^{(p^{k'})}(\gamma)\cdots\mathrm{Z}_n^{(p^{k'})}\!(\gamma(\beta\xi^{j})^{n-1}\!)(\mathrm{Q}(\mathrm{FFFT}))^{\!-1}\!\!,\nonumber\\
&=\mathrm{Q}(\mathrm{FFFT})\mathrm{Z}_1^{(p^{k'})}(\gamma)(\mathrm{Q}(\mathrm{FFFT}))^{\!-1}\mathrm{Q}(\mathrm{FFFT})\mathrm{Z}_2^{(p^{k'})}(\gamma(\beta\xi^{j}))\nonumber\\
&~~~(\mathrm{Q}(\mathrm{FFFT}))^{\!-1}\!\!\cdots\mathrm{Q}(\mathrm{FFFT})\mathrm{Z}_n^{(p^{k'})}\!(\gamma(\beta\xi^{j})^{n-1}\!)(\mathrm{Q}(\mathrm{FFFT}))^{\!-1}\!\!,\nonumber\\
&= \underset{i=0}{\overset{n-1}{\prod}}\mathrm{Q(FFFT)}\mathrm{Z}_{(i+1)}^{(p^{k'})}(\gamma(\beta\xi^{j})^{i})(\mathrm{Q(FFFT)})^{-1},\nonumber\\
&= \underset{i=0}{\overset{n-1}{\prod}}\underset{m=0}{\overset{n-1}{\otimes}}\!\!\mathrm{Z}^{(p^{k'})}(\frac{1}{n\lambda}\gamma(\beta\xi^{j})^{i}(\beta\xi^{m})^{(n-l+1)}),(\text{From}~(N2)~\text{and}~i+1=l)\nonumber\\
&= \underset{i=0}{\overset{n-1}{\prod}}\underset{m=0}{\overset{n-1}{\otimes}}\mathrm{Z}^{(p^{k'})}(\frac{1}{n\lambda}\gamma\beta^{(n-i)}(\beta\xi^{(j-m)})^i),~(\because \xi^n=1, ~\beta=\beta^{-1})\nonumber\\
&= \underset{m=0}{\overset{n-1}{\otimes}}\underset{i=0}{\overset{n-1}{\prod}}\mathrm{Z}^{(p^{k'})}(\frac{1}{n\lambda}\beta\gamma\beta^{(n-i-1)}(\beta\xi^{(j-m)})^i),\nonumber\\
&= \underset{m=0}{\overset{n-1}{\otimes}}\mathrm{Z}^{(p^{k'})}\left(\underset{i=0}{\overset{n-1}{\sum}}\frac{1}{n\lambda}\beta\gamma\beta^{(n-i-1)}(\beta\xi^{(j-m)})^i\right),\nonumber\\
&~~~~(\because \mathrm{Z}^{(p^{k'})}(\alpha)\mathrm{Z}^{(p^{k'})}(\gamma) = \mathrm{Z}^{(p^{k'})}(\alpha+\gamma))\nonumber\\
&= \underset{g=1}{\overset{n}{\otimes}}\mathrm{Z}^{(p^{k'})}\!\!\left(\underset{j=0}{\overset{n-1}{\sum}}\frac{1}{n\lambda}\beta\gamma\beta^{(n-i-1)}(\beta\xi^{(j-(g-1))})^i\!\!\right)\!\!,\text{where }g = (m\!+\!1),\nonumber\\
&= \underset{g=1}{\overset{n}{\otimes}}\mathrm{Z}^{(p^{k'})}\left(\frac{1}{n\lambda}\beta\gamma\underset{j=0}{\overset{n-1}{\sum}}\beta^{(n-i-1)}(\beta\xi^{(j-(g-1))})^i\right).\label{eqn:R4_mid4_EARS}
\end{align}

As $\xi^n = 1$, we note that
\begin{align}
\underset{j=0}{\overset{n-1}{\sum}}\beta^{(n-i-1)}(\beta\xi^{(j-(g-1))})^i
 &= \begin{cases}
\underset{j=0}{\overset{n-1}{\sum}}\beta^{n-1}\xi^{0} & (j-g+1)~\mathrm{mod}~n=0\\
\underset{j=0}{\overset{n-1}{\sum}}\beta^{(n-i-1)}(\beta\xi^{(j-(g-1))})^i &  (j-g+1)~\mathrm{mod}~n\neq 0
\end{cases},\nonumber\\
&= \begin{cases}
n \beta^{n-1}& (j-g+1)~\mathrm{mod}~n=0\\
0 & (j-g+1)~\mathrm{mod}~n\neq 0\end{cases},\label{eqn:R4_mid5_EARS}
\end{align}
as $\xi^{n}=1$ and the roots of unity sum to $0$ (From Lemma \ref{lem1}). From equation \eqref{eqn:R4_mid5_EARS}, we have
\begin{align}
&\underset{j=0}{\overset{n-1}{\sum}}\beta^{(n-i-1)}(\beta\xi^{(j-(g-1))})^i= n\beta^{n-1}\delta_{(j-g+1),n} = n\delta_{g,(n+j+1)} = n\delta_{g,(j+1)},\label{eqn:R4_mid6_EARS}
\end{align}
as $g\in \{1,\dots,n\}$ and $(n+j+1) \equiv (j+1)$ in the range $1$ to $n$ ($\because (j+1) \in \{1,\dots ,n\}$).

Substituting equation \eqref{eqn:R4_mid6_EARS} in equation \eqref{eqn:R4_mid4_EARS}, we obtain
\begin{align}
&\mathrm{Q(FFFT)}\left(\underset{i=0}{\overset{n-1}{\otimes}}\mathrm{Z}^{(p^{k'})}(\gamma(\beta\alpha^{j})^{i})\right)(\mathrm{Q(FFFT))}^{-1}\nonumber\\
&= \underset{g=1}{\overset{n}{\otimes}}\mathrm{Z}^{(p^{k'})}\left(\frac{1}{n\lambda}\beta\gamma n\beta^{n-1}\delta_{g,j+1}\right)= \mathrm{Z}_{(j+1)}^{(p^{k'})}\left(\gamma\right),\nonumber
\end{align}
as $\beta^n=\lambda$ and in the tensor product $\underset{g=1}{\overset{n}{\otimes}}\mathrm{Z}^{(p^{k'})}\left(\frac{1}{n}\beta\gamma n\beta^{n-1}\delta_{g,j+1}\right)$, only the $(j+1)^{\mathrm{st}}$ element is non-identity ($\because g= (j+1)$), while the rest become identity ($\because g\neq  (j+1) \Rightarrow \mathrm{Z}^{(p^{k'})}(0)=\mathrm{I}_{p^{k'}}$). 
\end{proof}
\end{itemize}
\begin{remark}
    The following identities (N5) and (N6) are used to obtain the initial state $\ket{\psi_0}$ of the codeword qudits from the obtained initial stabilizers $\mathrm{X}_s^{(p^{k'})}(n\xi^l)$ and $\mathrm{Z}_t^{(p^{k'})}(\xi^m)$.
\end{remark}
\begin{itemize}
\item[\textbf{(N5)}]${\mathrm{X}}^{{(p^{k'})}}{(n\xi^l)}{\ket{\epsilon} = \ket{\epsilon},}\!\text{ where }{\ket{\epsilon}=\mathrm{DFT}}^{{-1}}{\ket{0}}$
\begin{proof}
From equation \eqref{eqn:ket_epsilon},
\begin{align}
\ket{\epsilon}&=\mathrm{DFT}^{-1}\ket{0}
= \frac{1}{\sqrt{p^{k'}}}\underset{\theta_1\in\mathbb{F}_{p^{k'}}}{\sum}\ket{\theta_1}.\label{eqn:DFT-1ket0}
\end{align}
For $\xi \in \mathbb{F}_{p^{k'}}$, we analyze the action of $\mathrm{X}^{(p^{k'})}(n\xi^l)$ on $\ket{\epsilon}$ from equation \eqref{eqn:DFT-1ket0} as follows:
\begin{align}
\mathrm{X}^{(p^{k'})}(n\xi^l )\ket{\epsilon}=& \mathrm{X}^{(p^{k'})}(n\xi^l )\frac{1}{\sqrt{p^{k'}}}\underset{\theta_1\in\mathbb{F}_{p^{k'}}}{\sum}\ket{\theta_1}= \frac{1}{\sqrt{p^{k'}}}\underset{\theta_1\in\mathbb{F}_{p^{k'}}}{\sum}\mathrm{X}^{(p^{k'})}(n\xi^l)\ket{\theta_1}\nonumber\\=& \frac{1}{\sqrt{p^{k'}}}\underset{\theta_1\in\mathbb{F}_{p^{k'}}}{\sum}\ket{\theta_1+n\xi^l }. \label{eqn:X_epsilon_mid1}
\end{align}
Let $\theta_1'=\theta_1+n\xi^l $. Performing the change of variable from $\theta_1$ to $\theta_1'$ in equation \eqref{eqn:X_epsilon_mid1}, the summation over all $\theta_1$ in $\mathbb{F}_{p^{k'}}$ changes to the summation over all $\theta_1'$ in $\mathbb{F}_{p^{k'}}$ by the closure property of $\mathbb{F}_{p^{k'}}$. Thus, performing change of variable in equation \eqref{eqn:X_epsilon_mid1}, we obtain
\begin{align}
\mathrm{X}^{(p^{k'})}(n\xi^l )\ket{\epsilon}&= \frac{1}{\sqrt{p^{k'}}}\underset{\theta_1'\in\mathbb{F}_{p^{k'}}}{\sum}\ket{\theta_1'}= \ket{\epsilon}. ~~~(\text{From equation }\eqref{eqn:DFT-1ket0})
\end{align}
\end{proof}
\item[\textbf{(N6)}]${\mathrm{Z}}^{{(p^{k'})}}{(\xi^m)}{\ket{0} = \ket{0}.}$
\begin{proof}
For $\xi  \in \mathbb{F}_{p^{k'}}$, we have
\begin{align}
\mathrm{Z}^{(p^{k'})}(\xi^m )\ket{0}&= \omega^{\mathrm{Tr}_{p^{k'}/p}(\xi^m.0)}\ket{0} = \ket{0}.
\end{align}
\end{proof}
\end{itemize}
\begin{remark}
    The following identities (N7) and (N8) are used for computing syndrome $\ket{s_{\mathrm{Z}}}$.
\end{remark}
\begin{itemize}
\item[\textbf{(N7)}] ${\mathrm{DFT}_{{p^{k'}}}~\mathrm{X}}^{{(p^{k'})}}{(\gamma )}{\mathrm{DFT}}_{{p^{k'}}}^{{-1}} {=\mathrm{Z}}^{{(p^{k'})}}{(\gamma )}$
\begin{proof}
For $\gamma  \in \mathbb{F}_{p^{k'}}$, from equation \eqref{eqn:DFT_EARS}, we have
\allowdisplaybreaks
\begin{align*}
\mathrm{DFT}_{p^{k'}}~\mathrm{X}^{(p^{k'})}(\gamma )\mathrm{DFT}_{p^{k'}}^{-1}
&= \!\left(\frac{1}{\sqrt{p^{k'}}}\underset{\theta_1,\theta_2\in \mathbb{F}_{p^{k'}}}{\sum}\!\!\!\!\!\omega^{\mathrm{Tr}_{p^{k'}/p}(\theta_1\theta_2)}\ket{\theta_1}\bra{\theta_2}\right)\mathrm{X}^{(p^{k'})}(\gamma )\nonumber\\
&~~~~\left(\frac{1}{\sqrt{p^{k'}}}\underset{\theta_3,\theta_4\in \mathbb{F}_{p^{k'}}}{\sum}\omega^{\mathrm{Tr}_{p^{k'}/p}(-\theta_3\theta_4)}\ket{\theta_3}\bra{\theta_4}\right),\nonumber\\
&= \!\frac{1}{p^{k'}}\underset{\theta_1,\theta_2,\theta_3,\theta_4\in \mathbb{F}_{p^{k'}}}{\sum}\!\!\!\!\!\!\!\!\!\!\!\omega^{\!\mathrm{Tr}_{p^{k'}\!/\!p}(\theta_1\theta_2-\theta_3\theta_4)\!\!}\ket{\theta_1}\bra{\theta_2}\ket{\mathrm{X}^{(p^{k'}}(\gamma )}\ket{\theta_3}\bra{\theta_4},\nonumber\\
&= \!\frac{1}{p^{k'}}\underset{\theta_1,\theta_2,\theta_3,\theta_4\in \mathbb{F}_{p^{k'}}}{\sum}\!\!\!\!\!\!\!\!\!\!\!\omega^{\!\mathrm{Tr}_{p^{k'}\!/\!p}(\theta_1\theta_2-\theta_3\theta_4)\!\!}\ket{\theta_1}\!\bra{\theta_2}\ket{\theta_3+\gamma}\!\bra{\theta_4},\nonumber\\
&= \!\frac{1}{p^{k'}}\underset{\theta_1,\theta_2,\theta_3,\theta_4\in \mathbb{F}_{p^{k'}}}{\sum}\!\!\!\!\!\!\!\!\!\!\!\omega^{\!\mathrm{Tr}_{p^{k'}\!/\!p}(\theta_1\theta_2-\theta_3\theta_4)\!\!}\ket{\theta_1}\!\delta_{\theta_2,(\theta_3+\gamma )}\!\bra{\theta_4},\nonumber\\
&= \!\frac{1}{p^{k'}}\underset{\theta_1,\theta_3,\theta_4\in \mathbb{F}_{p^{k'}}}{\sum}\!\!\!\!\!\!\!\omega^{\!\mathrm{Tr}_{p^{k'}\!/\!p}(\theta_1(\theta_3+\gamma )-\theta_3\theta_4)\!\!}\ket{\theta_1}\!\bra{\theta_4},\nonumber\\
&= \!\frac{1}{p^{k'}}\underset{\theta_1,\theta_3,\theta_4\in \mathbb{F}_{p^{k'}}}{\sum}\!\!\!\!\!\!\!\omega^{\!\mathrm{Tr}_{p^{k'}\!/\!p}(\theta_1\theta_3+\theta_1\gamma -\theta_3\theta_4)\!\!}\ket{\theta_1}\!\bra{\theta_4},\nonumber\\
&= \!\frac{1}{p^{k'}}\underset{\theta_1,\theta_4\in \mathbb{F}_{p^{k'}}}{\sum}\!\!\!\!\!\omega^{\!\mathrm{Tr}_{p^{k'}\!/\!p}(\theta_1\gamma )\!\!}\ket{\theta_1}\!\bra{\theta_4}\underset{\theta_3\in \mathbb{F}_{p^{k'}}}{\sum}\omega^{\!\mathrm{Tr}_{p^{k'}\!/\!p}(\theta_3(\theta_1-\theta_4))\!\!},\nonumber\\
&= \!\!\frac{1}{p^{k'}}\!\underset{\theta_1,\theta_4\in \mathbb{F}_{p^{k'}}}{\sum}\!\!\!\!\!\omega^{\!\mathrm{Tr}_{p^{k'}\!/\!p}(\theta_1\gamma )\!\!}\ket{\theta_1}\!\bra{\theta_4}p^{k'}\delta_{\theta_1-\theta_4,0},\!(\text{From Lemma \ref{lem3}})\nonumber\\
&= \!\frac{1}{p^{k'}}\underset{\theta_1,\theta_4\in \mathbb{F}_{p^{k'}}}{\sum}\!\!\!\!\!\omega^{\!\mathrm{Tr}_{p^{k'}\!/\!p}(\theta_1\gamma )\!\!}\ket{\theta_1}\!\bra{\theta_4}p^{k'}\delta_{\theta_1,\theta_4},\nonumber\\
&= \underset{\theta_1\in \mathbb{F}_{p^{k'}}}{\sum}\!\!\!\omega^{\!\mathrm{Tr}_{p^{k'}\!/\!p}(\theta_1\gamma )\!\!}\ket{\theta_1}\!\bra{\theta_1},\nonumber\\
&= \underset{\theta_1\in \mathbb{F}_{p^{k'}}}{\sum}\!\!\!\mathrm{Z}^{(p^{k'})}(\gamma )\ket{\theta_1}\!\bra{\theta_1}
= \mathrm{Z}^{(p^{k'})}(\gamma )\underset{\theta_1\in \mathbb{F}_{p^{k'}}}{\sum}\!\!\!\ket{\theta_1}\!\bra{\theta_1}= \mathrm{Z}^{(p^{k'})}(\gamma ),
\end{align*}
because $\underset{\theta_1\in \mathbb{F}_{p^{k'}}}{\sum}\!\!\!\ket{\theta_1}\!\bra{\theta_1} = \mathrm{I}_{p^{k'}}$ by the completeness relation.
\end{proof}
\item[\textbf{(N8)}] ${\mathrm{DFT}_{p^{k'}}~\mathrm{Z}}^{{(p^{k'})}}{(\gamma)}{\mathrm{DFT}}_{p^{k'}}^{{-1}} {=\mathrm{X}}^{{(p^{k'})}}{(-\gamma)}$
\begin{proof}
For $\gamma \in \mathbb{F}_{p^{k'}}$, from equation \eqref{eqn:DFT_EARS}, we have
\allowdisplaybreaks
\begin{align*}
\mathrm{DFT}_{p^{k'}}~\mathrm{Z}{(p^{k'})}(\gamma)\mathrm{DFT}_{p^{k'}}^{-1}
&= \!\left(\frac{1}{\sqrt{p^{k'}}}\underset{\theta_1,\theta_2\in \mathbb{F}_{p^{k'}}}{\sum}\!\!\!\!\!\omega^{\mathrm{Tr}_{p^{k'}/p}(\theta_1\theta_2)}\ket{\theta_1}\bra{\theta_2}\right)\mathrm{Z}^{(p^{k'})}(\gamma)\nonumber\\
&~~~~\left(\frac{1}{\sqrt{p^{k'}}}\underset{\theta_3,\theta_4\in \mathbb{F}_{p^{k'}}}{\sum}\omega^{\mathrm{Tr}_{p^{k'}/p}(-\theta_3\theta_4)}\ket{\theta_3}\bra{\theta_4}\right),\nonumber\\
&= \!\frac{1}{p^{k'}}\underset{\theta_1,\theta_2,\theta_3,\theta_4\in \mathbb{F}_{p^{k'}}}{\sum}\!\!\!\!\!\!\!\!\!\!\!\omega^{\!\mathrm{Tr}_{p^{k'}\!/\!p}(\theta_1\theta_2-\theta_3\theta_4)\!\!}\ket{\theta_1}\!\bra{\theta_2}\ket{\mathrm{Z}^{(p^{k'})}\!(\gamma)}\ket{\theta_3}\!\bra{\theta_4},\nonumber\\
&= \!\frac{1}{p^{k'}}\underset{\theta_1,\theta_2,\theta_3,\theta_4\in \mathbb{F}_{p^{k'}}}{\sum}\!\!\!\!\!\!\!\!\!\!\!\omega^{\!\mathrm{Tr}_{p^{k'}\!/\!p}(\theta_1\theta_2-\theta_3\theta_4+\theta_3\gamma)\!\!}\ket{\theta_1}\!\bra{\theta_2}\ket{\theta_3}\!\bra{\theta_4},\nonumber\\
&= \!\frac{1}{p^{k'}}\underset{\theta_1,\theta_2,\theta_3,\theta_4\in \mathbb{F}_{p^{k'}}}{\sum}\!\!\!\!\!\!\!\!\!\!\!\omega^{\!\mathrm{Tr}_{p^{k'}\!/\!p}(\theta_1\theta_2-\theta_3\theta_4+\theta_3\gamma)\!\!}\ket{\theta_1}\!\delta_{\theta_2,\theta_3}\!\bra{\theta_4},\nonumber\\
&= \!\frac{1}{p^{k'}}\underset{\theta_1,\theta_3,\theta_4\in \mathbb{F}_{p^{k'}}}{\sum}\!\!\!\!\!\!\!\omega^{\!\mathrm{Tr}_{p^{k'}\!/\!p}(\theta_1\theta_3-\theta_3\theta_4+\theta_3\gamma)\!\!}\ket{\theta_1}\!\bra{\theta_4},\nonumber\\
&= \!\frac{1}{p^{k'}}\underset{\theta_1,\theta_4\in \mathbb{F}_{p^{k'}}}{\sum}\!\!\!\!\!\ket{\theta_1}\!\bra{\theta_4}\underset{\theta_3\in \mathbb{F}_{p^{k'}}}{\sum}\omega^{\!\mathrm{Tr}_{p^{k'}\!/\!p}(\theta_3(\theta_1-\theta_4+\gamma))\!\!},\nonumber\\
&= \!\!\frac{1}{p^{k'}}\!\underset{\theta_1,\theta_4\in \mathbb{F}_{p^{k'}}}{\sum}\!\!\!\!\!\ket{\theta_1}\!\bra{\theta_4}p^{k'}\delta_{\theta_1-\theta_4+\gamma,0},\!(\text{From Lemma \ref{lem3}})\nonumber\\
&= \!\frac{1}{p^{k'}}\underset{\theta_1,\theta_4\in \mathbb{F}_{p^{k'}}}{\sum}\!\!\!\!\!\ket{\theta_1}\!\bra{\theta_4}p^{k'}\delta_{\theta_1,(\theta_4-\gamma)},\nonumber\\
&= \underset{\theta_4\in \mathbb{F}_{p^{k'}}}{\sum}\!\ket{\theta_4-\gamma}\!\bra{\theta_4},\nonumber\\
&= \mathrm{X}^{(p^{k'})}(-\gamma )\underset{\theta_1\in \mathbb{F}_{p^{k'}}}{\sum}\!\!\!\ket{\theta_4}\!\bra{\theta_4}
= \mathrm{X}^{(p^{k'})}(-\gamma ),
\end{align*}
\end{proof}
\end{itemize}
 \end{document}